\newcommand{\D}{\mathcal{D}}
\newcommand{\B}{\mathrm{B}}
\newcommand{\F}{\mathcal{F}}
\newcommand{\setPolynomials}{\mathcal{P}}
\newcommand{\setPolynomialDenominators}{\mathcal{Q}}
\newcommand{\calO}{\mathcal{O}}
\newcommand{\tilO}{\widetilde{\mathcal{O}}}
\newcommand{\eps}{\varepsilon}
\newcommand{\defas}{\coloneqq}
\newcommand{\sadef}{\eqqcolon}
\newcommand{\PP}{\mathbb{P}}
\newcommand{\EE}{\mathbb{E}}
\newcommand{\ZZ}{\mathbb{Z}}
\newcommand{\RR}{\mathbb{R}}
\newcommand{\MDP}{\mathscr{P}}
\newcommand{\MC}{\mathscr{C}}
\newcommand{\support}{\text{supp}}
\newcommand{\payoff}{\nu}
\newcommand{\reward}{r}
\newcommand{\discfac}{{\Lambda}}
\newcommand{\distribution}{\mu}
\newcommand{\discountfactor}{\lambda}
\newcommand{\assign}{\chi}
\newcommand{\deter}{\nabla}
\newcommand{\id}{\text{Id}}
\newcommand{\parind}{d}
\newcommand{\degree}{D}
\newcommand{\prob}{\delta} % Transitions
\newcommand{\statenum}{n}
\newcommand{\actionnum}{m}
\newcommand{\Actionone}{\mathcal{A}} % actions
\newcommand{\Actiontwo}{\mathcal{B}} % actions
\newcommand{\actionone}{a}
\newcommand{\actiontwo}{b}
\newcommand{\States}{\mathcal{S}} % States
\newcommand{\state}{s} % States
\newcommand{\Plays}{\Omega} % Set of plays
\newcommand{\play}{\omega} % play
\newcommand{\Target}{T}
\newcommand{\strategyone}{\sigma}
\newcommand{\strategytwo}{\tau}
\newcommand{\Strategyone}{\Sigma}
\newcommand{\Strategytwo}{\Gamma}
\newcommand{\prior}{p}
\newcommand{\UndiscVal}{{\sc LimitValue}}
\newcommand{\ParVal}{{\sc ParityValue}}
\newcommand{\ApproxDiscounted}{{\sc ApproxDiscounted}}
\newcommand{\ApproxLimit}{{\sc ApproxLimit}}
\newcommand{\Det}{{\sc Det}}
\newcommand{\Val}{{\sc Val}}
\DeclareMathOperator{\bit}{bit}
\DeclareMathOperator{\sign}{sign}
\DeclareMathOperator{\rel}{rel} 
\DeclareMathOperator{\reach}{Reach}
\DeclareMathOperator{\parity}{Parity}
\DeclareMathOperator{\discounted}{Disc}
\DeclareMathOperator{\len}{len}
\DeclareMathOperator{\val}{val}
\Crefname{equation}{Eq.}{Eqs.}
\title{Concurrent Stochastic Games with Stateful-discounted and Parity Objectives: Complexity and Algorithms}
\titlerunning{Concurrent Stochastic Games with Stateful-discounted and Parity Objectives}
\author{Ali Asadi}{Institute of Science and Technology Austria (ISTA), Austria }{ali.asadi@ista.ac.at}{https://orcid.org/0009-0005-2839-953X}{}
\author{Krishnendu Chatterjee}{Institute of Science and Technology Austria (ISTA), Austria }{krishnendu.chatterjee@ista.ac.at}{https://orcid.org/0000-0002-4561-241X}{}
\author{Raimundo Saona}{Institute of Science and Technology Austria (ISTA), Austria }{raimundojulian.saonaurmeneta@ista.ac.at}{https://orcid.org/0000-0001-5103-038X}{}
\author{Jakub Svoboda}{Institute of Science and Technology Austria (ISTA), Austria }{jakub.svoboda@ista.ac.at}{https://orcid.org/0000-0002-1419-3267}{}
\authorrunning{A. Asadi, K. Chatterjee, R. Saona, and J. Svoboda}
\keywords{Concurrent Stochastic Games, Parity Objectives, Discounted-sum Objectives}
\begin{document}

\maketitle

\begin{abstract}
We study two-player zero-sum concurrent stochastic games with finite state and action space played for an infinite number of steps.
In every step, the two players simultaneously and independently choose an action.
Given the current state and the chosen actions, the next state is obtained according to a stochastic transition function.
An objective is a measurable function on plays (or infinite trajectories) of the game, and the value for an objective is the maximal expectation that the player can guarantee against the adversarial player.
We consider: (a)~stateful-discounted objectives, which are similar to the classic discounted-sum objectives, but states are associated with different discount factors rather than a single discount factor; and (b)~parity objectives, which are a canonical representation for $\omega$-regular objectives.
For stateful-discounted objectives, given an ordering of the discount factors, the limit value is the limit of the value of the stateful-discounted objectives, as the discount factors approach zero according to the given order.

The computational problem we consider is the approximation of the value within an arbitrary additive error.
The above problem is known to be in EXPSPACE for the limit value of stateful-discounted objectives and in PSPACE for parity objectives.
The best-known algorithms for both the above problems are at least exponential time, with an exponential dependence on the number of states and actions.
Our main results for the value approximation problem for the limit value of stateful-discounted objectives and parity objectives are as follows: (a)~we establish TFNP[NP] complexity; and (b)~we present algorithms that improve the dependency on the number of actions in the exponent from linear to logarithmic.
In particular, if the number of states is constant, our algorithms run in polynomial time.
\end{abstract}

\newpage
\renewcommand{\nomname}{List of major symbols}
\setlength{\nomlabelwidth}{4em}

%% This code creates the groups
% -----------------------------------------
\renewcommand\nomgroup[1]{%
  \item[\bfseries
  \ifstrequal{#1}{A}{Main symbols}{
  \ifstrequal{#1}{B}{Dynamics}{
  \ifstrequal{#1}{C}{Objectives and values}{
  \ifstrequal{#1}{O}{Other symbols}{
  }}}}
]}
% -----------------------------------------

\nomenclature[A, 01]{\(\statenum\)}{Number of states}
\nomenclature[A, 02]{\(\actionnum\)}{Number of actions per player and state}
\nomenclature[A, 03]{\(\parind\)}{Index of parity and number of discount factors}
\nomenclature[A, 04]{\(\B\)}{Bit size of transition and reward functions}
\nomenclature[A, 05]{\(\eps\)}{Additive error}
% \nomenclature[A, 06]{\(\state\)}{Initial state}

% -----------------------------------------

\nomenclature[C]{\(\val_\discfac\)}{Stateful-discounted value}
\nomenclature[C]{\(\val_\assign\)}{Limit of the stateful-discounted value}
\nomenclature[C]{\(\val_\Target\)}{Reachability value}
\nomenclature[C]{\(\reach_\Target\)}{Reachability objective}
\nomenclature[C]{\(\val_\prior\)}{Parity value}
\nomenclature[C]{\(\val\)}{Value of a matrix game}
\nomenclature[C]{\(\parity_\prior\)}{Parity objective}
\nomenclature[C]{\(\discounted_\discfac\)}{Stateful-discounted objective}
\nomenclature[C]{\(\reward\)}{Reward function}

\nomenclature[C]{\(\prior\)}{Priority function}
% \nomenclature[C]{\(\payoff\)}{Stateful-discounted payoff}
\nomenclature[C]{\(\discfac\)}{Discount function}
\nomenclature[C]{\(\assign\)}{Assignment function}

\nomenclature[C]{\(\Target\)}{Set of targets}
\nomenclature[C]{\(\discountfactor\)}{Discount factor}

% -----------------------------------------

\nomenclature[B]{\(\States\)}{Set of states}
\nomenclature[B]{\(\Actionone\)}{Set of actions per state of Player 1}
\nomenclature[B]{\(\Actiontwo\)}{Set of actions per state of Player 2}
\nomenclature[B]{\(\prob\)}{Transition probability function}
\nomenclature[B]{\(\strategyone\)}{Strategy of Player 1}
\nomenclature[B]{\(\strategytwo\)}{Strategy of Player 2}
\nomenclature[B]{\(\Strategyone\)}{Set of strategies of Player 1}
\nomenclature[B]{\(\Strategytwo\)}{Set of strategies of Player 2}
\nomenclature[B]{\(\MDP\)}{Markov Decision Process (MDP)}
\nomenclature[B]{\(\MC\)}{Markov Chain (MC)}
% \nomenclature[B]{\(\Plays\)}{Set of plays}
\nomenclature[B]{\(G\)}{Concurrent Stochastic Game (CSG)}
\nomenclature[B]{\(G_\strategyone, G_\strategytwo\)}{Induced MDPs}
\nomenclature[B]{\(G_{\strategyone, \strategytwo}\)}{Induced MC}

% -----------------------------------------

\nomenclature[O]{\(\F\)}{Set of floating point numbers}
\nomenclature[O]{\(\D\)}{Set of distributions represented in floating point}
\nomenclature[O]{\(\setPolynomials, \setPolynomialDenominators\)}{Sets of polynomials}
\nomenclature[O]{\(\degree\)}{Degree of a polynomial}
% \nomenclature[O]{\(\PP, \EE\)}{Probability measure and its expectation}
% \nomenclature[O]{\(\ZZ, \RR\)}{Sets of integers and reals numbers}
% \nomenclature[O]{\(\cal0\)}{Big-O asymptotic notation}

% \nomenclature[O]{\(\supp\)}{Support of a probability distribution}

\nomenclature[O]{\(\distribution\)}{Probability distribution}
\nomenclature[O]{\(\deter\)}{Determinant function for stationary strategies}
\nomenclature[O]{\(\id\)}{Identity matrix}
\nomenclature[O]{\(P, Q, C\)}{Polynomials}
\nomenclature[O]{\(W\)}{Auxiliary matrix game}

% \nomenclature[O]{\(\actiontwo\)}{Arbitrary action of the opponent}
% \nomenclature[O]{\(\actionone\)}{Arbitrary action of the player}
% \nomenclature[O]{\(\play\)}{Arbitrary play}

\printnomenclature

\section{Introduction}
In this work, we present improved complexity results and algorithms 
for the value approximation of concurrent stochastic games with two 
classic objectives. 
Below we present the model of concurrent stochastic games, the relevant objectives,
the computational problems, previous results, and finally our contributions.

\smallskip\noindent{\bf Concurrent stochastic games.}
Concurrent stochastic games are two-player zero-sum games played on finite-state graphs 
for an infinite number of steps. 
These games were introduced in the seminal work of Shapley~\cite{shapley1953stochastic} and are a fundamental model in game theory.
In each step, both players simultaneously and independently of the other player choose an action.
Given the current state and the chosen actions, the next state is obtained according to a stochastic 
transition function.
An infinite number of such steps results in a play which is an infinite sequence of states and actions. 
Concurrent stochastic games have been widely studied in the literature from the mathematical perspective~\cite{shapley1953stochastic,everett1957RecursiveGames,FV97,MN81}, and from the algorithmic and computational complexity perspective, including: complexity for reachability objectives~\cite{dealfaro1998ConcurrentReachabilityGames,etessami2008RecursiveConcurrentStochastic,frederiksen2013approximating,hansen2009WinningConcurrentReachability}, algorithms for limit-average objectives~\cite{hansen2011exact,oliu2021new}, complexity for qualitative solutions for omega-regular objectives~\cite{dealfaro2000ConcurrentOmegaregularGames}, complexity for quantitative solutions for omega-regular objectives~\cite{chatterjee2006complexity,de2001quantitative}, and in the context of temporal logic~\cite{alur2002alternating}. 
In particular, in the analysis of reactive systems, concurrent games provide the appropriate model for reactive systems with components that interact synchronously~\cite{alur2002alternating,de2000control,de2001control}.

\smallskip\noindent{\bf Objectives.}
An objective is a measurable function that assigns to every play a real-valued reward.
The classic discounted-sum objective is as follows: every transition is assigned
a reward and the objective assigns to a play the discounted-sum of the rewards. 
While the classic objective has a single discount factor, the stateful-discounted objective has multiple discount factors. 
In the stateful-discounted objective, each state is associated with a discount factor, and, in the objective, the discount at a step depends on the current state.
We also consider the boolean parity objectives, which are a canonical form to express all $\omega$-regular objectives~\cite{thomas1997languages}, which provide a robust specification for all properties that arise in verification.
For example, all LTL formulas can be converted to deterministic parity automata.
In parity objectives, every state is associated with an integer priority, and a play is winning for (or satisfies) the objective if the minimum priority visited infinitely often is even.

\smallskip\noindent{\bf Strategies and values.}
Strategies are recipes that define the choice of actions of the players.
They are functions that, given a game history, return a distribution over actions.
Given a concurrent stochastic game and an objective, the value of Player~1 at a state is the maximal expectation that the player can guarantee for the objective against all strategies of Player~2.
For stateful-discounted objectives, given an ordering of the discount factors, the limit value at a state is the limit of the value function for the discounted objective as the discount factors approach zero in the given order.

\smallskip\noindent{\bf Computational problems.}
Given a concurrent stochastic game, the main computational problems are:
(a)~the {\em value-decision} problem, given a state and a threshold $\alpha$, asks whether the value at the state is at least $\alpha$;
and (b)~the {\em value-approximation} problem, given a state and an error $\eps>0$, asks to compute an approximation of the value for the state within an additive error of $\eps$.
We consider the above problems for the limit value of stateful-discounted objectives and the value for parity objectives.

\smallskip\noindent{\bf Motivation.} 
The motivation to study the limit of the stateful-discounted objective is as follows. 
First, this limit generalizes the classic limit-average objectives. 
Second, it characterizes the value for the parity objectives in concurrent stochastic games~\cite{gimbert2005discounting,de2003discounting}, where the order of limit corresponds to the order of importance of priorities in parity objectives. 
Third, the limit value has been shown to correspond to the value for other objectives such as priority mean-payoff for various subclasses of concurrent stochastic games~\cite{gimbert2012blackwell}. 
The motivation to study the value-approximation problem as opposed to the value-decision is that for concurrent games, even for special classes of objectives such as reachability and safety, values can be irrational, and the decision problem related to exact value is SQRT-SUM hard~\cite{etessami2008RecursiveConcurrentStochastic} as explained below. 
Hence, approximation of values is a natural problem to study from an algorithmic and computational complexity perspective.

\smallskip\noindent{\bf Previous results.} 
For a single discount factor, the limit value corresponds to the value of the well-studied mean-payoff or long-run average objectives~\cite{MN81}, and, for parity objectives, the computational problems admit a linear reduction to the limit value of stateful-discounted objectives~\cite{gimbert2005discounting,de2003discounting}. 
The value-decision problem for concurrent stochastic games is SQRT-SUM hard~\cite{etessami2008RecursiveConcurrentStochastic}:
this result holds for reachability objectives, and hence also for parity objectives and the limit value for even a single discount factor.
The SQRT-SUM problem is a classic problem in computational geometry, and whether SQRT-SUM belongs to NP has been a long-standing open problem. 
The complexity upper bounds for the value-approximation problem of concurrent stochastic games are as follows:
(a)~EXPSPACE for the limit value of stateful-discounted objectives; and 
(b)~PSPACE for parity objectives~\cite{chatterjee2007stochastic,chatterjee2006complexity}.
The above result for the limit value follows from a reduction to the theory of reals,
where the number of discount factors corresponds to the number of quantifier alternation.
For the special class of reachability objectives, the complexity upper bound of TFNP[NP] for 
the value-approximation problem has been established in~\cite{frederiksen2013approximating}, where TFNP[NP] is the total functional form of the second level of the polynomial hierarchy. 
The result of \cite{frederiksen2013approximating} has been recently extended to limit-average objectives (which correspond to the limit-value of single discount factor)~\cite{bose2023bounded}.
To the best of our knowledge, the above complexity upper bounds are the best bounds for 
limit value of general stateful-discounted objectives and parity objectives.
The best-known algorithms for the value-approximation problem are as follows:
(a)~double exponential time for the limit value of stateful-discounted objectives;
(b)~exponential time for parity objectives, where the exponent is a product that depends at least linearly on the number of states and actions~\cite{chatterjee2006complexity,chatterjee2007stochastic} (see \Cref{Section: Previous and our results} for further details).
While iterative approaches are desirable, they neither exist for parity objectives nor guarantee efficiency even in special cases.
For example, for reachability and safety objectives, iterative approaches like value-iteration or strategy-iteration have a double-exponential lower bound~\cite{hansen2011ComplexitySolvingReachability}; and, for parity objectives, iterative approaches like strategy iteration are not known as strategies require infinite-memory~\cite{dealfaro2000ConcurrentOmegaregularGames}.

\smallskip\noindent{\bf Our contributions.}
In this work, our main contributions are as follows:
(a)~we establish TFNP[NP] upper bounds for the value-approximation problem for concurrent stochastic games, both for the limit value of stateful-discounted objectives and the parity objectives; and (b)~we present algorithms which are exponential time and  improve the dependency on the number of actions in the exponent from linear to logarithmic.
In particular, if the number of states is constant, our algorithms run in polynomial time.
The comparison of previous results and our results is summarized in \Cref{Table: Complexity result comparison,Table: Algorithms result comparison}.
\begin{table}
    \vspace{-1em}
    \centering
    \begin{tabular}{|c|c|c|}
        \hline
        & \multicolumn{2}{c|}{\textbf{Complexity}} \\
        \cline{2-3}
        & \textbf{Previous} & \textbf{Ours} \\
        \hline
        \multirow{2}{*}{\textbf{Limit}} & EXPSPACE &  \\
        & (Theory of reals) & TFNP[NP] \\
        \cline{1-2}
        \multirow{2}{*}{\textbf{Parity}}& PSPACE &  (\Cref{Result: Computational result for limit value}-\Cref{Item: Complexity class for limit value}, \Cref{Result: Computational result for parity value}-\Cref{Item: Complexity class for parity value}) \\ 
        & \cite{chatterjee2006complexity,chatterjee2007stochastic} & \\
        \hline
    \end{tabular}
    \caption{Complexity upper bounds of the value-approximation in concurrent stochastic games for the limit value of stateful-discounted objectives and parity objectives.}
    \label{Table: Complexity result comparison}
\end{table} 

\begin{table}
    \centering
    \begin{tabular}{| c | c | c |}
        \hline
        & \multicolumn{2}{c|}{\textbf{Algorithms}} \\
        \cline{2-3}
        & \textbf{Previous} & \textbf{Ours} \\
        \hline
        \multirow{2}{*}{\textbf{Limit}} & {\footnotesize	$\exp \left (\calO(2^\parind \actionnum^2\statenum + \log(1/\eps) +   \log (\B)) \right )$} & \multirow{5}{*}{
            \shortstack{
                {\footnotesize	$\exp \left( \calO \left( 
                    \begin{aligned}
                        &\statenum\parind \log(\actionnum) + \log(\B)\\ 
                        &\quad+ \log(\log(1/\eps)) 
                    \end{aligned}
                \right) \right)$} \\
                (\Cref{Result: Computational result for limit value}-\Cref{Item: Algorithmic time for limit value}, \\
                \Cref{Result: Computational result for parity value}-\Cref{Item: Algorithmic time for parity value})
            }
        } \\
        & (Theory of reals) &  \\
        \cline{1-2}
        \multirow{3}{*}{\textbf{Parity}}
            & \multirow{3}{*}{
                \shortstack{
                    {\footnotesize $\exp \left( \calO \left( 
                        \begin{aligned}
                            &\actionnum \statenum + \parind \log(\statenum) + \log(\B) \\
                            &\quad+ \log(\log(1/\eps)) 
                        \end{aligned}
                    \right) \right)$
                    }\\\cite{chatterjee2006complexity,chatterjee2007stochastic}
                }
            } 
            & \\ 
        &  &  \\
        &  &  \\
        \hline
    \end{tabular}
    \caption{Algorithmic upper bounds of the value-approximation in concurrent stochastic games for the limit value of stateful-discounted objectives and parity objectives, where $\statenum$ is the number of states, $\actionnum$ is the number of actions, $\parind$ is the number of discount factors/parity index, $\B$ is the bit-size of numbers in the input, $\eps$ is the additive error, and $\exp$ is the exponential function. }
    \label{Table: Algorithms result comparison}
\end{table} 

\smallskip\noindent{\bf Technical contributions.} 
Our main technical contributions are as follows. We first present a bound on the roots of multi-variate polynomials with integer coefficients~(\Cref{Section: Bounds on roots of polynomials}).
Given the bounds on roots of polynomials, we establish new characterizations for the limit and stateful-discounted values~(\Cref{Section: Characterization of discounted value} and \Cref{Section: Characterization of limit value}), which lead to an approximation of the limit value by the stateful-discounted value when the discount factors are double-exponentially small (\Cref{Section: Approximation of limit value}). 
Given this connection, we establish the improved complexities and algorithms for the value-approximation for the limit value of stateful-discounted objectives and parity objectives in \Cref{Section: Algorithms} and \Cref{Section: Complexities}.
%We present a new approach related to the bounds on the roots of multi-variate polynomials with integer coefficients. 
%We also present new characterizations of the stateful-discounted value and the limit value of stateful-discounted objectives. 
%%Given the bound on roots of polynomials and new characterizations of values,  we establish a new approach for the approximation of the limit value of stateful-discounted objectives by computing the stateful-discounted value, when the discount factors are double-exponentially close to 0. 
%To the best of our knowledge such a connection has not been established before. 
%Given this connection, we establish the improved complexities and algorithms for the value-approximation for the limit value of stateful-discounted objectives and parity objectives.

\section{Preliminaries}
\label{Section: Preliminaries}
We present standard definitions related to concurrent stochastic games.

\smallskip\noindent{\bf Basic Notations.} Given a finite set $\mathcal{X}$, a probability distribution over $\mathcal{X}$ is a function $\distribution \colon \mathcal{X} \to [0, 1]$ such that $\sum_{x \in \mathcal{X}} \distribution(x) = 1$. 
The set of all probability distributions over $\mathcal{X}$ is denoted by $\Delta(\mathcal{X})$.
For $\distribution \in \Delta(\mathcal{X})$, the support of $\distribution$ is defined as $\support(\distribution) \defas \{x \mid \distribution(x) > 0\}$. 
For a positive integer $k$, the set of positive integers smaller than or equal to $k$ is defined as $[k] \defas \{ 1, \ldots, k \}$. 
Given a real $x$, we denote $2^x$ by $\exp(x)$.

\smallskip\noindent{\bf Concurrent stochastic games.} 
A concurrent stochastic game (CSG) is a two-player finite game 
$G = (\States, \Actionone, \Actiontwo, \prob)$ consisting of
\begin{itemize}
    \item the set of states $\States$, of size $\statenum$;
    \item the sets of actions for each player $\Actionone$ and $\Actiontwo$, with at most $\actionnum$ actions; and 
    \item the stochastic transition function $\prob \colon \States \times \Actionone \times \Actiontwo \to \Delta(\States)$.
\end{itemize}

\smallskip\noindent{\bf Steps.} 
Given an initial state $\state \in \States$, the game proceeds as follows. 
In each step, both players choose an action simultaneously, $\actionone \in \Actionone$ and $\actiontwo \in \Actiontwo$. 
Based on both actions $(\actionone, \actiontwo)$ and current state $\state$, the next state is drawn
according to the probability distribution $\prob(\state, \actionone, \actiontwo)$. 

\smallskip\noindent{\bf Histories and plays.}
At step $k$ of CSGs, each player possesses information in the form of the finite sequence of the states visited and the actions chosen by both players. A $k$-history $\play^{(k)} = \langle \state_0, \actionone_0, \actiontwo_0, \state_1, \actionone_1, \actiontwo_1, \ldots, \state_k \rangle$ is a finite sequence of states and actions such that, for all steps $0 \le t < k$,
we have $\state_{t+1} \in \support(\prob(\state_t, \actionone_t, \actiontwo_t))$. 
The set of all $k$-histories is denoted by~$\Plays^{(k)}$.
Similarly, a play $\play = \langle \state_0, \actionone_0, \actiontwo_0, \state_1, \actionone_1, \actiontwo_1, \ldots \rangle$ is an infinite sequence of states and actions 
such that, for all steps $t \ge 0$,
we have $\state_{t+1} \in \support(\prob(\state_t, \actionone_t, \actiontwo_t))$. 
The set of all plays is denoted by $\Plays$.
For any state $\state$, the set of all plays starting at $\state$, i.e., $\play = \langle \state_0, \actionone_0, \actiontwo_0, \ldots \rangle$ where $\state_0 = \state$, is denoted by $\Plays_\state$.

\smallskip\noindent{\bf Objectives.}
An objective is a measurable function that assigns a real number to all plays. Player~1 aims to maximize the expectation of the objective, while Player~2 minimizes it. 

\begin{itemize}
    \item {\em Parity objective.} 
    Given a priority function $\prior\colon \States \to \{0, \dots, \parind \}$ with $\parind$ as its index, the {\em parity} objective is an indicator of the even parity condition on minimal priority visited infinitely often in plays. More formally, we define $\parity_\prior \colon \Plays \to \{0, 1\}$ as
    \[
        \parity_\prior(\play) \defas \begin{cases}
            1 & \min \{ \prior(\state) \mid \forall i \ge 0 \; \exists j \ge i \; \state_j = \state \} \text{ is even}\\
            0 & \text{otherwise}
        \end{cases}
    \]

    % \item {\em Reachability and safety objectives.} Given a set $\Target \subseteq \States$ of target states, the reachability objective is the indicator function of plays eventually reaching $\Target$. More formally, for a play $\play = \langle \state_0, \actionone_0, \actiontwo_0, \cdots \rangle$, we define $\reach_\Target \colon \Plays \to \{0, 1\}$ as
    % \[
    %     \reach_\Target(\play) \defas \begin{cases}
    %         1 & \exists i \in \NN \; \state_i \in \Target \\
    %         0\,. & o.w.
    %     \end{cases}
    % \]
    % Given a set $F \subseteq \States$ of safe states, the safety objective is the indicator function of plays never leaving $F$. More formally, for a play $\play = \langle \state_0, \actionone_0, \actiontwo_0, \cdots \rangle$, we define $\safe_F \colon \Plays \to \{0, 1\}$ as 
    % \[
    %     \safe_F(\play) \defas \begin{cases}
    %         1 & \forall i \in \NN \; \state_i \in F \\
    %         0\,. & o.w.
    %     \end{cases}
    % \]
    % Note that reachability and safety objectives are complementary, i.e.,
    % for all target sets $\Target \subseteq \States$ and plays $\play \in \Plays$, we have $\reach_\Target(\play) + \safe_{\States \setminus \Target}(\play) = 1$. 
    
    \item {\em Stateful-discounted objective.} Consider $\parind$ discount factors $\discountfactor_1, \cdots, \discountfactor_\parind \in (0, 1]$. Given an assignment function $\assign \colon \States \to [\parind]$, we define the discount function $\discfac \colon \States \to \{ \discountfactor_1, \cdots, \discountfactor_\parind \}$ as $\discfac(\state) \defas \discountfactor_{\assign(\state)}$ for all states $\state \in \States$. Given a reward function $\reward \colon \States \times \Actionone \times \Actiontwo \to [0, 1]$ that assigns a reward value $\reward(\state, \actionone, \actiontwo)$ for all $(\state, \actionone, \actiontwo)$, the stateful-discounted objective $\discounted_\discfac \colon \Plays \to [0, 1]$ is defined as, for all $\play = \langle \state_0, \actionone_0, \actiontwo_0, \cdots \rangle$,
    \[
     \discounted_\discfac(\play) \defas \sum_{i \ge 0} \left (\reward(\state_i, \actionone_i, \actiontwo_i) \discfac(\state_i) \prod_{j < i} 1 - \discfac(\state_j) \right )\,.
    \]
\end{itemize}

\smallskip\noindent{\bf Strategies.}
A strategy is a function that assigns a probability distribution over actions to every finite history and is denoted by $\strategyone \colon \bigcup_{k} \Plays^{(k)} \to \Delta(\Actionone)$ for Player~1 (resp. $\strategytwo \colon \bigcup_{k} \Plays^{(k)} \to \Delta(\Actiontwo)$ for Player~2).
Given strategies $\strategyone$ and $\strategytwo$, the game proceeds as follows.
At step $k$, the current history is some $\play^{(k)} \in \Plays^{(k)}$.
Player~1 (resp. Player~2) chooses an action according to the distribution $\strategyone \left (\play^{(k)} \right )$ (resp. $\strategytwo \left (\play^{(k)} \right )$).
The set of all strategies for Player~1 and Player~2 is denoted by $\Strategyone$ and $\Strategytwo$ respectively.
A {\em stationary} strategy depends on the past
observations only through the current state. 
A stationary strategy for Player~1 (resp. Player~2) is denoted by $\strategyone \colon \States \to \Delta(\Actionone)$ (resp. $\strategytwo \colon \States \to \Delta(\Actiontwo)$). 
The set of all stationary strategies for Player~1 and Player~2 is denoted by $\Strategyone^{S}$ and $\Strategytwo^{S}$ respectively. 
A {\em pure stationary} strategy $\strategyone \colon \States \to \Actionone$ (resp. $\strategytwo \colon \States \to \Actiontwo$) for Player~1 (resp. Player~2) is a stationary strategy that maps to Dirac distributions only. The set of all pure stationary strategies for Player~1 and Player~2 is denoted by $\Strategyone^{PS}$ and $\Strategytwo^{PS}$ respectively.

\smallskip\noindent{\bf Probability space.}
An initial state $\state$ and a pair of strategies ($\strategyone$, $\strategytwo$) induce a unique probability over $\Plays_\state$, endowed with the sigma-algebra generated by the cylinders corresponding to finite histories. We denote by $\PP_\state^{\strategyone, \strategytwo}$ and $\EE_\state^{\strategyone, \strategytwo}$ the probability and the expectation respectively.

We state the determinacy for CSGs with stateful-discounted and parity objectives.

\begin{theorem}[Parity determinacy \cite{martin1998determinacy}]
    For all CSGs, states $\state$, and priority functions $p$,
    \[
        \sup_{\strategyone \in \Strategyone} \inf_{\strategytwo \in \Strategytwo} \EE_{\state}^{\strategyone, \strategytwo}[\parity_p] 
            = \inf_{\strategytwo \in \Strategytwo} \sup_{\strategyone \in \Strategyone}  \EE_{\state}^{\strategyone, \strategytwo}[\parity_p] \,. 
    \]
\end{theorem}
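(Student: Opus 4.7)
The plan is to invoke Martin's Blackwell determinacy theorem, whose hypothesis reduces to checking that the parity objective is Borel measurable on the space of plays; the rest of the argument is the general determinacy result cited.

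First, I would equip $\Plays_\state$ with the product topology generated by cylinders over finite histories; this turns it into a Polish space whose Borel $\sigma$-algebra coincides with the one on which each $\PP_\state^{\strategyone,\strategytwo}$ is defined. Then I would verify that $\parity_p$ is Borel measurable. For each priority $i \in \{0, \dots, \parind\}$, set
\[
E_i \defas \bigl\{ \play = \langle \state_0, \actionone_0, \actiontwo_0, \state_1, \dots \rangle \in \Plays : \forall k \ge 0 \; \exists t \ge k \; p(\state_t) = i \bigr\} \,,
\]
which is a $\Pi^0_2$ set, being a countable intersection of countable unions of cylinders. The event $\{\play : \parity_p(\play) = 1\}$ is then the finite disjoint union
\[
\bigcup_{i \text{ even}} \Bigl( E_i \cap \bigcap_{j < i} E_j^{c} \Bigr) \,,
\]
hence Borel, so $\parity_p$ is Borel measurable and bounded in $[0,1]$.

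Second, I would invoke Martin's theorem on Blackwell games, which asserts that every two-player zero-sum simultaneous-move game with finite action sets and a bounded Borel measurable payoff admits a value under behavioural strategies. The CSG $G$ together with $\parity_p$ sits exactly in this setting: at each step both players independently randomize over $\Actionone$ and $\Actiontwo$, and the stochastic transition $\prob$ is subsumed by Martin's stochastic formulation (equivalently, one inserts an auxiliary nature position after each joint action, which preserves Borel measurability of the payoff since the injection from original plays into the expanded play space is continuous).

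The main obstacle is the careful bridging between the CSG formalism of the paper, with stochastic transitions baked into $\prob$, and Martin's original formulation of Blackwell games; this is a routine reduction once one fixes a Polish structure on $\Plays_\state$ and views the transition kernels as measurable selections. After that bridging, the equality
\[
\sup_{\strategyone \in \Strategyone} \inf_{\strategytwo \in \Strategytwo} \EE_{\state}^{\strategyone, \strategytwo}[\parity_p] = \inf_{\strategytwo \in \Strategytwo} \sup_{\strategyone \in \Strategyone} \EE_{\state}^{\strategyone, \strategytwo}[\parity_p]
\]
is immediate from the cited determinacy theorem.
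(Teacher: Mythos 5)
Your proposal is correct and matches the paper's treatment: the paper states this result without proof, citing Martin's Blackwell determinacy theorem, and your argument is exactly the standard derivation behind that citation --- verify that the parity event is Borel (your $\Pi^0_2$ decomposition of the sets $E_i$ and the finite union over even minimal priorities is right, using that the finite state space guarantees some priority occurs infinitely often) and then apply Martin's theorem, with the routine encoding of the stochastic transition function into the Blackwell-game formalism. No gaps.
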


\begin{theorem}[Stateful-discounted determinacy \cite{shapley1953stochastic}]
    For all CSGs, states $\state$, reward functions, and discount functions $\discfac$, we have 
    \[
        \sup_{\strategyone \in \Strategyone^S} \inf_{\strategytwo \in \Strategytwo^S} \EE_{\state}^{\strategyone, \strategytwo}[\discounted_\discfac] =
        \inf_{\strategytwo \in \Strategytwo^S} \sup_{\strategyone \in \Strategyone^S}  \EE_{\state}^{\strategyone, \strategytwo}[\discounted_\discfac] \,. 
    \]
\end{theorem}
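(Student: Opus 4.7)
The plan is to adapt Shapley's contraction argument to the stateful-discount setting. Define the Shapley operator $T \colon \RR^\States \to \RR^\States$ component-wise by
\[
    (Tv)(\state) \defas \val \left[ \discfac(\state) \reward(\state, \cdot, \cdot) + (1-\discfac(\state)) \sum_{\state' \in \States} \prob(\state, \cdot, \cdot)(\state') \cdot v(\state') \right],
\]
where $\val[M]$ denotes the minimax value of the matrix game with entries $M_{\actionone, \actiontwo}$, guaranteed to exist by the von Neumann minimax theorem. The first step is to show that $T$ is a contraction on $(\RR^\States, \|\cdot\|_\infty)$: since the matrix-game value is $1$-Lipschitz in its entries,
\[
    |(Tv)(\state) - (Tv')(\state)| \leq (1 - \discfac(\state)) \|v - v'\|_\infty \leq (1 - \discountfactor_{\min}) \|v - v'\|_\infty,
\]
where $\discountfactor_{\min} \defas \min_\state \discfac(\state) > 0$, which is positive because every $\discfac(\state) \in (0,1]$. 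Banach's fixed-point theorem then produces a unique $v^\star \in \RR^\States$ with $Tv^\star = v^\star$, and the minimax theorem supplies stationary strategies $\strategyone^\star \in \Strategyone^S$ and $\strategytwo^\star \in \Strategytwo^S$ that play optimal mixed actions in the matrix game defining $(Tv^\star)(\state)$ for every $\state$.

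The second step is to identify $v^\star$ with the common value of the game. A direct unrolling using the fixed-point equation $Tv^\star = v^\star$ yields $v^\star(\state) = \EE_\state^{\strategyone^\star, \strategytwo^\star}[\discounted_\discfac]$. The main obstacle is to establish the sharper one-sided inequalities $\EE_\state^{\strategyone^\star, \strategytwo}[\discounted_\discfac] \geq v^\star(\state)$ for \emph{every} (possibly history-dependent) strategy $\strategytwo \in \Strategytwo$, and symmetrically $\EE_\state^{\strategyone, \strategytwo^\star}[\discounted_\discfac] \leq v^\star(\state)$ for every $\strategyone \in \Strategyone$. I would establish these via a finite-horizon truncation: for each $k$, prove by backward induction over the last $k$ steps that the $k$-step discounted expectation under $(\strategyone^\star, \strategytwo)$ deviates from $v^\star(\state_0)$ by at most $(1 - \discountfactor_{\min})^k \cdot \max_{\state'} |v^\star(\state')|$, using the matrix-game optimality of $\strategyone^\star(\state)$ against the marginal distribution of $\strategytwo$'s action at each step. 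Passing $k \to \infty$, the tail contribution vanishes and the desired bound follows.

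Combining these inequalities produces the chain
\[
    v^\star(\state) \leq \sup_{\strategyone \in \Strategyone^S} \inf_{\strategytwo \in \Strategytwo^S} \EE_{\state}^{\strategyone, \strategytwo}[\discounted_\discfac] \leq \inf_{\strategytwo \in \Strategytwo^S} \sup_{\strategyone \in \Strategyone^S} \EE_{\state}^{\strategyone, \strategytwo}[\discounted_\discfac] \leq v^\star(\state),
\]
so all three quantities equal $v^\star(\state)$, proving the determinacy statement; along the way we obtain the stronger fact that $\strategyone^\star$ and $\strategytwo^\star$ are optimal against arbitrary (not merely stationary) opponent strategies. The argument is entirely classical for a single discount factor; the stateful generalization requires only noting that the minimum state-wise discount factor still yields a uniform contraction rate, which in turn relies essentially on all discount factors being strictly positive.
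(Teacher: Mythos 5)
Your proposal is correct and is exactly the classical Shapley contraction argument that the paper invokes by citation (the paper gives no proof of its own for this preliminary, attributing it to \cite{shapley1953stochastic}); the only adaptation needed for the stateful setting is the observation that $\min_\state \discfac(\state) > 0$ still gives a uniform contraction modulus, which you make. The fixed-point identification, the backward-induction bound against arbitrary history-dependent opponents, and the closing chain of inequalities are all standard and sound.
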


\smallskip\noindent{\bf Values.}
The above determinacy results imply that switching the quantification order of strategies do not make 
a difference and leads to the unique notion of value.
The stateful-discounted value for a state $\state$ is defined as 
\[
    \val_\discfac(\state) 
        \defas \sup_{\strategyone \in \Strategyone^S} \inf_{\strategytwo \in \Strategytwo^S} \EE_{\state}^{\strategyone, \strategytwo}[\discounted_\discfac] \,.
\]
We define the parity value $\val_p(\state)$ for a state $\state$ analogously. The limit value for a state $\state$ is defined as
\[
    \val_{\assign}(\state) \defas \lim_{\discountfactor_1 \to 0^+} \cdots \lim_{\discountfactor_\parind \to 0^+} \val_\discfac(\state) \,.
\]

\smallskip\noindent{\bf $\eps$-optimal strategies. }
Given $\eps \geq 0$, a strategy $\strategyone$ for Player~$1$ is \emph{$\eps$-optimal} for the stateful-discounted objective if, for all states $\state \in \States$, we have
\[
    \inf_{\strategytwo \in \Strategytwo^{S}} \EE_\state^{\strategyone, \strategytwo}[ \discounted_\discfac] 
        \ge \val_\discfac(\state) - \eps \,.
\]
We say the strategy is \emph{optimal} if $\eps = 0$.
The notion of $\eps$-optimal strategies for Player $2$ is defined analogously. Similarly, we define $\eps$-optimal strategies for the parity objectives.

\smallskip\noindent{\bf Approximate value problems.}
We consider two value problems stated as follows.

    \begin{tcolorbox}
    \subparagraph{\UndiscVal.} 
     Consider a CSG $G$, a state $\state$, a reward function $\reward$, an assignment function $\assign \colon \States \to [\parind]$, and an additive error $\eps$.
     The transition function $\prob$ and the reward function $\reward$ are represented by rational numbers using at most $\B$ bits. 
     Compute an approximation $v$ of the limit value at state $\state$ such that 
    \[
        |v - \val_{\assign}(\state)| \le \eps\,.
    \]
    \end{tcolorbox}

    \begin{tcolorbox}
    \subparagraph{\ParVal.} 
    Consider a CSG $G$, a state $\state$, a priority function $p$ with index $\parind$, and an additive error $\eps$.
    The transition function $\prob$ is represented by rational numbers using at most $\B$ bits. 
    Compute an approximation $v$ of the parity value at state $\state$ such that 
    \[
        |v - \val_p(\state)| \le \eps\,.
    \]
    \end{tcolorbox}

\section{Overview of Results}
\label{Section: Previous and our results}
We first discuss the previous results in the literature, and then, we show our contributions.

\smallskip\noindent{\bf Previous results.}
We discuss the previous works on computing the approximation of limit and parity values in CSGs. A natural approach for these computational problems is via the theory of reals. We first recall the main computational result of the theory of reals, which is a specialization of Theorem 1 of \cite{basu1999new}.

\begin{theorem}[\cite{basu1999new}]
    \label{Result: Existential theory of reals}
    Consider $\ell$ variables $x_1, \cdots, x_\ell$ and the set of polynomials $\setPolynomials = \{P_1, \cdots, P_k \}$, where, for all $i \in [k]$, we have  $P_i$ is a polynomial in $x_1, \cdots, x_\ell$ of degree at most $\degree$ with integer coefficients of bit-size at most $\B$. 
    Let $X_1, \cdots, X_\parind$ be a partition of $x_1, \cdots, x_\ell$ into $\parind$ subsets such that $X_i$ has size $\ell_i$. Let
    \[
        \Phi = (Q_\parind X_\parind) \; \cdots \; (Q_1 X_1) \quad \phi(P_1, \cdots, P_k)
    \]
    be a sentence with $\parind$ alternating quantifiers $Q_i \in \{ \exists, \forall \}$ such that $Q_{i+1} \neq Q_i$, and $\phi(P_1, \cdots, P_k)$ is a quantifier-free formula with atomic formulas of the form $P_i \bowtie 0$ where $\bowtie \, \in \{<, >, = \}$. Then, there exists an algorithm to decide the truth of $\Phi$ in time 
    \[
        k^{\prod_i \calO(\ell_i+1)} \cdot D^{\prod_i \calO(\ell_i)} \cdot \calO(\len(\phi) \B^2)\,,
    \]
    where $\len(\phi)$ is the length of the quantifier-free formula $\phi$.
\end{theorem}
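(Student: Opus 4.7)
The plan is to prove the bound by induction on the number $d$ of quantifier blocks, eliminating one block at a time from the innermost outward via a critical-points construction. The governing idea is that for each block we do not merely ``test'' points but rather produce a set of sample points that meet every connected component of every realizable sign condition of the polynomials currently in play; after sampling, each block is replaced by a finite disjunction/conjunction over these witnesses, yielding a new quantifier-free formula in fewer variables. Iterating this procedure $d$ times reduces the sentence $\Phi$ to a ground sentence which can be evaluated in time polynomial in its size.

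For a single existential block over $\ell_i$ variables applied to $k$ polynomials of degree at most $D$, the key tool I would invoke is the critical-points/infinitesimal deformation method from effective real algebraic geometry: perturb the realizable sign-condition variety to a smooth bounded one, take a generic linear projection, and compute its critical points on each basic semi-algebraic set cut out by conjunctions of sign conditions. Standard Bezout-type bounds show that the number of such critical points, and hence of sample points needed to meet every connected component, is $k^{\calO(\ell_i+1)} D^{\calO(\ell_i)}$; each sample point is represented by a univariate representation of bit-complexity polynomial in $\B$ and $\len(\phi)$, and can be produced by Gr\"obner-free Gaussian elimination on a parametric system. A universal block is handled by negation, symmetric in cost.

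The inductive step then plugs these sample points into the remaining formula, treating each sample as a parameter expressed by a root of a univariate polynomial, so the ``polynomial bank'' after elimination of block $X_1$ has size at most $k' = k \cdot k^{\calO(\ell_1+1)} D^{\calO(\ell_1)}$ and degrees blown up to at most $D' = D^{\calO(\ell_1)}$. Feeding $(k', D')$ into the next block and telescoping gives the product $k^{\prod_i \calO(\ell_i+1)} \cdot D^{\prod_i \calO(\ell_i)}$ in the bound, while the $\calO(\len(\phi)\B^2)$ factor tracks the bit-cost of arithmetic over the algebraic representations produced along the way.

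The main obstacle, and the reason this bound is sharper than a na\"ive cylindrical algebraic decomposition, is controlling the simultaneous growth of the polynomial count and of the degrees across blocks so that the exponents multiply as $\prod_i(\ell_i+1)$ rather than adding as $\sum_i \ell_i$ in the wrong factor. Concretely, one must ensure that the univariate representations produced by the critical-points method at each block can be substituted into the next block without requiring a full resultant tower; this is achieved by keeping the representation in a Thom-encoded form and by doing the degree accounting in terms of the \emph{number of realizable sign conditions} rather than the cell count of a full decomposition. Once this accounting is done carefully, the stated complexity follows. I would defer to the construction in the cited work of Basu for the technical details of the critical-point deformation and its bit-complexity analysis.
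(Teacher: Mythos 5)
The paper gives no proof of this statement at all: it is imported verbatim as a specialization of Theorem~1 of \cite{basu1999new}, so there is no in-paper argument to compare yours against. Your sketch is a faithful high-level account of the actual proof strategy in that reference --- innermost-out block elimination, where each block is replaced by a set of sample points meeting every connected component of every realizable sign condition (computed via the critical-points/infinitesimal-deformation method), with the telescoping of the polynomial count and degree across blocks producing the multiplicative exponents $\prod_i \calO(\ell_i+1)$ and $\prod_i \calO(\ell_i)$ --- and you explicitly defer the technical content (the deformation, the univariate/Thom-encoded representations, and the bit-complexity accounting) to Basu's construction, exactly as the paper does. In short, your proposal is consistent with the cited source and takes the same route as the paper, namely relying on \cite{basu1999new}; it is a correct outline rather than a self-contained proof, which is all the paper itself provides.
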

Along with the above algorithmic result, the following complexity result also follows from~\cite{basu1999new}: if there is constant number of quantifier alternations, then complexity is PSPACE, and in general the complexity is EXPSPACE.
We now discuss the algorithms and complexity results from the literature for the limit value of stateful discounted-sum objectives. 
The basic computational approach is via the theory of reals. 
For a single discount factor, the reduction to the theory of reals and dealing with its limit (which corresponds to limit-average objectives) was presented in~\cite{chatterjee2008stochastic}. 
In the general case ($\parind$ discount factors), each limit can be considered as the quantification $\exists \eps_{i'}\; \forall \eps_{i} \leq \eps_{i'} $ in the theory of reals. 
Thus, concurrent stochastic games with the limit value of stateful-discounted objectives can be reduced to the theory of reals with quantifier alternation.
This reduction gives a theory of reals sentence with the following parameters: 
\[
    \ell = \calO(\actionnum^2\statenum),
        \quad k = \calO(\actionnum^2\statenum),
        \quad \degree = 4,
        \quad \prod_i (\ell_i+1) = \calO(2^\parind \actionnum^2\statenum)\,,
\]
Applying \Cref{Result: Existential theory of reals} to the reduction we obtain the following result.
\begin{theorem}[\UndiscVal: Previous Result]
\label{Result: Previous computational result for limit value}
    For the \UndiscVal\ problem, the following assertions hold.
    \begin{enumerate}
        \item The problem is in \textnormal{EXPSPACE}; and
        \item the problem can be solved in time $\exp \left( \calO \left( 2^\parind \actionnum^2\statenum + \log(1/\eps) +   \log (\B) \right) \right)$.
    \end{enumerate}
\end{theorem}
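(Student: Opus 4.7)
The plan is to encode the \UndiscVal{} problem as a sentence in the first-order theory of real-closed fields and invoke \Cref{Result: Existential theory of reals}. The algebraic backbone is Shapley's characterization: for fixed discount factors, the vector $\val_\discfac\in\RR^\statenum$ is the unique solution of the Bellman-type system
\[
    v(\state) = \val\Bigl[\discfac(\state)\reward(\state,\actionone,\actiontwo) + (1-\discfac(\state))\sum_{\state'}\prob(\state'\mid\state,\actionone,\actiontwo)\, v(\state')\Bigr]_{(\actionone,\actiontwo)},
\]
where $\val[\cdot]$ denotes the value of the enclosed $\actionnum\times\actionnum$ matrix game. The matrix-game value is itself first-order expressible: $w$ equals the value iff both players have mixed strategies that guarantee $w$ against every pure response of the opponent. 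First I would write this out, introducing $\calO(\actionnum\statenum)$ strategy variables in addition to the $\statenum$ value variables, together with $\calO(\actionnum^2\statenum)$ polynomial inequalities of degree at most $4$ and with integer coefficients of bit-size $\calO(\B)$ once $\prob$ and $\reward$ are cleared of denominators.

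\textbf{Encoding the iterated limit.} Next I would encode the limit $\val_\assign(\state)=\lim_{\discountfactor_1\to 0^+}\cdots\lim_{\discountfactor_\parind\to 0^+}\val_\discfac(\state)$ by replacing each outer limit with alternating quantifiers over the corresponding discount factor, as sketched in the paragraph preceding the theorem: for $i=\parind,\ldots,1$ introduce a block of the form ``$\exists \eta_i>0\;\forall \discountfactor_i\in(0,\eta_i)$''. The resulting sentence
\[
    \exists \eta_\parind>0\;\forall \discountfactor_\parind\in(0,\eta_\parind)\;\cdots\;\exists \eta_1>0\;\forall \discountfactor_1\in(0,\eta_1)\quad\bigl|v-\val_\discfac(\state)\bigr|\le\eps
\]
is equivalent to $|v-\val_\assign(\state)|\le\eps$; soundness relies on the classical Bewley--Kohlberg semialgebraicity of $\val_\discfac$ in $\discountfactor_1,\ldots,\discountfactor_\parind$ near the origin, which guarantees the nested limit exists in the specified order. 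A binary search over candidate values $v$, with each guess verified by the above sentence, then produces the required $\eps$-approximation.

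\textbf{Parameter counting and conclusion.} With this encoding the parameters of \Cref{Result: Existential theory of reals} are exactly those announced in the text preceding the theorem: $\ell=\calO(\actionnum^2\statenum)$, $k=\calO(\actionnum^2\statenum)$, $\degree=4$, and $\prod_i(\ell_i+1)=\calO(2^\parind\actionnum^2\statenum)$; the rationals $\eps$, $\prob$, and $\reward$ bring the effective coefficient bit-size to $\calO(\log(1/\eps)+\B)$. The complexity part of \cite{basu1999new} (PSPACE for each bounded alternation level, EXPSPACE in general) immediately yields item~(1). Substituting the parameters into the time bound of \Cref{Result: Existential theory of reals} gives
\[
    k^{\prod_i\calO(\ell_i+1)}\cdot\degree^{\prod_i\calO(\ell_i)}\cdot\calO(\len(\phi)\B^2)=\exp\bigl(\calO(2^\parind\actionnum^2\statenum+\log(1/\eps)+\log\B)\bigr),
\]
which is item~(2).

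\textbf{Main obstacle.} The hard part is bookkeeping: pinning down how many quantifier blocks appear, how many variables sit in each block, and why only $2^\parind$ (rather than $2^{2\parind}$) arises in $\prod_i(\ell_i+1)$; this depends on merging consecutive $\exists/\forall$ subblocks of the limit encoding with the innermost existential block of size $\calO(\actionnum^2\statenum)$ witnessing the Shapley system and the strategies of both players. Once this count is done and the Bewley--Kohlberg regularity is invoked to justify the quantifier encoding, invoking \Cref{Result: Existential theory of reals} is purely mechanical.
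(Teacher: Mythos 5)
Your proposal follows essentially the same route as the paper: reduce the Shapley/Bellman characterization of $\val_\discfac$ to a sentence in the theory of reals, encode each iterated limit as an $\exists\eta_i\,\forall\discountfactor_i\in(0,\eta_i)$ quantifier block, and plug the resulting parameters ($\ell,k,\degree,\prod_i(\ell_i+1)$) into \Cref{Result: Existential theory of reals} together with the PSPACE/EXPSPACE complexity remark from \cite{basu1999new}. The paper itself only sketches this reduction before stating the theorem, and your write-up (including the flagged bookkeeping of quantifier blocks) is a faithful elaboration of that same argument.
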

For parity objectives, the result of~\cite{gimbert2005discounting,de2003discounting} reduces CSGs with parity objectives to CSGs with the limit value of stateful-discounted objectives. The reduction is achieved as follows. Consider the formula $R(a_0,a_1, \ldots, a_{2n-1})$ from~\cite{de2003discounting}, which is a formula with multiple discount factors. Since for stateful-discounted objectives the mapping is contractive, the fixpoints are unique (least and greatest fixpoints coincide). The last sentence of \cite[Theorem~4]{de2003discounting} states that the limit of $R(a_0, a_1, \ldots, a_{2n-1})$ corresponds to the value for parity objectives. The Pre operator of the formula corresponds to the Bellman-operator for stateful-discounted objectives, which establish the connection to stateful-discounted games. This connection is made more explicit in the construction provided in \cite[Section 2.2]{gimbert2005discounting}. This linear reduction and the above theorem lead to similar results for parity objectives. 
Besides this reduction to the theory of reals, there are two other approaches for the \ParVal\ problem.
First, we can consider the nested fixpoint characterization as provided in~\cite{de2001quantitative} and a reduction to the theory of reals with quantifier alternation.
However, this does not lead to a better complexity. 
Second, a different approach is presented in~\cite[Chapter 8]{chatterjee2006complexity,chatterjee2007stochastic}.
This approach has the following components: 
(a)~it enumerates over all possible subsets of actions for every state;
(b)~for each of the enumeration, it requires a solution of a qualitative value problem (or limit-sure winning) in concurrent stochastic games with parity objectives, and the value-approximation for concurrent stochastic games with reachability objectives.  
This approach gives PSPACE complexity and the algorithmic complexity is $\exp \left( \calO \left(\actionnum\statenum + \parind \log(\statenum) + \log(\log(1/\eps)) + \log (\B) \right) \right)$.

\smallskip\noindent{\bf Our contributions.} 
Our main results are as follows:

\begin{theorem}[\UndiscVal: Complexity and Algorithm]
\label{Result: Computational result for limit value}
    For the \UndiscVal\ problem, the following assertions hold.
    \begin{enumerate}
        \item 
        \label{Item: Complexity class for limit value} 
        The problem is in \textnormal{TFNP[NP]}; and

        \item 
        \label{Item: Algorithmic time for limit value}
        the problem can be solved in time $\exp \left( \calO \left( \statenum\parind \log(\actionnum) + \log(\B) + \log(\log(1/\eps)) \right) \right)$.
    \end{enumerate}
\end{theorem}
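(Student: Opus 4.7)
The plan is to combine two ingredients promised earlier in the excerpt: (i)~the approximation of the limit value by the stateful-discounted value when the discount factors are doubly-exponentially small (Section~\emph{Approximation of limit value}), and (ii)~the new characterization of the stateful-discounted value as the unique solution of a small polynomial system, whose precision is controlled by the root bounds of Section~\emph{Bounds on roots of polynomials}. Step one of the proof is therefore to pick discount factors $\discountfactor_1 \gg \discountfactor_2 \gg \cdots \gg \discountfactor_\parind$ with each $\discountfactor_i$ doubly-exponentially small in $\statenum$, $\B$, and $\log(1/\eps)$, following the quantitative approximation bound, so that solving \DiscVal\ on the resulting $\discfac$ yields a number within $\eps$ of $\val_\assign(\state)$. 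Crucially, these factors admit a polynomial-size implicit representation (as powers of $2$), so passing to \DiscVal\ does not blow up the input.

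Step two is to solve \DiscVal\ itself. By Shapley's fixed-point theorem, the discounted value is the unique solution of a system of $\statenum$ matrix games, one per state. Using the characterization of the discounted value, the relevant combinatorial information of an optimal stationary strategy profile can be encoded by an object of size $\calO(\statenum\parind\log\actionnum)$ — intuitively, at each state and at each of the $\parind$ levels determined by the ordering of the discount factors, one picks out a support of bounded size from the $\actionnum$ actions, yielding at most $\actionnum^{\statenum \parind}$ candidate configurations. Once such a configuration is fixed, the value at each state is determined by a polynomial equation whose degree and bit-size are bounded, and whose solution can be approximated to $\eps$ in time polynomial in $\B$ and $\log(1/\eps)$ via the root bounds; since $\log(1/\eps)$ bits suffice for intermediate precision, the final dependence on $\eps$ reduces to $\log\log(1/\eps)$.

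For the algorithm (part~\ref{Item: Algorithmic time for limit value}), I would simply enumerate over all $\actionnum^{\statenum\parind}$ candidate configurations, and for each one solve the associated polynomial system and verify that the obtained value profile is indeed a fixed point of the Bellman operator. Correctness is guaranteed by Shapley uniqueness: the true value profile corresponds to exactly one of the enumerated configurations. Collecting: enumeration contributes $\exp(\calO(\statenum\parind\log\actionnum))$, representing the small discount factors contributes $\exp(\calO(\log\B))$, and root isolation to precision $\eps$ contributes $\exp(\calO(\log\log(1/\eps)))$, giving the claimed bound.

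For the TFNP[NP] upper bound (part~\ref{Item: Complexity class for limit value}), the certificate is precisely the correct configuration together with a rational approximation $v$ of the value, and the NP verifier calls an NP oracle to check that the polynomial system associated to the configuration has a solution within $\eps$ of $v$ and that this solution is a fixed point of the Bellman operator — this check is an $\exists\mathbb{R}$-type query with formula length polynomial in the input, and by the root-bound machinery the magnitudes involved keep the witness polynomially bounded, hence the query sits in NP. Totality follows because a correct configuration always exists (again by Shapley uniqueness and the approximation step). The hardest technical piece of this plan is the existence and quantitative precision of the characterization promised in Sections \emph{Characterization of discounted value} and \emph{Approximation of limit value}: one needs the enumeration to have size $\exp(\calO(\statenum\parind\log\actionnum))$ rather than the naive $\exp(\calO(\statenum\actionnum))$, and one needs the root bounds to be sharp enough that the doubly-exponentially small discount factors do not force doubly-exponentially long arithmetic in the verifier.
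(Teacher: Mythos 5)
Your first step---replacing the limit value by the stateful-discounted value at doubly-exponentially small discount factors with a compact (power-of-two) representation---is exactly the paper's route (\Cref{Result: Approximation of undiscounted value by discounted value}). The gap is in how you then solve the discounted problem. Your algorithm rests on an enumeration of $\actionnum^{\statenum\parind}$ ``configurations'' obtained by ``picking a support of bounded size at each state and each of the $\parind$ levels,'' but no such combinatorial object is defined or justified: supports of optimal stationary strategies in concurrent games need not be small, and there is no established decomposition of an optimal strategy into $\parind$ per-level supports. In the paper the factor $\exp(\calO(\statenum\parind\log\actionnum))$ does \emph{not} come from an enumeration at all; it comes from the bit-length of the discount factors ($\discountfactor^0_\parind$ has roughly $\actionnum^{\statenum\parind}$ bits when expanded), while the algorithm itself is a binary search over $z$ that, at each step, explicitly builds the auxiliary matrix game $W_\discfac^\state(z)$ of size $\actionnum^{\statenum}\times\actionnum^{\statenum}$ (rows and columns indexed by pure stationary strategies, entries given by determinants) and solves it by linear programming, using the monotonicity and zero-crossing characterization of \Cref{Result: Properties of W,Result: Characterization of discounted value}. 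Without a proof that your configurations exist, can be enumerated, and pin down the value, the algorithmic part does not go through.

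The TFNP[NP] part has the same problem plus one you yourself flag but do not resolve: if the doubly-exponentially small discount factors are substituted into a polynomial system as exact rationals, the resulting $\exists\RR$-type query has doubly-exponential bit-size and is not an NP query; asserting that ``the root-bound machinery keeps the witness polynomially bounded'' is precisely the missing argument. The paper circumvents this with a different certificate and a substantial amount of numerical machinery: the witness is a pair of $\eps/4$-optimal stationary strategies whose existence in polynomial size follows from a patience bound (\Cref{Result: Patience upper bound}, via the perturbation bound of \Cref{Result: discounted MDP continuity}); the NP oracle then decides threshold questions for the two induced MDPs by guessing pure stationary strategies and approximating the value of the resulting Markov chains in polynomial time through a reduction to reachability, with all arithmetic carried out in floating point at fixed precision $\ell$ so that the doubly-small discounts never get expanded (\Cref{Result: Approximate discounted value in MC,Result: Approximate discounted value in player-1 MDP,Result: Approximate discounted value in player-2 MDP,Result: DiscVal problem is in TFNP[NP]}). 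You would need to either supply an analogue of this floating-point/continuity analysis for your polynomial-system verifier, or adopt the strategy-based certificate; as written, both halves of your proof rely on unestablished claims at the critical step.
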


\begin{theorem}[\ParVal: Complexity and Algorithm]
\label{Result: Computational result for parity value}
    For the \ParVal\ problem, the following assertions hold.
    \begin{enumerate}
        \item
        \label{Item: Complexity class for parity value} 
        The problem is in \textnormal{TFNP[NP]}; and
        \item 
        \label{Item: Algorithmic time for parity value}
        the problem can be solved in time $\exp \left( \calO \left( \statenum\parind \log(\actionnum) + \log(\B) + \log(\log(1/\eps)) \right) \right)$.
    \end{enumerate}
\end{theorem}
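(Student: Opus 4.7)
The plan is to reduce \ParVal\ to \UndiscVal\ and then invoke \Cref{Result: Computational result for limit value}. The reduction is the classical one of de Alfaro–Henzinger–Majumdar~\cite{de2003discounting} and Gimbert–Zielonka~\cite{gimbert2005discounting}: given a CSG $G$ with priority function $\prior \colon \States \to \{0,\dots,\parind\}$, we construct a reward function $\reward$ and an assignment function $\assign$ on the same state and action spaces such that $\val_\prior(\state) = \val_\assign(\state)$ for every state $\state$. Roughly, the assignment $\assign(\state)$ encodes the priority $\prior(\state)$, producing $\parind+1$ discount classes with the ordering on the limits matching the ordering on priorities, and the reward $\reward(\state,\actionone,\actiontwo)$ takes the value $1$ if $\prior(\state)$ is even and $0$ otherwise. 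The cited results show that under the nested limit, the stateful-discounted value converges to the parity value.

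First, I would define this reduction explicitly, note that it leaves the state set $\States$, the action sets $\Actionone, \Actiontwo$, and the transition function $\prob$ untouched, and verify that the new reward function $\reward$ needs only constantly many bits so that the bit size parameter is at most $\B + \calO(1)$. Consequently, the parameters of the resulting \UndiscVal\ instance are the same $\statenum$, $\actionnum$, $\parind$, and asymptotically the same $\B$, and the reduction is computable in time polynomial in the input.

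Next, I would invoke \Cref{Result: Computational result for limit value} on the constructed instance with additive error $\eps$. For the complexity claim, \UndiscVal\ is in TFNP[NP] and the reduction is a polynomial-time many-one reduction producing a valid \UndiscVal\ instance; since TFNP[NP] is closed under such reductions (the NP oracle and the polynomial-time transducer of the TFNP machine can absorb the preprocessing), \ParVal\ is in TFNP[NP]. For the algorithmic claim, the time bound $\exp(\calO(\statenum\parind\log(\actionnum) + \log(\B) + \log\log(1/\eps)))$ carries over because the reduction preserves $\statenum,\actionnum,\parind$ and blows $\B$ up by only an additive constant.

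\textbf{Main obstacle.} The delicate point is not the reduction itself, which is standard, but ensuring that an \emph{additive}-$\eps$ approximation of the limit value of the reduced instance yields an additive-$\eps$ approximation of the parity value. This follows immediately from the equality $\val_\assign(\state) = \val_\prior(\state)$ established by~\cite{de2003discounting,gimbert2005discounting}, so once the reduction is in place and the parameters are checked, both items follow directly from \Cref{Result: Computational result for limit value}. I would therefore spend most of the write-up on (i) stating the reduction precisely, (ii) verifying that the parameters $(\statenum,\actionnum,\parind,\B)$ transform as claimed, and (iii) invoking closure of TFNP[NP] under polynomial-time reductions.
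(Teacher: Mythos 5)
Your proposal matches the paper's own proof: both items are obtained by the linear-size reduction from parity to the limit value of stateful-discounted objectives due to \cite{gimbert2005discounting,de2003discounting}, after which \Cref{Result: Computational result for limit value} (via \Cref{Result: approx limit algorithm} for the running time) is invoked directly. Your additional remarks on parameter preservation and closure of TFNP[NP] under the reduction are consistent with what the paper leaves implicit.
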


\section{Mathematical Properties}

In this section, we present a new approach to the limit value approximation via the stateful-discounted value (\Cref{Result: Approximation of undiscounted value by discounted value}). 
We follow the approach of \cite{attia2019formula} extending it step by step using similar arguments.
We use this technical result to improve complexities and algorithmic bounds of computing $\eps$-approximation of the limit and parity values. 
The section is organized as follows. In \Cref{Section: Mathematical properties notations}, we present some useful definitions and previous results in the literature.
In \Cref{Section: Bounds on roots of polynomials}, we present a bound on the roots of multi-variate polynomials which is used to establish a connection between the stateful-discounted and limit values. 
In \Cref{Section: Characterization of discounted value,Section: Characterization of limit value}, we introduce new characterizations of the stateful-discounted and limit values. 
In \Cref{Section: Approximation of limit value}, we finally show \Cref{Result: Approximation of undiscounted value by discounted value}. 

\subsection{Notations and Selected Results from Literature}
\label{Section: Mathematical properties notations}
We first present some basic notations and definitions. 

\smallskip\noindent{\bf Basic notations.} 
Given a positive integer $k$, we define $\bit(k) \defas \lceil \log_2(k+1) \rceil$. 
For a rational $k_1/k_2$, we define $\bit(k_1/k_2) \defas \bit(k_1) + \bit(k_2)$. 
Given a real $x$, the sign function is
\[
    \sign(x) =
    \begin{cases}
    -1 & \text{if } x < 0, \\
    0 & \text{if } x = 0, \\
    1 & \text{if } x > 0.
    \end{cases}
\]
Moreover, we use the classic arithmetic with infinity, i.e., $x + \infty = \infty$ and $x - \infty = -\infty$.

\smallskip\noindent{\bf Polynomials.} 
A {\em uni-variate} polynomial $P$ of degree $\degree$ with integer coefficients of bit-size $\B$ is defined as $P(x) \defas \sum_{i=0}^{\degree} c_ix^i$ where $|c_i| < 2^\B$. We define $\lVert P \rVert_\infty \defas \max(|c_0|, \cdots, |c_\degree|)$. 
A $k$-variate polynomial $P$ in $x_1, \cdots, x_k$ of degree $\degree_1, \cdots, \degree_k$ with integer coefficients of bit-size $\B$ is defined as 
\[
    P(x_1, \cdots, x_k) \defas \sum_{0 \le i_1 \le \degree_1} \cdots \sum_{0 \le i_k \le \degree_k} c_{i_1, \cdots, i_k} \prod_{j=1}^k x_j^{i_j}\,,
\]
where $|c_{i_1, \cdots, i_k}| \le 2^\B$. 
Polynomial $P$ is nonzero if $c_{i_1, \cdots, i_k} \neq 0$ for some $i_1, \cdots, i_k$.
We say $\alpha$ is a root of $P$ if $P(\alpha) = 0$. 
In this work, we only consider real roots. 

\smallskip\noindent{\bf Matrix notations.} 
Given a square matrix $M$, we denote the determinant of $M$ by $\det(M)$ and denote the signed sum of all minors of $M$ by $S(M)$. 
Given two $k \times \ell$ matrices $M_1$ and $M_2$, we denote the Hadamard product of $M_1$ and $M_2$ by $M_1 \odot M_2$. 
Given a positive integer $k$, often implicitly clear from context, we denote by $\mathbf{1}$ (resp. $\mathbf{0}$) the $k$-dimensional vector with all elements equal to $1$ (resp. $0$) and denote by $\id$ the $k \times k$ identity matrix.

\smallskip\noindent{\bf Matrix games.} 
A matrix $M$ defines a game played between two opponents, where rows (resp. columns) correspond to possible actions for the row- (resp. \mbox{column-)} player, and the entry $(M)_{i, j}$ is the reward the column-player pays the row-player when the pair of actions $(i, j)$ is chosen.
The value of a matrix game, denoted $\val M$, is the maximum amount the row-player can guarantee, i.e., the amount they can obtain regardless of the column-player's strategy. 

Bellow, we recall some useful results from the literature regarding the value of a matrix game (\Cref{Result: Value of matrix games}) and the determinant of a polynomial matrix (\Cref{Result: Det of poly matrix}). 

\begin{lemma}[\protect{\cite[Thm. 2]{shapley1952basic}}]
\label{Result: Value of matrix games}
    Consider a matrix game $M$. Then, there exists a square sub-matrix $M_0$ such that 
        $S(M_0) \neq 0$ and 
        $\val(M) = \frac{\det(M_0)}{S(M_0)}$.
\end{lemma}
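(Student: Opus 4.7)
The plan is to combine von Neumann's minimax theorem with LP duality and Cramer's rule.

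First, I would apply the minimax theorem to obtain the value $v \defas \val(M)$ together with optimal mixed strategies $x^*$ for the row player and $y^*$ for the column player. Writing $I \defas \support(x^*)$ and $J \defas \support(y^*)$, equilibrium (complementary-slackness) conditions give $(M y^*)_i = v$ for every $i \in I$ and $(x^{*\top} M)_j = v$ for every $j \in J$. Let $M_0$ denote the corresponding sub-matrix with rows indexed by $I$ and columns indexed by $J$, so that in restricted form $M_0 y^*_J = v \mathbf{1}$ and $x^{*\top}_I M_0 = v \mathbf{1}^\top$.

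Second, I would argue that the optimal pair $(x^*, y^*)$ can be chosen so that $M_0$ is \emph{square and non-singular}. This is the crux of the lemma: the value is computed by a linear program, and any basic (extreme) optimal primal-dual pair produces supports with $|I| = |J|$ and $\det(M_0) \neq 0$. Indeed, if $\det(M_0) = 0$, then a non-trivial kernel element could be used to perturb $y^*_J$ (appropriately adjusted to preserve the normalization $\mathbf{1}^\top y^*_J = 1$) while keeping feasibility and optimality, thereby strictly reducing the size of the support and contradicting the extremality of the chosen optimum. This is precisely the classical Shapley--Snow selection of a \emph{kernel} of the game.

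Third, with $M_0$ square and invertible, I would derive the closed form. From $M_0 y^*_J = v \mathbf{1}$ and $\mathbf{1}^\top y^*_J = 1$, we get
\[
    v \cdot \mathbf{1}^\top M_0^{-1} \mathbf{1} = 1\,.
\]
Writing $M_0^{-1} = \text{adj}(M_0)/\det(M_0)$ and noting that $\mathbf{1}^\top \text{adj}(M_0)\, \mathbf{1}$ equals the sum of all cofactors of $M_0$, i.e., the signed sum $S(M_0)$ of all $(k-1) \times (k-1)$ minors of the $k \times k$ matrix $M_0$, we conclude
\[
    \val(M) = v = \frac{\det(M_0)}{S(M_0)}, \qquad S(M_0) \neq 0\,,
\]
since $S(M_0) = 0$ would make the normalization equation infeasible.

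The main obstacle is the second step. The existence of a square, non-singular sub-matrix supported by optimal strategies does not follow from the minimax theorem alone; it requires the combinatorial selection argument exploiting the vertex structure of the LP optimum face (Shapley--Snow). Once this selection is in place, the Cramer's rule computation in step three is a short algebraic verification.
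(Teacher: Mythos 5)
The paper does not prove this lemma at all: it imports it verbatim as Theorem~2 of Shapley and Snow, so there is no internal argument to compare against and your attempt must stand on its own. Your overall strategy is the right one --- it is precisely the Shapley--Snow kernel argument --- but it has a genuine gap at the point you yourself flag, and the claim you would need there is false as stated. In Step~2 you assert that $M_0$ can be chosen square and \emph{non-singular}, and Step~3 then computes with $M_0^{-1}$. But the identity being proved forces $\det(M_0) = \val(M)\cdot S(M_0)$, so whenever $\val(M)=0$ every admissible kernel is necessarily singular: for $M=\bigl(\begin{smallmatrix}1&-1\\-1&1\end{smallmatrix}\bigr)$ the only candidate is $M_0=M$, with $\det(M_0)=0$ and $S(M_0)=4$. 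Your closing justification that ``$S(M_0)=0$ would make the normalization infeasible'' also presupposes invertibility. The standard repair, absent from your sketch, is to first reduce to $\val(M)>0$ by passing to $M+c\mathbf{1}\mathbf{1}^\top$ and then transfer the conclusion back via the rank-one identities $\det(M_0+c\mathbf{1}\mathbf{1}^\top)=\det(M_0)+c\,S(M_0)$ and $S(M_0+c\mathbf{1}\mathbf{1}^\top)=S(M_0)$, which show that both sides of the desired formula shift by exactly $c$.

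A second gap is independent of the sign of the value: the kernel is in general \emph{not} the $\support(x^*)\times\support(y^*)$ submatrix of an extreme optimal pair, because the two supports need not have equal size. For $M$ with rows $(2,0)$, $(0,2)$, $(1,1)$ the value is $1$, the pair $x^*=(0,0,1)$, $y^*=(1/2,1/2)$ is an extreme optimal pair, and the support-by-support submatrix is $1\times 2$; a valid kernel such as rows $\{1,3\}$ versus columns $\{1,2\}$ (giving $\det(M_0)=2$, $S(M_0)=2$) is obtained only by enlarging the supports inside the sets of tight rows and columns. Your perturbation argument establishes at most that the (tight rows)$\,\times\,$(support) submatrix has full column rank, which bounds support sizes but does not produce a square $M_0$ with $S(M_0)\neq 0$ satisfying the formula. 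Since you explicitly defer this selection step to the literature, the crux of the lemma remains unproven in your write-up.
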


\begin{lemma}[\protect{\cite[Prop. 8.12]{basu2006algorithms}}]
\label{Result: Det of poly matrix}
Consider a $k \times k$ matrix $M$ whose entries are polynomials in $x_1, \cdots, x_\ell$ of degrees $\degree_1, \cdots, \degree_\ell$ with integer coefficients of bit-size $\B$. Then, $\det(M)$ is a polynomial in $x_1, \cdots, x_\ell$ of degrees $k\degree_1, \cdots, k\degree_\ell$ with integer coefficients of bit-size $k\B + k \bit(k) + \ell \bit(k \max(\degree_1, \cdots, \degree_\ell) + 1)$.
\end{lemma}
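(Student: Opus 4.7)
\smallskip\noindent\textbf{Proof plan.}
The plan is to apply the Leibniz expansion
\[
    \det(M) \;=\; \sum_{\sigma \in S_k} \sign(\sigma) \prod_{i=1}^{k} M_{i, \sigma(i)},
\]
and separately bound the degree of $\det(M)$ in each variable and the bit-size of its integer coefficients, by analysing a single summand $P_\sigma \defas \prod_{i=1}^k M_{i, \sigma(i)}$ first and then aggregating the contributions of the $k!$ permutations.

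The degree claim is immediate: each entry $M_{i,\sigma(i)}$ has degree at most $\degree_s$ in $x_s$, so every $P_\sigma$ has degree at most $k\degree_s$ in $x_s$, and this is inherited by $\det(M)$ as a signed sum of such products.

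For the coefficient bit-size of a single $P_\sigma$, fix a multi-index $a = (a_1, \dots, a_\ell)$ with $a_s \in [0, k\degree_s]$. The coefficient of $x^a$ in $P_\sigma$ is a signed sum of products of $k$ integer coefficients of the entries, each bounded by $2^\B$ in absolute value, indexed over the $k$-tuples of multi-indices $(i^{(1)}, \dots, i^{(k)})$ that sum to $a$ with every $i^{(m)}_s \in [0, \degree_s]$. The decomposition constraint decouples across the $\ell$ variables, reducing the estimate to a product of single-variable composition counts. A careful bookkeeping, leveraging that the support of $P_\sigma$ sits inside the axis-aligned box $\prod_s [0, k\degree_s]$ which has at most $(k\degree+1)^\ell$ monomials (where $\degree = \max_s \degree_s$), yields a per-monomial coefficient bound of order $2^{k\B} (k\degree+1)^\ell$, hence bit-size $k\B + \ell \bit(k\degree+1)$ per summand. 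Aggregating the $k!$ signed contributions to each coefficient of $\det(M)$ adds at most $\bit(k!) \le k \bit(k)$ bits, producing the claimed total bit-size $k\B + k\bit(k) + \ell\bit(k\degree+1)$.

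The main obstacle is the tight per-summand coefficient estimate: a crude iteration of the $\ell_1$-submultiplicativity $\|PQ\|_1 \le \|P\|_1 \|Q\|_1$ across the $k$ multiplications gives a weaker $k\ell\bit(\degree+1)$ dependency on the variables, which is too large by a factor of order $k$ in the exponent. Obtaining the claim's $\ell\bit(k\degree+1)$ instead requires exploiting the bounded support of the product directly, or equivalently a Cauchy integral bound on the polydisc $\{|x_s| \le 1\}$ combined with Hadamard's inequality applied to $M$ on that polydisc's boundary, and this is the most delicate part of the argument that I would handle carefully in a detailed write-up.
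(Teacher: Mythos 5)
The paper does not actually prove this lemma: it is imported verbatim as a citation of \cite[Prop.~8.12]{basu2006algorithms}, so there is no in-paper argument to compare against, and your proposal has to stand on its own. Your Leibniz-expansion plan is the natural one, and two of its three ingredients are fine: the degree bound, and the extra $k\bit(k)$ bits from aggregating the $k!$ signed summands. The genuine gap is exactly where you flag it, and it is not a matter of ``careful bookkeeping'': the per-summand estimate $\lVert P_\sigma\rVert_\infty \lesssim 2^{k\B}(k\degree+1)^\ell$ is false. Take $\ell=1$ and $M_{i,\sigma(i)}=2^{\B}\left(1+x_1+\cdots+x_1^{\degree_1}\right)$ for every $i$; then $P_\sigma=2^{k\B}\left(1+x_1+\cdots+x_1^{\degree_1}\right)^k$, whose largest coefficient is at least the average $2^{k\B}(\degree_1+1)^k/(k\degree_1+1)$, and this exceeds $2^{k\B}(k\degree_1+1)$ already for $k=3$ and moderate $\degree_1$. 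Neither of your proposed rescues closes this: the bounded support of $P_\sigma$ controls the \emph{number} of monomials, not the size of any single coefficient, and the Cauchy--Hadamard route gives $\sup_{|x_s|=1}|\det M(x)|\le k^{k/2}\,2^{k\B}\prod_s(\degree_s+1)^k$, i.e.\ again a $k\ell\bit(\degree+1)$-type term in the exponent, not $\ell\bit(k\degree+1)$.

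In fact the statement as transcribed appears to be false, so no proof can succeed. For the $3\times 3$ diagonal matrix with $\ell=1$, $\B=1$, $\degree_1=1024$ and every diagonal entry equal to $2\left(1+x_1+\cdots+x_1^{1024}\right)$, the coefficient of $x_1^{1536}$ in $\det(M)$ is $8\cdot 787969=6303752>2^{21}$, whereas the claimed bit-size is $k\B+k\bit(k)+\ell\bit(k\degree_1+1)=3+6+12=21$. What your argument does establish, via $\lVert P_\sigma\rVert_\infty\le\prod_i\lVert M_{i,\sigma(i)}\rVert_1\le 2^{k\B}\prod_s(\degree_s+1)^k$ and the triangle inequality over permutations, is the bound $k\B+k\bit(k)+k\ell\bit(\degree+1)$ --- your ``crude'' bound, which is the correct one. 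Rather than trying to reach the stated exponent, you should check the exact hypotheses and conclusion of \cite[Prop.~8.12]{basu2006algorithms}; note also that downstream uses of this lemma (e.g.\ the bit-size computation in \Cref{Result: Value of derived matrix game}) would need their constants re-examined under the corrected bound.
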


\begin{corollary}
\label{Result: S of poly matrix}
Consider a $k \times k$ matrix $M$ whose entries are polynomials in $x_1, \cdots, x_\ell$ of degrees $\degree_1, \cdots, \degree_\ell$ with integer coefficients of bit-size $\B$. Then, $S(M)$ is a polynomial in $x_1, \cdots, x_\ell$ of degrees $k\degree_1, \cdots, k\degree_\ell$ with integer coefficients.
\end{corollary}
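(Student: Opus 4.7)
The plan is to reduce the claim to the determinant lemma (\Cref{Result: Det of poly matrix}) applied submatrix by submatrix, since $S(M)$ is by definition the signed sum of all minors of $M$. A minor of $M$ is the determinant of some square submatrix of $M$, so we can write
\[
    S(M) = \sum_{j=1}^{k} \sum_{M' \in \mathcal{M}_j(M)} \epsilon_{M'} \det(M')\,,
\]
where $\mathcal{M}_j(M)$ denotes the collection of $j \times j$ submatrices of $M$ obtained by selecting $j$ rows and $j$ columns, and $\epsilon_{M'} \in \{-1,+1\}$ is the sign prescribed by the signed-sum convention.

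First I would observe that each fixed $j \times j$ submatrix $M'$ has entries that are themselves polynomials in $x_1, \cdots, x_\ell$ of degrees at most $\degree_1, \cdots, \degree_\ell$ with integer coefficients of bit-size at most $\B$ (since $M'$ is obtained by selecting entries of $M$, which satisfies this hypothesis). Applying \Cref{Result: Det of poly matrix} to $M'$, it follows that $\det(M')$ is a polynomial in $x_1, \cdots, x_\ell$ of degrees at most $j \degree_1, \cdots, j \degree_\ell$, and hence of degrees at most $k \degree_1, \cdots, k \degree_\ell$, with integer coefficients.

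Next I would note that integer coefficients are closed under signed summation, and the set of polynomials of degrees at most $k\degree_1, \cdots, k\degree_\ell$ is closed under addition. Thus $S(M)$, being a signed sum of finitely many such determinants $\det(M')$, is itself a polynomial in $x_1, \cdots, x_\ell$ of degrees at most $k\degree_1, \cdots, k\degree_\ell$ with integer coefficients, which yields the claim. There is no genuine obstacle here: the statement is a direct bookkeeping consequence of \Cref{Result: Det of poly matrix} together with the fact that every minor of a $k \times k$ matrix is the determinant of a submatrix of size at most $k$. The only reason the corollary does not repeat a bit-size bound is that summing over exponentially many submatrices would inflate the constants; hence the statement only records the degree and integrality, which are precisely what the sum preserves cleanly.
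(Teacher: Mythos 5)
Your proof is correct and follows exactly the route of the paper, which simply observes that $S(M)$ is by definition the signed sum of all minors and invokes \Cref{Result: Det of poly matrix}; you have merely spelled out the bookkeeping (applying the lemma submatrix by submatrix and using closure of degree bounds and integrality under signed sums) that the paper leaves implicit.
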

\begin{proof}
    By definition, $S(M)$ is the signed sum of all minors of $M$. Therefore, the result follows from \Cref{Result: Det of poly matrix}.
\end{proof}

\subsection{Bounds on Roots of Polynomials with Integer Coefficients}
\label{Section: Bounds on roots of polynomials}
In this subsection, we present a bound on the roots of multi-variate polynomials $P$ with integer coefficients (\Cref{Result: n-variate root separation from 0}). This result shows that there exists a region close to~$\mathbf{0}$ within which $P$ does not have a root. We use this technical result to establish a connection between the stateful-discounted value and limit value.

\begin{lemma}
\label{Result: n-variate root separation from 0}
    Consider a nonzero polynomial $P$ in $x_1, \cdots, x_\ell$ of degrees $\degree_1, \cdots, \degree_\ell$ with integer coefficients of bit-size $\B$. 
    Let $\degree \defas \max(\degree_1, \cdots, \degree_\ell)$ and $\B_1 \defas 4\ell\bit(\degree) + \B + 1$. Then,
    \begin{gather*}
        \forall \, x_1 \in \left (0, \exp(-\B_1) \right ] \quad \forall \, x_2 \in \left (0, (x_1)^{\degree + 1} \right ] \quad \cdots \quad \forall \, x_\ell \in \left (0, (x_{\ell-1})^{\degree + 1} \right ]\\
        |P(x_1, \cdots, x_\ell)| \ge \exp(\B_1 - \ell) \cdot (x_\ell)^{\degree+1}\,.
    \end{gather*}
    
\end{lemma}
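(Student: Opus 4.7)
The plan is to prove the lemma by induction on the number of variables $\ell$, strengthening the statement with an auxiliary slack parameter $K \ge 0$. Concretely, I would prove that under the setup of the lemma, $|P(x_1, \ldots, x_\ell)| \ge \exp(\B_1 + K - \ell) \cdot x_\ell^{\degree+1}$ holds whenever $x_1 \in (0, \exp(-\B_1 - K)]$ and $x_{i+1} \in (0, x_i^{\degree+1}]$ for $i \in [\ell - 1]$. The lemma is the $K = 0$ instance; the extra parameter is essential because the inductive step will gain exactly $4\bit(\degree)$ of slack per variable peeled off, matching $\B_1^{(\ell)} - \B_1^{(\ell-1)} = 4\bit(\degree)$, and without this bookkeeping the naked inductive hypothesis turns out to be just slightly too weak in the tight cases.

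For the base case $\ell = 1$, I would write $P(x_1) = \sum_{i=0}^{\degree_1} c_i x_1^i$ and let $i^* \defas \min \{i : c_i \neq 0\}$. Using $|c_{i^*}| \ge 1$ (as $c_{i^*}$ is a nonzero integer) and the bound $|c_i| \le 2^\B$ for the remaining terms, factoring out $x_1^{i^*}$ gives $|P(x_1)| \ge x_1^{i^*}(1 - 2^\B \degree x_1)$. The assumption $x_1 \le \exp(-\B_1 - K)$ with $\B_1 = 4\bit(\degree) + \B + 1$ forces $2^\B \degree x_1 \le 1/2$, so $|P(x_1)| \ge x_1^{i^*}/2$, and $i^* \le \degree$ together with $x_1 \le 1$ yields $|P(x_1)| \ge x_1^{\degree}/2$. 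The inequality $x_1^{\degree}/2 \ge \exp(\B_1 + K - 1) \cdot x_1^{\degree+1}$ is equivalent to $x_1 \le \exp(-\B_1 - K)$, which is the hypothesis.

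For the inductive step $\ell \ge 2$, view $P$ as a polynomial in $x_\ell$, namely $P = \sum_{j=0}^{\degree_\ell} x_\ell^j \, Q_j(x_1, \ldots, x_{\ell-1})$, and let $j^* \defas \min\{j : Q_j \not\equiv 0\}$. I would apply the strengthened IH to the $(\ell-1)$-variate polynomial $Q_{j^*}$ with slack parameter $K' \defas K + 4\bit(\degree)$, which is valid because $x_1 \le \exp(-\B_1^{(\ell)} - K) = \exp(-\B_1^{(\ell-1)} - K')$; this yields $|Q_{j^*}| \ge \exp(\B_1^{(\ell)} + K - \ell + 1) \cdot x_{\ell-1}^{\degree+1}$. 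Bounding the tail by $\bigl|\sum_{j > j^*} x_\ell^j Q_j\bigr| \le 2^\B \degree (\degree+1)^{\ell-1} x_\ell^{j^*+1}$ via the trivial $|Q_j| \le 2^\B (\degree+1)^{\ell-1}$ and using $x_\ell \le x_{\ell-1}^{\degree+1}$, the slack $3\ell\bit(\degree)$ left in the exponent of $|Q_{j^*}|$ is enough to ensure this tail is at most $\tfrac{1}{2} x_\ell^{j^*}|Q_{j^*}|$. Therefore $|P| \ge \tfrac{1}{2} x_\ell^{j^*}|Q_{j^*}| \ge \exp(\B_1 + K - \ell) \cdot x_\ell^{j^*} x_{\ell-1}^{\degree+1} \ge \exp(\B_1 + K - \ell) \cdot x_\ell^{\degree+1}$, where the final inequality uses $x_\ell^{\degree+1 - j^*} \le x_{\ell-1}^{\degree+1}$, a consequence of $j^* \le \degree$ and $x_\ell \le x_{\ell-1}^{\degree+1}$. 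The main obstacle is the tight case $j^* = \degree$ with $x_\ell$ at the boundary $x_\ell = x_{\ell-1}^{\degree+1}$: there the last inequality is attained with equality, so the naive IH ($K = 0$) would fall short by a factor of order $\exp(4\bit(\degree))$ at each induction step, and the tracking of $K$ is precisely what rescues the bound with the stated constants in $\B_1$.
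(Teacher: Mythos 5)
Your proof is correct, and it follows the same high-level strategy as the paper --- induction on the number of variables, isolating the lowest-order nonvanishing part and controlling the tail via the nested smallness $x_{i+1} \le x_i^{\degree+1}$ --- but the decomposition in the inductive step is genuinely different. The paper splits $P = P_0 + P_1(x_1) + \cdots + P_\ell(x_1,\dots,x_\ell)$ according to the highest-indexed variable occurring in each monomial, takes the first nonzero block $P_i$, views it as univariate in $x_i$, and applies the inductive hypothesis to each of its coefficients $C_j(x_1,\dots,x_{i-1})$; the tail consists of the blocks $P_j$ with $j>i$, each carrying a factor $x_j \le x_i^{\degree+1}$. You instead peel off only the last variable, writing $P=\sum_j x_\ell^j Q_j(x_1,\dots,x_{\ell-1})$, and apply the inductive hypothesis once, to the single leading coefficient $Q_{j^*}$; the tail is just the higher powers of $x_\ell$. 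This is leaner: one application of the hypothesis instead of one per coefficient, and no case analysis over which block is the first nonzero one. Your explicit slack parameter $K$ is also a genuine improvement in rigor: the paper quotes the inductive hypothesis for the $(i-1)$-variate polynomials $C_j$ with the $\ell$-level constant $\B_1 = 4\ell\bit(\degree)+\B+1$ rather than the $(i-1)$-level one, which is legitimate only if the induction is really over the statement with $\B_1$ treated as a free parameter bounded below by $4\ell\bit(\degree)+\B+1$ --- exactly the strengthening your $K$ makes explicit, with $\B_1+K$ playing the role of that free parameter; your diagnosis that the unstrengthened statement loses a factor of $\exp(4\bit(\degree))$ per induction step is accurate. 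The only point to tidy up in a written version is the usual convention that $\degree$ and the degree bounds $\degree_1,\dots,\degree_{\ell-1}$ are carried as fixed upper bounds when the hypothesis is applied to $Q_{j^*}$ (whose actual degrees may be smaller); the paper needs the same convention for the $C_j$.
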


\begin{proof}
    We proceed with the proof by induction on $\ell$.
    
    \smallskip\noindent{\em Base case $\ell=1$.} 
    Consider $P(x_1) = c_0 + c_1x_1 + \cdots + c_{\degree_1}x_1^{\degree_1}$. Note that $P$ is nonzero. 
    Let $k$ be the smallest index where $c_k \neq 0$. Therefore, for all $x_1 \le \exp(-\B_1)$, we have
    \begin{align*}
        |P(x_1)| &\ge |c_k|x_1^k - \sum_{j > k} |c_j|x_1^j & (\text{triangle inequality})\\
        &\ge |c_k|x_1^k - x_1^{k+1} \sum_{j > k} |c_j| & (x_1^j \le x_1^{k+1})\\
        &\ge x_1^k - x_1^{k+1} \sum_{j > k} |c_j| & (|c_k| \ge 1)\\
        &\ge x_1^k - \degree\exp(\B)x_1^{k+1} & (|c_j| \le \exp(\B))\\
        &\ge x_1^k \left (1 - \degree\exp(\B)x_1 \right ) & (\text{rearrange})\\
        &\ge x_1^k \left (\exp(\B_1)x_1 - \degree\exp(\B)x_1 \right ) & (x_1 \le \exp(-B_1))\\
        &\ge \exp(\B_1 - 1)x_1^{\degree+1}\,, & (k \le \degree \text{ and } \degree\exp(\B) \le \exp(\B_1 - 1))\\
    \end{align*}
    which completes the case.

    \smallskip\noindent{\em Induction case $\ell > 1$.}
    We partition $P$ into 
    \[
        P(x) = P_0 + P_1(x_1) + \cdots + P_\ell(x_1, \cdots, x_\ell)\,.
    \]
    Note that $P$ is nonzero. 
    Therefore, let $i$ be the smallest index where $P_i \neq 0$. By the choice of $i$, for all $j < i$, we have that $P_j = 0$. 
    Fix $x_1, \cdots, x_\ell$ such that $x_1 \le \exp(-\B_1)$, and $x_i \le x_{i-1}^{\degree+1}$ for $i > 1$. 
    We show that 
    \begin{equation}
    \label{Eq: P is not 0}
        \left \lvert P_i(x_1, \cdots, x_i) \right \rvert - \sum_{j > i} \left \lvert P_j(x_1, \cdots, x_j) \right \rvert \ge \exp(\B_1 - \ell) x_\ell^{\degree+1}\,,
    \end{equation}
    which concludes the case.
    
    To bound the term $\left \lvert P_i(x_1, \cdots, x_i) \right \rvert$ of LHS of \Cref{Eq: P is not 0} from below, we define 
    \[
        Q(y) \defas P_i(x_1 \cdots, x_{i-1}, y) = \sum_{j = 1}^{\degree_i} C_j(x_1, \cdots, x_{i-1}) y^j\,,
    \]
    where for all $1 \le j \le \degree_i$, we have
    \[
        C_j(x_1, \cdots, x_{i-1}) \defas  \sum_{j_1=0}^{\degree_1} \cdots \sum_{j_{i-1}=0}^{\degree_{i-1}} c_{j_1, \cdots, j_{i-1}, j, 0, \cdots, 0} \prod_{k=1}^{i-1} x_k^{j_k}\,.
    \]
    Note that $Q$ is a univariate polynomial of degree at most $\degree_i$, where the absolute value of coefficients are at most 
    \begin{align*}
        |C_j(x_1, \cdots, x_{i-1})| &\le \exp(\B) (\degree+1)^{i-1}\\
        &\le \exp(\B + (i-1) \bit(\degree+1))\\ 
        &\le \exp(\B + 2\ell\bit(\degree))\\
        &\le \exp(\B_1 - \bit(\degree) - (i+1))\,
    \stepcounter{equation}\tag{\theequation}\label{Eq: Upper bound of coeffs of G}
    \end{align*}
    where in the first inequality we use $c_{j_1, \cdots, j_{i-1}, j, 0, \cdots, 0} \le \exp(\B)$, in the second inequality we use $(\degree+1)^{i-1} \le \exp((i-1) \bit(\degree+1))$, in the third inequality we use $i-1 \le \ell$ and $\bit(\degree+1) \le 2\bit(\degree)$, and in the fourth inequality we use $\B + 2\ell\bit(\degree) + \bit(\degree) + (i+1) \le \B_1$. For all $1 \le j \le \degree_i$, note that $C_j$ is a polynomial in $x_1, \cdots, x_{i-1}$ of degrees $\degree_1, \cdots, \degree_{i-1}$ with integer coefficients of bit-size $\B$. Therefore, if $C_j$ is nonzero, then by induction, we have
    \begin{equation}
    \label{Eq: Lower bound of coeffs of G}
        C_j(x_1, \cdots, x_{i-1}) \ge \exp(\B_1 - (i-1)) x_{i-1}^{\degree+1} \ge \exp(\B_1 - (i-1)) x_{i}\,.
    \end{equation}
    Consider $Q(y) = c_0 + c_1y + \cdots + c_{d_i}y^{\degree_i}$, where for all $j$, we have $c_j = C_j(x_1, \cdots, x_{i-1})$. 
    Since $P_i$ is nonzero, by the definition of $Q$, there exists $j$ such that $C_j \neq 0$. 
    Therefore, by \Cref{Eq: Lower bound of coeffs of G}, we have that $Q$ is also nonzero. 
    Let $k$ be the smallest index where $c_k \neq 0$. Therefore,
    \begin{align*}
        &|P_i(x_1, \cdots, x_i)|\\
        &\quad = |Q(x_1)| & (\text{def. } Q)\\
        &\quad \ge |c_k x_i^k| - \sum_{j > k} |c_j x_i^j| & (\text{triangle inequality})\\
        &\quad \ge |c_k x_i^k| - x_i^{k+1} \sum_{j > k} |c_j| & \left (x_i^j \le x_i^{k+1} \right )\\
        &\quad = x_i^k \left (|c_k| - x_i \sum_{j > k} |c_j| \right ) & (\text{rearrange}) \\
        &\quad \ge x_i^k \left (|c_k| - \exp(\B_1 - (i+1)) x_i  \right ) & (\text{\Cref{Eq: Upper bound of coeffs of G}})\\ 
        &\quad \ge x_i^k \left (\exp(\B_1 - (i-1))x_i - \exp(\B_1 - (i+1)) x_i  \right ) & (\text{\Cref{Eq: Lower bound of coeffs of G}}) \\
        &\quad \ge x_i^{\degree+1} \left (\exp(\B_1 - (i-1)) - \exp(\B_1 - (i+1))  \right )\,. & (k \le \degree)
        \stepcounter{equation}\tag{\theequation}\label{Eq: LHS of Sign(P_i) = Sign(P)}
    \end{align*}
    To bound the term $\sum_{j > i} \left \lvert P_j(x_1, \cdots, x_j) \right \rvert$ of the LHS of \Cref{Eq: P is not 0} from above, consider $P_j$ for all $j > i$. Recall that $P_j$ is a polynomial in $x_1, \cdots, x_j$ of degrees at most $\degree_1, \cdots, \degree_j$ with integer coefficients of bit-size $\B$. Therefore, the number of terms in $P_j$ is at most $(\degree + 1)^j$, and for each term, the degree of $x_j$ is at least 1. Therefore, 
    \begin{align*}
        \left \lvert P_j(x_1, \cdots, x_j) \right \rvert &\le \left (\degree + 1 \right )^j \cdot \exp(\B) \cdot x_j & (c_{i_1, \cdots, i_j, 0, \cdots, 0} \le \exp(\B))\\
        &\le \left (\degree + 1 \right )^\ell \cdot \exp(\B)  \cdot x_j & \left ((\degree+1)^j \le (\degree+1)^\ell \right )\,.
    \end{align*}
    Hence,
    \begin{align*}
        \sum_{j > i} \left \lvert P_j(x_1, \cdots, x_j) \right \rvert &\le (\degree + 1)^\ell \exp(\B) \sum_{j > i} x_j \\
        &\le \ell (\degree + 1)^\ell \exp(\B) \cdot x_{i+1} \\
        &\le \exp\left (\bit(\ell) + \ell\bit(\degree+1) + \B \right ) \cdot x_{i+1} \\
        &\le \exp\left (\bit(\ell) + 2\ell\bit(\degree) + \B \right ) \cdot x_{i+1} \\
        &\le \exp(\B_1 - (i+1)) x_{i+1}\\
        &\le \exp(\B_1 - (i+1)) x_{i}^{\degree+1}\,,
        \stepcounter{equation}\tag{\theequation}\label{Eq: RHS of Sign(P_i) = Sign(P)}
    \end{align*}
    where in the second inequality we use $x_j \le x_{i+1}$ for all $j \ge i+1$, in the third inequality we use $\ell \cdot (\degree + 1)^\ell \le \exp(\bit(\ell) + \ell\bit(\degree+1))$, in the fourth inequality we use $\bit(\degree+1) \le 2\bit(\degree)$, in the fifth inequality we use $\bit(\ell) + 2\ell\bit(\degree) + \B + (i+1) \le \B_1$, and in the sixth inequality we use $x_{i+1} \le x_i^{\degree+1}$.
    By combining \Cref{Eq: LHS of Sign(P_i) = Sign(P),Eq: RHS of Sign(P_i) = Sign(P)}, we get
    \begin{align*}
        \left \lvert P_i(x_1, \cdots, x_i) \right \rvert - \sum_{j > i} \left \lvert P_j(x_1, \cdots, x_j) \right \rvert &\ge \exp(\B_1 - i)x_i^{\degree+1}\\
        &\ge \exp(\B_1 - \ell) x_i^{\degree+1} & (i \le \ell)\\
        &\ge \exp(\B_1 - \ell) x_\ell^{\degree+1}\,, & (x_\ell \le x_i)\\
    \end{align*}
    which concludes the claim, completes the case, and yields the result.
\end{proof}

We notice that \Cref{Result: n-variate root separation from 0} may be compared with the minimal distance of two semi-algebraic sets that do not intersect, formalized for example in~\cite[Corollary 3.8]{schaefer2017FixedPointsNash}, but these results are incomparable.
For example, consider a multivariate polynomial $P$ and the algebraic sets $\{ x \in \RR^\ell : P(x_1, \ldots, x_\ell) = 0 \}$ and $\{ 0 \}$. 
These sets intersect at the origin if $P(0, \ldots, 0) = 0$.
In this case, there is no minimal distance between these two sets, while \Cref{Result: n-variate root separation from 0} provides a region for them.
To the best of our knowledge, \Cref{Result: n-variate root separation from 0} is a new result.

\subsection{Characterization of Stateful-discounted Value}
\label{Section: Characterization of discounted value}
In this subsection, we introduce a new characterization of the stateful-discounted value in CSGs (\Cref{Result: Characterization of discounted value,Result: Properties of W}), which generalizes the result presented in \cite{attia2019formula} from a single discount factor to multiple discount factors. In particular, \Cref{Result: Characterization of discounted value} generalizes Theorem 1 of \cite{attia2019formula}.

\smallskip\noindent{\bf Stateful-discounted payoff.} Consider a CSG $G$, a state $\state$, a reward function $\reward$, and a discount function $\discfac$. Given a pair of stationary strategies $(\strategyone, \strategytwo)$, we define the stateful-discounted payoff as 
\[
    \payoff^{\strategyone, \strategytwo}(\state) \defas \EE_\state^{\strategyone, \strategytwo} [\discounted_\discfac] \,.
\]
By fixing $\strategyone$ and $\strategytwo$, we obtain a transition function
\[
    \prob^{\strategyone, \strategytwo}(\state, \state') \defas \sum_{\substack{\actionone \in \Actionone \\ \actiontwo \in \Actiontwo}} \strategyone(\state)(\actionone) \cdot \strategytwo(\state)(\actiontwo) \cdot \prob(\state, \actionone, \actiontwo)(\state')  \,,
\]
which is described as a matrix, i.e., $\prob^{\strategyone, \strategytwo} \in \RR^{\statenum \times \statenum}$. The stage reward function is defined as 
\[
    \reward^{\strategyone, \strategytwo}(\state) \defas \sum_{\substack{\actionone \in \Actionone \\ \actiontwo \in \Actiontwo}} \strategyone(\state)(\actionone) \cdot \strategytwo(\state)(\actiontwo) \cdot \reward(\state, \actionone, \actiontwo) \,,
\]
which is described as a vector, i.e., $\reward^{\strategyone, \strategytwo} \in \RR^{\statenum}$. Therefore, the Bellman operator defined in~\cite{shapley1953stochastic} can be written as a recursive expression: 
\[
    \payoff^{\strategyone, \strategytwo} = \discfac \odot \reward^{\strategyone, \strategytwo} + (\mathbf{1} - \discfac) \odot \left (\prob^{\strategyone, \strategytwo} \payoff^{\strategyone, \strategytwo} \right ) \,.
\]
The matrix $\id - \left ( \, (\mathbf{1} - \discfac) \, \mathbf{1}^\top \, \right ) \odot \prob^{\strategyone, \strategytwo}$ is strictly diagonally dominant, and therefore, is invertible. By Cramer's rule, we have 
\begin{equation}
\label{Eq: Discounted payoff}
    \payoff^{\strategyone, \strategytwo}(\state) = \frac{\deter_\discfac^{\state}(\strategyone, \strategytwo)}{\deter_\discfac(\strategyone, \strategytwo)} \,,
\end{equation}
where $\deter_\discfac(\strategyone, \strategytwo) \defas \det \left ( \id - \left ( \, (\mathbf{1} - \discfac) \, \mathbf{1}^\top \, \right ) \odot \prob^{\strategyone, \strategytwo} \right )$ and $\deter_\discfac^{\state}(\strategyone, \strategytwo)$ is the determinant of an $\statenum \times \statenum$ matrix derived by substituting the $\state$-th column of the matrix $\id - \left ( (\mathbf{1} - \discfac)\mathbf{1}^\top \right ) \odot \prob^{\strategyone, \strategytwo}$ with $\discfac \odot \reward^{\strategyone, \strategytwo}$.

\smallskip\noindent{\bf Auxiliary matrix game $W_\discfac^\state(z)$.} We define a matrix game where the actions of each player are the pure stationary strategies in the stochastic game. The payoff of the game is obtained by the linearization of the quotient in \Cref{Eq: Discounted payoff}. More formally,
for all parameters $z \in \RR$, $\widehat{\strategyone} \in \Strategyone^{PS}$, and $\widehat{\strategytwo} \in \Strategytwo^{PS}$, we define the payoff of the matrix game as
\[
    W_\discfac^\state(z)[\widehat{\strategyone}, \widehat{\strategytwo}] \defas \deter_\discfac^\state(\widehat{\strategyone}, \widehat{\strategytwo}) - z \cdot \deter_\discfac(\widehat{\strategyone}, \widehat{\strategytwo}) \, .
\]
The value of $W_\discfac^\state(z)$ is denoted by $\val \left (W_\discfac^\state(z) \right )$.

The following statements (\Cref{Result: Properties of W} and \Cref{Result: Characterization of discounted value}) connect the stateful-discounted value with the value of the matrix game.

\begin{lemma}
\label{Result: Properties of W}
    Consider a CSG $G$, a state $\state$, a reward function, and an assignment function $\assign \colon \States \to [\parind]$. Then, the following assertions hold.
    \begin{enumerate}
        \item
        \label{Item: Continuity of W}
        The map $(z, \discountfactor_1, \cdots, \discountfactor_\parind) \mapsto \val(W_\discfac^\state(z))$ is continuous;

        \item
        \label{Item: Monotonicity of W}
        for all discount factors $\discountfactor_1, \cdots, \discountfactor_\parind$ and $z_1, z_2 \in \RR$ such that $z_1 \le z_2$, we have that
        $\val \left (W_\discfac^\state(z_1) \right ) \ge \val \left (W_\discfac^\state(z_2) \right ) + (z_2 - z_1) \left( \min_i \discountfactor_i \right)^n$, in particular, $z \mapsto \val(W_\discfac^\state(z))$ is strictly decreasing; and        
        \item 
        \label{Item: W is 0 at discounted value}
        for all discount factors $\discountfactor_1, \cdots, \discountfactor_\parind$, we have $\val \left (W_\discfac^\state(\val_\discfac(\state)) \right ) = 0$.
    \end{enumerate}
\end{lemma}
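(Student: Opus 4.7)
\smallskip\noindent\textbf{Proof plan.}
For continuity (\Cref{Item: Continuity of W}), \Cref{Result: Det of poly matrix} shows that $\deter_\discfac(\widehat\strategyone,\widehat\strategytwo)$ and $\deter_\discfac^\state(\widehat\strategyone,\widehat\strategytwo)$ are polynomials in $(\discountfactor_1,\ldots,\discountfactor_\parind)$ with coefficients fixed by the (given) transition and reward data. Each entry of $W_\discfac^\state(z)$ is therefore a polynomial in $(z,\discountfactor_1,\ldots,\discountfactor_\parind)$, and the value of a matrix game is Lipschitz-continuous in its entries (by the minimax formulation), so the map $(z,\discountfactor_1,\ldots,\discountfactor_\parind)\mapsto\val(W_\discfac^\state(z))$ is continuous as a composition of continuous maps.

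For monotonicity (\Cref{Item: Monotonicity of W}), I argue entrywise and transfer to values via the fact that shifting every entry of a matrix game by a constant $c$ shifts its value by $c$. The entrywise identity $W_\discfac^\state(z_1)[\widehat\strategyone,\widehat\strategytwo] - W_\discfac^\state(z_2)[\widehat\strategyone,\widehat\strategytwo] = (z_2-z_1)\,\deter_\discfac(\widehat\strategyone,\widehat\strategytwo)$ reduces the claim to a uniform lower bound on $\deter_\discfac$. Writing $B := \mathrm{diag}(\mathbf{1}-\discfac)\,\prob^{\widehat\strategyone,\widehat\strategytwo}$, the matrix $B$ is nonnegative with row sums at most $1-\min_i\discountfactor_i$, so its spectral radius satisfies $\rho(B)\le 1-\min_i\discountfactor_i$; every eigenvalue $1-\mu$ of $\id-B$ therefore has modulus at least $\min_i\discountfactor_i$, giving $|\det(\id-B)|\ge(\min_i\discountfactor_i)^\statenum$. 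Positivity follows from strict diagonal dominance of $\id-tB$ for all $t\in[0,1]$ combined with $\det(\id)=1$ and continuity, so $\deter_\discfac(\widehat\strategyone,\widehat\strategytwo)\ge(\min_i\discountfactor_i)^\statenum$; for $z_1\le z_2$ the claimed inequality on values then follows, with strictness for $z_1<z_2$ since $\discountfactor_i>0$.

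For the zero-crossing at $\val_\discfac(\state)$ (\Cref{Item: W is 0 at discounted value}), the crucial structural observation is that row $\state'$ of $\id - ((\mathbf{1}-\discfac)\mathbf{1}^\top)\odot\prob^{\strategyone,\strategytwo}$ depends linearly on each of $\strategyone(\state')$ and $\strategytwo(\state')$, so multilinearity of the determinant in its rows yields
\[
    \deter_\discfac(\strategyone,\strategytwo) = \sum_{\widehat\strategyone\in\Strategyone^{PS},\,\widehat\strategytwo\in\Strategytwo^{PS}} \pi^\strategyone(\widehat\strategyone)\,\pi^\strategytwo(\widehat\strategytwo)\,\deter_\discfac(\widehat\strategyone,\widehat\strategytwo),
\]
with $\pi^\strategyone(\widehat\strategyone) := \prod_{\state'}\strategyone(\state')(\widehat\strategyone(\state'))$ (and analogously for $\deter_\discfac^\state$). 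Let $\strategyone^\star\in\Strategyone^S$ be a stationary optimal Player~1 strategy (Shapley's theorem). Playing the product distribution $\pi^{\strategyone^\star}\in\Delta(\Strategyone^{PS})$ in the matrix game against any $q\in\Delta(\Strategytwo^{PS})$ yields, after collapsing the $\widehat\strategyone$-sum via multilinearity,
\[
    \sum_{\widehat\strategytwo\in\Strategytwo^{PS}} q(\widehat\strategytwo)\,\deter_\discfac(\strategyone^\star,\widehat\strategytwo)\bigl(\payoff^{\strategyone^\star,\widehat\strategytwo}(\state) - \val_\discfac(\state)\bigr) \ge 0,
\]
because each pure $\widehat\strategytwo$ is in particular stationary so optimality of $\strategyone^\star$ gives $\payoff^{\strategyone^\star,\widehat\strategytwo}(\state)\ge\val_\discfac(\state)$, and $\deter_\discfac>0$ by the bound from \Cref{Item: Monotonicity of W}. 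Hence $\val(W_\discfac^\state(\val_\discfac(\state)))\ge 0$, and the symmetric argument with an optimal Player~2 stationary strategy gives the reverse inequality, so equality holds.

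The main conceptual hurdle is \Cref{Item: W is 0 at discounted value}: the matrix game is played over $\Delta(\Strategyone^{PS})\times\Delta(\Strategytwo^{PS})$, which strictly enlarges the stationary strategies of the CSG (these correspond only to product distributions), so Shapley optimality cannot be applied on both sides of the auxiliary game simultaneously. The multilinear decomposition above is precisely what lets us collapse one side back to pure stationary strategies, where the standard optimality inequality on $\payoff^{\strategyone^\star,\widehat\strategytwo}(\state)$ and $\payoff^{\widehat\strategyone,\strategytwo^\star}(\state)$ can be applied directly.
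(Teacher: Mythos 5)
Your proposal is correct and follows essentially the same route as the paper: continuity from the polynomial dependence of the entries and the Lipschitz continuity of the matrix-game value, monotonicity from the entrywise identity combined with the lower bound $\deter_\discfac \ge (\min_i \discountfactor_i)^\statenum$, and the zero-crossing from the multilinear decomposition of $\deter_\discfac$ and $\deter_\discfac^\state$ over pure stationary strategies together with Shapley's optimal stationary strategies applied one side at a time. The only cosmetic difference is that you bound the determinant via the spectral radius of the nonnegative matrix $B$ plus a homotopy argument for positivity, whereas the paper applies the Gershgorin circle theorem directly to $\id - ((\mathbf{1}-\discfac)\mathbf{1}^\top)\odot\prob^{\widehat\strategyone,\widehat\strategytwo}$; both yield the same bound by the same underlying diagonal-dominance reasoning.
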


\begin{corollary}
\label{Result: Characterization of discounted value}
    Consider a CSG $G$, a state $\state$, a reward function, and a discount function~$\discfac$. Then, $\val_\discfac(\state)$ is the unique \(z^* \in \RR\) such that  
    \[
        \val \left (W_\discfac^\state(z^*) \right ) = 0 \,.
    \]    
\end{corollary}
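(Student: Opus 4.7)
The plan is to deduce the corollary directly from the preceding Lemma (Properties of W), which essentially packages both halves of the claim. The statement has two components: \emph{existence} of some $z^* \in \RR$ with $\val(W_\discfac^\state(z^*)) = 0$, and \emph{uniqueness} of such a $z^*$.

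For existence, I would simply take $z^* \defas \val_\discfac(\state)$. Item 3 of Lemma (Properties of W) asserts exactly that $\val(W_\discfac^\state(\val_\discfac(\state))) = 0$, so this choice witnesses existence, and it already identifies the unique $z^*$ (once uniqueness is established) with the stateful-discounted value.

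For uniqueness, I would invoke Item 2 of the same lemma. It states that for $z_1 \le z_2$,
\[
    \val \left( W_\discfac^\state(z_1) \right) \ge \val \left( W_\discfac^\state(z_2) \right) + (z_2 - z_1)\left(\min_i \discountfactor_i\right)^\statenum,
\]
so that $z \mapsto \val(W_\discfac^\state(z))$ is strictly decreasing in $z$. In particular, this map attains the value $0$ at most once, ruling out any second solution. Combining with the existence obtained above yields the unique $z^* = \val_\discfac(\state)$.

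There is no real obstacle at this stage: the entire content of the corollary has been absorbed into Lemma (Properties of W), and the argument is a one-line application of monotonicity plus an explicit witness. The genuinely nontrivial work lies upstream, in proving the strict monotonicity and the vanishing claim of that lemma, which rest on the Cramer-rule expression (\ref{Eq: Discounted payoff}) for the stateful-discounted payoff and on expressing $\val_\discfac(\state)$ through the matrix game whose entries are $\deter_\discfac^\state$ and $\deter_\discfac$.
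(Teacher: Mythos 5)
Your proposal is correct and matches the paper's own proof essentially verbatim: the paper also combines Item 2 (strict monotonicity of $z \mapsto \val(W_\discfac^\state(z))$) for uniqueness with Item 3 (vanishing at $\val_\discfac(\state)$) for existence. Nothing further is needed.
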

Below we define randomized strategies in the matrix game derived from stationary strategies in concurrent stochastic games.

\smallskip\noindent{\bf Strategies for the matrix game.}
Given a Player-1 (resp. Player~2) stationary strategy $\strategyone$ (resp. $\strategytwo$), we denote by $\bm{\strategyone}$ (resp. $\bm{\strategytwo}$) a randomized strategy for the matrix game $W_\discfac^\state(z)$ defined as
\[
    \bm{\strategyone}(\widehat{\strategyone}) \defas \prod_{\state \in \States} \strategyone(\state)(\widehat{\strategyone}(\state)) \quad \forall \widehat{\strategyone} \in \Strategyone^{PS} \,.
\]
Also, we define
\[
    W_\discfac^\state(z)[\bm{\strategyone}, \bm{\strategytwo}] \defas \sum_{\widehat{\strategyone} \in \Strategyone^{PS}} \sum_{\widehat{\strategytwo} \in \Strategytwo^{PS}} \bm{\strategyone}(\widehat{\strategyone}) \cdot \bm{\strategytwo}(\widehat{\strategytwo}) \cdot W_\discfac^\state(z)[\widehat{\strategyone}, \widehat{\strategytwo}] \,. 
\]
The following result is instrumental to prove that the value of the matrix game is strictly decreasing (\Cref{Result: Properties of W}-\Cref{Item: Monotonicity of W}).
\begin{lemma}
\label{Result: Determinant of diagonally dominating matrix}
    Consider a $k \times k$ stochastic matrix $M$ and a discount function $\discfac \colon [k] \to \{ \discountfactor_1, \cdots, \discountfactor_\parind \}$. Then, we have
    \[
        \det \left (\id - \left ( (\mathbf{1} - \discfac)\mathbf{1}^\top \right ) \odot M \right ) \ge \left (\min_{i} \discountfactor_i \right )^k \, .
    \]
\end{lemma}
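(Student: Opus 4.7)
The plan is to identify the matrix in question as $\id - A$ with $A \defas ((\mathbf{1}-\discfac)\mathbf{1}^\top) \odot M$ (so that $A_{ij} = (1-\discfac(i)) M_{ij}$), and then estimate the determinant via the spectrum of $A$. First I would compute the row sums of $A$: since $M$ is stochastic, the $i$-th row of $A$ sums to $(1-\discfac(i)) \sum_j M_{ij} = 1 - \discfac(i)$, and since all entries of $A$ are non-negative, this gives
\[
    \|A\|_\infty \;=\; \max_i (1-\discfac(i)) \;\le\; 1 - \min_i \discountfactor_i \,.
\]
Because the spectral radius is bounded by any submultiplicative matrix norm, every complex eigenvalue $\lambda$ of $A$ satisfies $|\lambda| \le 1 - \min_i \discountfactor_i$.

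Next, I would invoke the identity $\det(\id - A) = \prod_\lambda (1 - \lambda)$, where $\lambda$ ranges over the eigenvalues of $A$ counted with algebraic multiplicity. The reverse triangle inequality gives $|1-\lambda| \ge 1 - |\lambda| \ge \min_i \discountfactor_i$ for every such $\lambda$, and multiplying over the $k$ eigenvalues yields the absolute-value bound
\[
    |\det(\id - A)| \;=\; \prod_\lambda |1-\lambda| \;\ge\; \left(\min_i \discountfactor_i\right)^k .
\]

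To upgrade this absolute-value bound to the signed inequality claimed in the lemma, I would apply a short homotopy argument: define $f(t) \defas \det(\id - tA)$ for $t \in [0, 1]$. Then $f$ is a polynomial in $t$, with $f(0) = 1 > 0$, and the spectral radius of $tA$ is at most $t(1 - \min_i \discountfactor_i) < 1$ for every $t \in [0, 1]$ (using $\min_i \discountfactor_i > 0$), so $1$ is never an eigenvalue of $tA$ and hence $f(t) \neq 0$ throughout. Continuity then forces $f(t) > 0$ on $[0,1]$; taking $t = 1$ gives $\det(\id - A) > 0$, which combined with the absolute-value bound completes the proof. The only real obstacle is this last sign step, since the spectral estimate alone controls the magnitude but not the sign of the determinant; the homotopy is the simplest way to close that gap, and every other step is a routine matrix-norm or spectral-radius computation.
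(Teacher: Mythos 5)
Your proof is correct and takes essentially the same route as the paper's: both arguments localize the spectrum of $\id - A$ in the closed disk of radius $1-\min_i \discountfactor_i$ centered at $1$ (the paper via the Gershgorin circle theorem applied to $\id - A$, you via the row-sum norm of $A$, which gives the identical bound) and conclude by taking the product of the eigenvalues. Your explicit homotopy argument for the sign is actually more careful than the paper, whose one-line conclusion $\det(\widehat{M}) = \prod_i \xi_i \ge \left(\min_i \discountfactor_i\right)^k$ silently relies on the non-real eigenvalues pairing into conjugates and the real ones being positive; your $f(t) = \det(\id - tA)$ argument closes that gap cleanly.
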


\begin{proof}
    Fix $\widehat{M} \defas \id - \left ( (\mathbf{1} - \discfac)\mathbf{1}^\top \right ) \odot M$. 
    We claim that the matrix $\widehat{M}$ is a strictly diagonally dominant matrix. Indeed, $M$ is a stochastic matrix. Therefore, for the $i$-th row of $\widehat{M}$, we have 
    \begin{align*}
        \widehat{M}_{i,i} - \sum_{j \neq i} |\widehat{M}_{i,j}| &= 1 - \left (1 - \discfac(i) \right ) \sum_{j} M_{i,j} & (\text{def. } \widehat{M})\\
        &= \discfac(i) \,. & (M \text{ is a stochastic matrix})
    \end{align*}
    Consider the (possibly complex) eigenvalues $\xi_1, \cdots, \xi_k$. By Gershgorin circle theorem \cite{gershgorin1931uber}, for all $i$, we have that $| \xi_i - 1 | \le 1 - \min_i \discountfactor_i$. 
    Therefore, we have
    \[
        \det(\widehat{M}) 
            = \prod_i \xi_i
            \ge  \left( \min_i \discountfactor_i \right) ^k \,,
    \]
    which concludes the proof.
\end{proof}

The following result is instrumental to prove that the stateful-discounted value forces the matrix game to have value zero (\Cref{Result: Properties of W}-\Cref{Item: W is 0 at discounted value}).
\begin{lemma}
\label{Result: generalization of W}
    Consider a CSG $G$, a state $\state$, a reward function, a discount function $\discfac$, and a Player-1 stationary strategy $\strategyone$. Then, for all $\widehat{\strategytwo} \in \Strategytwo^{PS}$ and $z \in \RR$, we have
    \[
        W_\discfac^\state(z)[\bm{\strategyone}, \widehat{\strategytwo}] = \deter_\discfac^\state(\strategyone, \widehat{\strategytwo}) - z \cdot \deter_\discfac(\strategyone, \widehat{\strategytwo}) \,,
    \]
    where $\bm{\strategyone}$ is the strategy for $W_\discfac^\state(z)$ derived by the strategy $\strategyone$.
\end{lemma}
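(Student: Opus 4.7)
The plan is to derive the identity by invoking multilinearity of the determinant in its rows, after observing that both $\deter_\discfac(\strategyone, \widehat{\strategytwo})$ and $\deter^\state_\discfac(\strategyone, \widehat{\strategytwo})$ are multilinear in the collection of probability vectors $(\strategyone(\state'))_{\state' \in \States}$. Averaging over pure strategies weighted by $\bm{\strategyone}$ then collapses to evaluation at the mixed $\strategyone$.

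First I would fix $\widehat{\strategytwo} \in \Strategytwo^{PS}$ and inspect how the matrix $M_\discfac(\strategyone, \widehat{\strategytwo}) \defas \id - \bigl( (\mathbf{1} - \discfac)\mathbf{1}^\top \bigr) \odot \prob^{\strategyone, \widehat{\strategytwo}}$ depends on $\strategyone$. By definition, $\prob^{\strategyone, \widehat{\strategytwo}}(\state', \cdot) = \sum_{\actionone} \strategyone(\state')(\actionone)\, \prob(\state', \actionone, \widehat{\strategytwo}(\state'))$, and similarly $\reward^{\strategyone, \widehat{\strategytwo}}(\state') = \sum_{\actionone}\strategyone(\state')(\actionone)\, \reward(\state', \actionone, \widehat{\strategytwo}(\state'))$, so the $\state'$-th row of $\prob^{\strategyone, \widehat{\strategytwo}}$ and the $\state'$-th entry of $\reward^{\strategyone, \widehat{\strategytwo}}$ are linear in $\strategyone(\state')$. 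Using $\sum_{\actionone}\strategyone(\state')(\actionone) = 1$ to absorb the $\state'$-th standard basis vector $e_{\state'}$ coming from $\id$, I can rewrite the $\state'$-th row of $M_\discfac(\strategyone, \widehat{\strategytwo})$, and likewise the $\state'$-th row of the column-$\state$-replaced matrix defining $\deter^\state_\discfac$, as a genuine convex combination $\sum_{\actionone}\strategyone(\state')(\actionone) \cdot R^{\actionone}_{\state'}$ of the rows one would obtain by committing to the pure action $\actionone$ at state $\state'$.

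Next I would apply row-wise multilinearity of the determinant to each state $\state' \in \States$ in turn, obtaining
\[
\deter_\discfac(\strategyone, \widehat{\strategytwo}) = \sum_{\widehat{\strategyone} \in \Strategyone^{PS}} \left( \prod_{\state' \in \States} \strategyone(\state')(\widehat{\strategyone}(\state')) \right) \deter_\discfac(\widehat{\strategyone}, \widehat{\strategytwo}) = \sum_{\widehat{\strategyone} \in \Strategyone^{PS}} \bm{\strategyone}(\widehat{\strategyone})\, \deter_\discfac(\widehat{\strategyone}, \widehat{\strategytwo}) \,,
\]
and an identical expansion for $\deter^\state_\discfac(\strategyone, \widehat{\strategytwo})$. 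Subtracting $z$ times the first from the second, and comparing with the definitions $W_\discfac^\state(z)[\widehat{\strategyone}, \widehat{\strategytwo}] = \deter^\state_\discfac(\widehat{\strategyone}, \widehat{\strategytwo}) - z \cdot \deter_\discfac(\widehat{\strategyone}, \widehat{\strategytwo})$ and $W_\discfac^\state(z)[\bm{\strategyone}, \widehat{\strategytwo}] = \sum_{\widehat{\strategyone}} \bm{\strategyone}(\widehat{\strategyone}) \, W_\discfac^\state(z)[\widehat{\strategyone}, \widehat{\strategytwo}]$, yields the claim.

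I do not expect a real obstacle; the argument is essentially bookkeeping once multilinearity is invoked. The only subtlety is that the $\state'$-th row of $M_\discfac$ is \emph{affine}, not purely linear, in $\strategyone(\state')$ because of the identity contribution, and this is handled precisely by $\sum_{\actionone}\strategyone(\state')(\actionone) = 1$, which converts the affine dependence into a true convex combination amenable to the multilinear expansion.
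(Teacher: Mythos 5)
Your proposal is correct and follows essentially the same route as the paper's proof: both rest on the observation that each row of the relevant matrices depends only on $\strategyone$ at the corresponding state, and both apply row-wise multilinearity of the determinant (the paper does this state-by-state via induction, you expand all rows at once, but this is the same argument) to obtain $\deter_\discfac(\strategyone, \widehat{\strategytwo}) = \sum_{\widehat{\strategyone}} \bm{\strategyone}(\widehat{\strategyone}) \deter_\discfac(\widehat{\strategyone}, \widehat{\strategytwo})$ and its analogue for $\deter_\discfac^\state$, before combining linearly. Your explicit handling of the affine contribution from the identity matrix via $\sum_{\actionone}\strategyone(\state')(\actionone)=1$ is a point the paper leaves implicit, but it is the same underlying mechanism.
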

\begin{proof}
    For all $\strategyone \in \Strategyone^{S}$ and $\strategytwo \in \Strategytwo^{S}$, we define $M(\strategyone, \strategytwo) \defas \id - \left ( (1 - \discfac)1^\top \right ) \odot \prob^{\strategyone, \strategytwo}$. We also denote by $\strategyone_{\state \to \actionone}$ the stationary strategy derived from $\strategyone$ where Player~1 chooses the action $\actionone$ at the state $\state$. By definition, $\deter_\discfac(\strategyone, \strategytwo) = \det(M(\strategyone, \strategytwo))$.
    We now show that 
    \[
        \deter_\discfac(\strategyone, \widehat{\strategytwo}) = \sum_{\widehat{\strategyone} \in \Strategyone^{PS}} \bm{\strategyone}(\widehat{\strategyone}) \cdot \deter_\discfac(\widehat{\strategyone}, \widehat{\strategytwo}) \,.
    \]
    Indeed, the $\state$-th row of $M(\strategyone, \widehat{\strategytwo})$ only depends on $\strategyone(\state)$. Therefore, by multi-linearity of the determinant, we have
    \begin{align*}
        \det(M(\strategyone, \widehat{\strategytwo})) &= \det \left (\sum_{\actionone \in \Actionone} \strategyone(\state)(\actionone) M(\strategyone_{\state \to \actionone}, \widehat{\strategytwo}) \right )\\
        &= \sum_{\actionone \in \Actionone} \strategyone(\state)(\actionone) \cdot \det(M(\strategyone_{\state \to \actionone}, \widehat{\strategytwo})) \,.
    \end{align*}
    Hence, by induction, we have
    \[
    \deter_\discfac(\strategyone, \widehat{\strategytwo}) = \det(M(\strategyone, \widehat{\strategytwo})) = \sum_{\widehat{\strategyone} \in \Strategyone^{PS}} \bm{\strategyone}(\widehat{\strategyone}) \cdot \det(M(\widehat{\strategyone}, \widehat{\strategytwo})) = \sum_{\widehat{\strategyone} \in \Strategyone^{PS}} \bm{\strategyone}(\widehat{\strategyone}) \cdot \deter_\discfac(\widehat{\strategyone}, \widehat{\strategytwo}) \,.
    \]
    By similar arguments, we can show that
    \[
    \deter_\discfac^\state(\strategyone, \widehat{\strategytwo}) = \sum_{\widehat{\strategyone} \in \Strategyone^{PS}} \bm{\strategyone}(\widehat{\strategyone}) \cdot \deter_\discfac^\state(\widehat{\strategyone}, \widehat{\strategytwo}) \,.
    \]
    Therefore, 
    \begin{align*}
        W_\discfac^\state(z)[\bm{\strategyone}, \widehat{\strategytwo}] &= \sum_{\widehat{\strategyone} \in \Strategyone^{PS}} \bm{\strategyone}(\widehat{\strategyone}) \cdot W_\discfac^\state(z)[\widehat{\strategyone}, \widehat{\strategytwo}]\\
        &= \sum_{\widehat{\strategyone} \in \Strategyone^{PS}} \bm{\strategyone}(\widehat{\strategyone}) \left [ \deter_\discfac^\state(\widehat{\strategyone}, \widehat{\strategytwo}) - z \cdot \deter_\discfac(\widehat{\strategyone}, \widehat{\strategytwo}) \right ]\\
        &= \deter_\discfac^\state(\strategyone, \widehat{\strategytwo}) - z \cdot \deter_\discfac(\strategyone, \widehat{\strategytwo}) \,,
    \end{align*}
    which completes the proof.
\end{proof}

\begin{proof}[Proof of \Cref{Result: Properties of W}]
    We prove the three items as follows.
    \begin{enumerate}
        \item 
        By the continuity of the determinant, the entries of $W_\discfac^\state(z)$ depend continuously on parameters $z, \discountfactor_1, \cdots, \discountfactor_\parind$. Therefore, the map $(z, \discountfactor_1, \cdots, \discountfactor_\parind) \mapsto \val(W_\discfac^\state(z))$ is continuous, which yields the item.
        
        \item 
        For all $\widehat{\strategyone} \in \Strategyone^{PS}$ and $\widehat{\strategytwo} \in \Strategytwo^{PS}$, we have
        \begin{align*}
            W_\discfac^\state(z_1)[\widehat{\strategyone}, \widehat{\strategytwo}] - W_\discfac^\state(z_2)[\widehat{\strategyone}, \widehat{\strategytwo}] &= (z_2 - z_1) \deter_\discfac(\widehat{\strategyone}, \widehat{\strategytwo}) & (\text{def. } W_\discfac^\state)\\
            &= (z_2 - z_1) \det \left (\id - \left ( (\mathbf{1} - \discfac)\mathbf{1}^\top \right ) \odot \prob^{\widehat{\strategyone}, \widehat{\strategytwo}} \right ) & \left (\text{def. } \deter_\discfac(\widehat{\strategyone}, \widehat{\strategytwo}) \right )\\
            &\ge (z_2 - z_1)\left (\min_i \discountfactor_i \right )^\statenum & (\text{\Cref{Result: Determinant of diagonally dominating matrix}})
        \end{align*}
        The result follows from the fact that increasing each entry of a matrix game by at least $t$ increases the value by at least $t$.
        
        \item  
        By the result of \cite{shapley1953stochastic}, there exist optimal stationary strategies $\strategyone^\ast$ and $\strategytwo^\ast$ for CSGs with stateful-discounted objectives. Therefore, for all $\widehat{\strategytwo} \in \Strategytwo^{PS}$, we have
        \[
            \payoff^{\strategyone^\ast, \widehat{\strategytwo}}(\state) = \frac{\deter_\discfac^\state(\strategyone^\ast, \widehat{\strategytwo})}{\deter_\discfac(\strategyone^\ast, \widehat{\strategytwo})} \ge \val_\discfac(\state) \,.
        \]
        Hence,
        \[
            \deter_\discfac^\state(\strategyone^\ast, \widehat{\strategytwo}) - \val_\discfac(\state) \cdot \deter_\discfac(\strategyone^\ast, \widehat{\strategytwo}) \ge 0 \,.
        \]
        By \Cref{Result: generalization of W}, in the matrix game $W_\discfac^\state(\val_\discfac(\state))$, we have
        \[
            W_\discfac^\state \left (\val_\discfac(\state) \right )[\bm{\strategyone}^\ast, \widehat{\strategytwo}] = \deter_\discfac^\state(\strategyone^\ast, \widehat{\strategytwo}) - \val_\discfac(\state) \cdot \deter_\discfac(\strategyone^\ast, \widehat{\strategytwo}) \ge 0 \quad \forall \widehat{\strategytwo} \in \Strategytwo^{PS} \,,
        \]
        which guarantees that $\val \left (W_\discfac^\state(\val_\discfac(\state)) \right ) \ge 0$. By symmetric arguments on $\strategytwo^*$, we get $\val \left (W_\discfac^\state(\val_\discfac(\state)) \right ) \le 0$. The result follows from combining these two inequalities.
    \end{enumerate}
\end{proof}

\begin{proof}[Proof of \Cref{Result: Characterization of discounted value}]
By \Cref{Result: Properties of W}-\Cref{Item: Monotonicity of W}, the mapping $z \mapsto \val \left ( W_\discfac^\state(z) \right )$ is strictly decreasing. By \Cref{Result: Properties of W}-\Cref{Item: W is 0 at discounted value}, we know that $\val \left ( W_\discfac^\state(\val_\discfac(\state)) \right ) = 0$. Hence, there exists the unique $z^* = \val_\discfac(\state) \in \RR$ such that 
\[
    \val \left (W_\discfac^\state(z^*) \right ) = 0 \,,
\]    
which yields the result.
\end{proof}

\subsection{Characterization of Limit Value}
\label{Section: Characterization of limit value}
In this subsection, we introduce a new characterization of the limit value in CSGs (\Cref{Result: Characterization of undiscounted value,Result: Properties of F}), which generalizes the result presented in \cite{attia2019formula} from a single discount factor to multiple discount factors. In particular, \Cref{Result: Characterization of undiscounted value} generalizes Theorem 2 of \cite{attia2019formula}.

\smallskip\noindent{\bf Limit function.} Given a CSG $G$, a reward function, and an assignment function $\assign \colon \States \to [\parind]$, we define the limit function as 
\[
    F_{\assign}^\state(z) \defas \lim_{\discountfactor_1 \to 0^+} \cdots \lim_{\discountfactor_\parind \to 0^+} \frac{\val \left( W^\state_\discfac(z) \right )}{(\discountfactor_\parind)^\statenum} \,.
\]

\begin{lemma}
\label{Result: Properties of F}
    Consider a CSG $G$, a state $\state$, a reward function, and an assignment function $\assign \colon \States \to [\parind]$. Then, the following assertions hold.
    \begin{enumerate}
        \item 
        \label{Item: Existence of F}
        For all $z \in \RR$, the limit $F_{\assign}^\state(z)$ exists in $\RR \cup \{ -\infty, +\infty \}$; and

        \item
        \label{Item: Non-constancy of F} There exists $z_1, z_2 \in \RR$ such that $F_{\assign}^\state(z_2) \le 0 \le F_\assign^\state(z_1)$. 
    \end{enumerate}
\end{lemma}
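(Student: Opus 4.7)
The plan is to leverage \Cref{Result: Value of matrix games} and the root-separation lemma \Cref{Result: n-variate root separation from 0} to reduce $\val(W_\discfac^\state(z))$, as a function of the discount factors $\discountfactor_1, \ldots, \discountfactor_\parind$, to a single rational function on a staircase neighborhood of the origin. Once this reduction is in place, both items follow by essentially algebraic manipulation.

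For the existence claim \Cref{Item: Existence of F}, I would fix $z \in \RR$ and observe that the entries of $W_\discfac^\state(z)$ are polynomials in $\discountfactor_1, \ldots, \discountfactor_\parind$ with rational coefficients, since $\deter_\discfac^\state$ and $\deter_\discfac$ are determinants of matrices whose entries are polynomials in the $\discountfactor_i$. By \Cref{Result: Value of matrix games}, $\val(W_\discfac^\state(z))$ equals $\det(M_0)/S(M_0)$ for some square sub-matrix $M_0$ with $S(M_0) \neq 0$, and there are only finitely many candidate sub-matrices. For each pair of candidates $(M_i, M_j)$, the polynomial $\det(M_i)\,S(M_j) - \det(M_j)\,S(M_i)$ is either identically zero or, after clearing denominators so its coefficients are integers, has constant sign on a staircase region of the form $\{\, 0 < \discountfactor_1 \le \eps_0,\; 0 < \discountfactor_{i+1} \le \discountfactor_i^{\degree+1} \,\}$ by \Cref{Result: n-variate root separation from 0}. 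The same applies to each nonzero $S(M_i)$. Intersecting these finitely many staircase regions yields a single region $R$ on which every $S(M_i)$ has constant sign and the relative ordering of the ratios $\det(M_i)/S(M_i)$ is fixed, so the optimal $M_0$ in Shapley's formula may be chosen independently of the point in $R$.

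On $R$, $\val(W_\discfac^\state(z))$ then coincides with a single rational function $P(\discountfactor_1, \ldots, \discountfactor_\parind) / Q(\discountfactor_1, \ldots, \discountfactor_\parind)$. To compute the iterated limit of $P/(\discountfactor_\parind^\statenum Q)$, I would view both $P$ and $Q$ as polynomials in $\discountfactor_\parind$ with coefficients in $\mathbb{Q}[\discountfactor_1, \ldots, \discountfactor_{\parind-1}]$, factor out the largest common power of $\discountfactor_\parind$ from numerator and denominator, and let $\discountfactor_\parind \to 0^+$. The outcome is either $0$, a rational function of $\discountfactor_1, \ldots, \discountfactor_{\parind-1}$, or $\pm\infty$; in the last case the value persists through any further limit. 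Iterating $\parind - 1$ more times, each step being a one-variable rational-function limit in the extended reals, establishes $F_\assign^\state(z) \in \RR \cup \{-\infty, +\infty\}$.

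For \Cref{Item: Non-constancy of F}, I would use \Cref{Result: Properties of W}. Since the reward function takes values in $[0,1]$, $\val_\discfac(\state) \in [0,1]$ for every $\discfac$. Combining \Cref{Result: Properties of W}-\Cref{Item: Monotonicity of W} with \Cref{Result: Properties of W}-\Cref{Item: W is 0 at discounted value} gives
\[
    \val(W_\discfac^\state(0)) \ge \val_\discfac(\state) \cdot \bigl(\min_i \discountfactor_i\bigr)^\statenum \ge 0
    \quad \text{and} \quad
    \val(W_\discfac^\state(1)) \le -\bigl(1 - \val_\discfac(\state)\bigr) \cdot \bigl(\min_i \discountfactor_i\bigr)^\statenum \le 0.
\]
Dividing by $\discountfactor_\parind^\statenum > 0$ preserves the signs, and the iterated limits (whose existence is granted by \Cref{Item: Existence of F}) preserve non-strict inequalities, so $F_\assign^\state(1) \le 0 \le F_\assign^\state(0)$; hence $z_1 \defas 0$ and $z_2 \defas 1$ satisfy the claim.

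The hard part will be \Cref{Item: Existence of F}: pinning down that the optimal Shapley sub-matrix in $\val(W_\discfac^\state(z)) = \det(M_0)/S(M_0)$ stabilizes on a staircase region near the origin. The root-separation bound \Cref{Result: n-variate root separation from 0} is what makes this stabilization possible; without it $\val(W_\discfac^\state(z))$ is only piecewise rational in $(\discountfactor_1, \ldots, \discountfactor_\parind)$, and different pieces could a priori produce incompatible behaviors under the nested limit.
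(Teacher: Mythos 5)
Your proposal follows the paper's route almost verbatim: item~\ref{Item: Existence of F} is the paper's \Cref{Result: Existence of a rational for the limit of matrix game} (Shapley's formula \Cref{Result: Value of matrix games} gives finitely many candidate rational functions in $\discountfactor_1, \cdots, \discountfactor_\parind$, and \Cref{Result: n-variate root separation from 0} forces any two non-congruent candidates to disagree throughout a staircase neighborhood of the origin, so the value is a single rational function there and its iterated limit exists in the extended reals), and item~\ref{Item: Non-constancy of F} is the paper's argument with $z_1 = 0$, $z_2 = 1$ in place of $\min \reward$, $\max \reward$, which is equivalent since $\reward$ takes values in $[0,1]$. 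The one step you should justify differently is the claim that a single sub-matrix represents $\val(W_\discfac^\state(z))$ throughout the staircase region $R$: fixing the \emph{ordering} of the ratios $\det(M_i)/S(M_i)$ on $R$ does not by itself determine which sub-matrix Shapley's theorem selects at a given point, since the value is not characterized as a maximum or minimum of those ratios. The paper closes this by invoking the continuity of $(z, \discountfactor_1, \cdots, \discountfactor_\parind) \mapsto \val(W_\discfac^\state(z))$ (\Cref{Result: Properties of W}-\Cref{Item: Continuity of W}): $R$ is connected, the value always equals one of the finitely many candidates, and non-congruent candidates never agree on $R$, so by continuity the value cannot jump between candidates and coincides with one of them on all of $R$.
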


\begin{corollary}
\label{Result: Characterization of undiscounted value}
    Consider a CSG $G$, a state $\state$, a reward function, and an assignment function $\assign \colon \States \to [\parind]$. Then, $\val_{\assign}(\state)$ is the unique \(z^* \in \RR\) such that  
    \[
        \forall z > z^* \quad F_{\assign}^\state(z) < 0
        \qquad \text{and} \qquad
        \forall z < z^* \quad F_{\assign}^\state(z) > 0\,.
    \]    
\end{corollary}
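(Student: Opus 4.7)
The plan is to take $z^* \defas \val_\assign(\state)$ as the candidate and verify the two claimed sign conditions, then deduce uniqueness by a standard separation argument. The key fact driving the signs is the quantitative strict monotonicity of $z \mapsto \val(W^\state_\discfac(z))$ combined with the vanishing at $z = \val_\discfac(\state)$. Concretely, I would first combine \Cref{Result: Properties of W}-\Cref{Item: Monotonicity of W} with \Cref{Result: Properties of W}-\Cref{Item: W is 0 at discounted value} to extract the bound
\[
    \val\bigl(W^\state_\discfac(z)\bigr) \le -\bigl(z - \val_\discfac(\state)\bigr)\bigl(\min_i \discountfactor_i\bigr)^\statenum \quad \text{if } z > \val_\discfac(\state),
\]
together with the symmetric lower bound when $z < \val_\discfac(\state)$. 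These follow directly by instantiating the monotonicity inequality with the pair $(\val_\discfac(\state), z)$ (or $(z, \val_\discfac(\state))$).

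Next, I would divide by $\discountfactor_\parind^\statenum$ and pass to the iterated limit defining $F_\assign^\state$. In the innermost limit $\discountfactor_\parind \to 0^+$ with $\discountfactor_1, \ldots, \discountfactor_{\parind-1}$ held positive, eventually $\min_i \discountfactor_i = \discountfactor_\parind$, so the ratio $(\min_i \discountfactor_i/\discountfactor_\parind)^\statenum$ equals $1$. Since $\val_\discfac(\state) \to \val_\assign(\state)$ through the same iterated limit, for $z > \val_\assign(\state)$ the hypothesis $z > \val_\discfac(\state)$ holds eventually at every stage, yielding $F_\assign^\state(z) \le -(z - \val_\assign(\state)) < 0$. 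A symmetric argument gives $F_\assign^\state(z) \ge \val_\assign(\state) - z > 0$ for $z < \val_\assign(\state)$, establishing that $z^* = \val_\assign(\state)$ satisfies the claimed conditions.

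For uniqueness, I would argue by contradiction: if $z_1^* < z_2^*$ both satisfied the sign conditions, then for $z \defas (z_1^* + z_2^*)/2$ the condition on $z_1^*$ forces $F_\assign^\state(z) < 0$ while the condition on $z_2^*$ forces $F_\assign^\state(z) > 0$, a contradiction. The main obstacle I anticipate is the rigorous handling of the nested limits, specifically making precise the statement that ``$z > \val_\discfac(\state)$ holds eventually'' in the iterated sense and that the quantitative bound survives each $\lim_{\discountfactor_i \to 0^+}$. I would address this by induction on the number of remaining discount factors: at each step, both $\val_\discfac(\state)$ and the right-hand side of the bound have well-defined limits (the former by the definition of $\val_\assign$, the latter because after dividing by $\discountfactor_\parind^\statenum$ the only $\discountfactor_i$-dependence in the right-hand side of the first display above is through $\val_\discfac(\state)$), so the inequality passes through each limit unchanged.
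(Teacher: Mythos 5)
Your proof is correct, and it rests on the same two pillars as the paper's: the quantitative strict monotonicity of $z \mapsto \val(W_\discfac^\state(z))$ (\Cref{Result: Properties of W}-\Cref{Item: Monotonicity of W}) and the vanishing at $z = \val_\discfac(\state)$ (\Cref{Result: Properties of W}-\Cref{Item: W is 0 at discounted value}), pushed through the iterated limit. The logical direction is reversed, though: the paper first locates the sign-change point $z^*$ of $F_\assign^\state$ (invoking \Cref{Result: Properties of F}-\Cref{Item: Non-constancy of F} for its existence), then shows that $\val(W_\discfac^\state(z^*+\eps))<0$ eventually forces $\val_\discfac(\state) < z^*+\eps$ and hence $\val_\assign(\state) = z^*$; you instead start from $z^* = \val_\assign(\state)$ and verify the two sign conditions directly, which lets you bypass \Cref{Result: Properties of F}-\Cref{Item: Non-constancy of F} entirely and even yields the slightly stronger quantitative bounds $F_\assign^\state(z) \le -(z - \val_\assign(\state))$ and $F_\assign^\state(z) \ge \val_\assign(\state) - z$. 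The one point you rightly flag as delicate --- that ``$z > \val_\discfac(\state)$ eventually'' and the inequality survive each nested limit --- is handled the same way in both arguments, by unwinding the $\exists\discountfactor_i^0\,\forall\discountfactor_i$ quantifier chain one discount factor at a time and using that the intermediate iterated limits of $\val_\discfac(\state)$ exist; your induction sketch for this is adequate.
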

Below we first present \Cref{Result: Existence of a rational for the limit of matrix game}, which shows that, given a fixed parameter $z$, the value of the matrix game is a rational function when discount factors are small enough, which implies the existence of the limit function. We then prove \Cref{Result: Properties of F} and \Cref{Result: Characterization of undiscounted value}.

\begin{lemma}
\label{Result: Existence of a rational for the limit of matrix game}
    Consider a CSG $G$, a state $\state$, a reward function, an assignment function $\assign \colon \States \to [\parind]$, and a real number $z \in \RR$. Then, there exist two polynomials $P$ and $Q$ in $\discountfactor_1, \cdots, \discountfactor_\parind$ such that
    \begin{gather*}
        \exists \discountfactor^0_1 > 0 \quad \forall \discountfactor_1 \in (0, \discountfactor^0_1) \quad \cdots \quad \exists \discountfactor^0_\parind > 0 \quad \forall \discountfactor_\parind \in (0, \discountfactor^0_\parind)\\
        \text{s.t. } \quad Q(\discountfactor_1, \cdots, \discountfactor_\parind) \neq 0 \quad \text{and} \quad \val \left ( W_\discfac^\state(z) \right ) = \frac{P(\discountfactor_1, \cdots, \discountfactor_\parind)}{Q(\discountfactor_1, \cdots, \discountfactor_\parind)} \, .
    \end{gather*}
\end{lemma}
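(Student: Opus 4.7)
The plan is to take $P$ and $Q$ as $\det(M_0)$ and $S(M_0)$, respectively, for a specific square sub-matrix $M_0$ of $W_\discfac^\state(z)$ selected via Shapley's characterization (\Cref{Result: Value of matrix games}). I would first observe that, for fixed $z$, every entry of $W_\discfac^\state(z)$ is a polynomial in $\discountfactor_1, \ldots, \discountfactor_\parind$ with integer coefficients. Indeed, $\deter_\discfac^\state(\widehat\strategyone, \widehat\strategytwo)$ and $\deter_\discfac(\widehat\strategyone, \widehat\strategytwo)$ are determinants of $\statenum \times \statenum$ matrices whose entries are affine in the $\discountfactor_i$'s, so \Cref{Result: Det of poly matrix} implies the polynomial dependence. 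Applying \Cref{Result: Det of poly matrix} and \Cref{Result: S of poly matrix} then shows that $\det(M)$ and $S(M)$ are polynomials in $\discountfactor_1, \ldots, \discountfactor_\parind$ for every square sub-matrix $M$ of $W_\discfac^\state(z)$.

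I would next enumerate the (finitely many) square sub-matrices $M_1, \ldots, M_K$ of $W_\discfac^\state(z)$ for which $Q_i \defas S(M_i)$ is not identically zero, and set $P_i \defas \det(M_i)$. By \Cref{Result: Value of matrix games}, for every point $\lambda \defas (\discountfactor_1, \ldots, \discountfactor_\parind)$ in the positive orthant there is some $i$ such that $Q_i(\lambda) \neq 0$ and the value of $W_\discfac^\state(z)$ at $\lambda$ equals $P_i(\lambda)/Q_i(\lambda)$. Letting
\[
    E_i \defas \bigl\{ \lambda \in (0,1]^\parind : Q_i(\lambda) \neq 0 \text{ and } \val(W_\discfac^\state(z))(\lambda) \cdot Q_i(\lambda) = P_i(\lambda) \bigr\}\,,
\]
the sets $E_1, \ldots, E_K$ cover $(0,1]^\parind$. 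It then suffices to show that some $E_{i^*}$ contains a nested region of the form in the lemma; the desired polynomials are $P = P_{i^*}$ and $Q = Q_{i^*}$.

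I would prove the covering step by induction on $\parind$. The key observation is that each $E_i$ is semi-algebraic, because $\val(W_\discfac^\state(z))$ has a semi-algebraic graph in its parameters (via its first-order min-max characterization) and the remaining conditions on $P_i, Q_i$ are polynomial. In the base case $\parind = 1$, a semi-algebraic subset of $(0,1]$ is a finite union of points and intervals, so a finite semi-algebraic cover of $(0,1]$ must contain an interval $(0, \discountfactor^0_1)$, which necessarily lies in a single $E_{i^*}$. For the inductive step, for each fixed $\discountfactor_1$ the slices $E_i|_{\discountfactor_1}$ form a semi-algebraic cover of $(0,1]^{\parind-1}$, and the induction hypothesis produces an index $i(\discountfactor_1)$ together with nested thresholds on $\discountfactor_2, \ldots, \discountfactor_\parind$. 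Setting $B_{i^*} \defas \{ \discountfactor_1 \in (0,1] : E_{i^*}|_{\discountfactor_1} \text{ contains a nested region}\}$, the sets $B_{i^*}$ cover $(0,1]$, and each is semi-algebraic because the "contains a nested region" property is expressible as a first-order formula over the reals. Applying the base case to the $B_{i^*}$'s yields an interval $(0, \discountfactor^0_1)$ contained in some $B_{i^{**}}$, which delivers the outer threshold and the choice $i^{**}$; the inner nested thresholds are then provided point-wise by the induction.

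The main obstacle is establishing the semi-algebraicity of the auxiliary sets $B_{i^*}$ in the inductive step: one must argue that the "$E_{i^*}$ contains a nested region" property is expressible as a bounded alternation of quantifiers over the reals, so that Tarski--Seidenberg guarantees semi-algebraicity. Once this is granted, the nested semi-algebraic covering lemma goes through cleanly and yields the polynomials $P, Q$ together with the nested thresholds.
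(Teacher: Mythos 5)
Your proof is correct, and it shares the paper's opening moves --- Shapley's sub-matrix characterization (\Cref{Result: Value of matrix games}) together with \Cref{Result: Det of poly matrix} and \Cref{Result: S of poly matrix} to obtain a finite family of candidate rational functions covering all parameter values --- but it diverges completely in how a single candidate is pinned down near the origin. The paper argues quantitatively: by continuity of the value (\Cref{Result: Properties of W}), the representation can only switch where the graphs of two candidates cross, i.e., where $C = P_1 Q_2 - P_2 Q_1$ vanishes, and the explicit root-separation bound of \Cref{Result: n-variate root separation from 0} excludes such crossings in a concrete nested region, so the value is locked to one rational function there. You instead run a soft, non-constructive covering argument: the sets $E_i$ are semi-algebraic because the value of a parametric matrix game has a first-order definable graph, a finite semi-algebraic cover of $(0,1]$ must have a member containing an interval $(0,\delta)$, and an induction on $\parind$ --- with the auxiliary sets $B_i$ shown semi-algebraic by Tarski--Seidenberg quantifier elimination, which is indeed the only delicate point and is handled correctly --- lifts this to the nested $\exists\,\forall$ structure of the lemma. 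Both arguments are valid. What the paper's approach buys is effectivity: the explicit thresholds are what get reused in \Cref{Result: Value of derived matrix game} and \Cref{Result: Sign of F based on sign of W} to obtain the concrete double-exponentially small discount factors of \Cref{Result: Approximation of undiscounted value by discounted value}. Your route is shorter and bound-free, which suffices for the lemma as stated (and for its only use, the existence of the limit in \Cref{Result: Properties of F}), but it would not by itself yield the quantitative consequences the paper needs downstream.
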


\begin{proof}
    By \Cref{Result: Value of matrix games}, there exists a sub-matrix $M$ in $W_\discfac^\state(z)$ such that
    \[
        \val \left ( W_\discfac^\state(z) \right ) = \frac{\det(M)}{S(M)}\,.
    \]
    Since $z$ is fixed, the matrix $M$ is a matrix with polynomial entries in $\discountfactor_1, \cdots, \discountfactor_\parind$. Therefore, by \Cref{Result: Det of poly matrix,Result: S of poly matrix}, there exist two polynomials $P$ and $Q$ in $\discountfactor_1, \cdots, \discountfactor_\parind$ such that
    \[
        \val \left ( W_\discfac^\state(z) \right ) = \frac{P(\discountfactor_1, \cdots, \discountfactor_\parind)}{Q(\discountfactor_1, \cdots, \discountfactor_\parind)}\,.
    \]
    Since the matrix $W_\discfac^\state(z)$ is finite, the set of all sub-matrices is finite. Therefore, there exist two finite sets of polynomials $\setPolynomials$ and $\setPolynomialDenominators$ such that for all $\discountfactor_1, \cdots, \discountfactor_\parind$, there exists $P \in \setPolynomials$ and $Q \in \setPolynomialDenominators$ such that 
    \[
        \val \left ( W_\discfac^\state(z) \right ) = \frac{P(\discountfactor_1, \cdots, \discountfactor_\parind)}{Q(\discountfactor_1, \cdots, \discountfactor_\parind)} \,.
    \]
    By \Cref{Result: Properties of W}-\Cref{Item: Continuity of W}, we have $\val \left ( W_\discfac^\state(z) \right )$ is continuous in $\discountfactor_1, \cdots, \discountfactor_\parind$. Therefore, as $(\discountfactor_1, \cdots, \discountfactor_\parind)$ varies in $(0, 1]^\parind$, the value of the matrix game can only jump from one rational function to another if the graphs of two rationals intersect. More formally, $\val \left (W_\discfac^\state(z) \right )$ can jump from $\frac{P_1}{Q_1}$ to $\frac{P_2}{Q_2}$ when discount factors are $\discountfactor_1, \cdots, \discountfactor_\parind$, if we have 
    \[
        \frac{P_1(\discountfactor_1, \cdots, \discountfactor_\parind)}{Q_1(\discountfactor_1, \cdots, \discountfactor_\parind)} = \frac{P_2(\discountfactor_1, \cdots, \discountfactor_\parind)}{Q_2(\discountfactor_1, \cdots, \discountfactor_\parind)} \,,
    \] 
    We now claim that for every $P_1, P_2 \in \setPolynomials$ and $Q_1, Q_2 \in \setPolynomialDenominators$, either $\frac{P_1}{Q_1}$ and $\frac{P_2}{Q_2}$ are congruent, or
    \begin{gather*}
        \exists \discountfactor^0_1 > 0 \quad \forall \discountfactor_1 \in (0, \discountfactor^0_1) \quad \cdots \quad \exists \discountfactor^0_\parind > 0 \quad \forall \discountfactor_\parind \in (0, \discountfactor^0_\parind)\\
        \text{s.t. } \quad  \frac{P_1(\discountfactor_1, \cdots, \discountfactor_\parind)}{Q_1(\discountfactor_1, \cdots, \discountfactor_\parind)} \neq \frac{P_2(\discountfactor_1, \cdots, \discountfactor_\parind)}{Q_2(\discountfactor_1, \cdots, \discountfactor_\parind)} \,.
    \end{gather*}
    Indeed, we define the polynomial $C \defas P_1 \cdot Q_2 - P_2 \cdot Q_1$.
    
    \smallskip\noindent{\em Case $C = 0$.} If $C = 0$, then $\frac{P_1}{Q_1}$ and $\frac{P_2}{Q_2}$ are congruent, which complete the case.

    \smallskip\noindent{\em Case $C \neq 0$.} If $C \neq 0$, then by \Cref{Result: n-variate root separation from 0}, we have that 
    \begin{gather*}
        \exists \discountfactor^0_1 > 0 \quad \forall \discountfactor_1 \in (0, \discountfactor^0_1) \quad \cdots \quad \exists \discountfactor^0_\parind > 0 \quad \forall \discountfactor_\parind \in (0, \discountfactor^0_\parind)\\
        \text{s.t. } \quad C(\discountfactor_1, \cdots, \discountfactor_\parind) \neq 0 \,.
    \end{gather*}
    If $C(\discountfactor_1, \cdots, \discountfactor_\parind) \neq 0$, then $\frac{P_1(\discountfactor_1, \cdots, \discountfactor_\parind)}{Q_1(\discountfactor_1, \cdots, \discountfactor_\parind)} \neq \frac{P_2(\discountfactor_1, \cdots, \discountfactor_\parind)}{Q_2(\discountfactor_1, \cdots, \discountfactor_\parind)}$, which completes the case and concludes the proof. 
\end{proof}

\begin{proof}[Proof of \Cref{Result: Properties of F}]
    We prove the two items as follows.
    \begin{enumerate}
        \item It is a direct implication of \Cref{Result: Existence of a rational for the limit of matrix game}.
        
        \item
        We define $z_1 \defas \min \reward(\state, \actionone, \actiontwo)$ and $z_2 \defas \max \reward(\state, \actionone, \actiontwo)$. For all $\discountfactor_1, \cdots, \discountfactor_\parind$, we have
       \[
            \val \left ( W_\discfac^\state(z_2) \right ) 
                \le \val \left ( W_\discfac^\state(\val_\discfac(\state)) \right ) \le \val \left ( W_\discfac^\state(z_1) \right ) \,.
        \]
        By dividing both side by $(\discountfactor_\parind)^\statenum$ and taking $\discountfactor_\parind, \cdots, \discountfactor_1$ to 0 respectively, we get
        \[
            F_\assign^\state(z_2) \le 0 \le F_\assign^\state(z_1) \,,
        \]
        which yields the result.
    \end{enumerate}
\end{proof}

\begin{proof}[Proof of \Cref{Result: Characterization of undiscounted value}]
    Consider the function $F_\assign^\state$.
    By \Cref{Result: Properties of F}-\Cref{Item: Non-constancy of F}, it is not constant.
    By \Cref{Result: Properties of W}-\Cref{Item: Monotonicity of W}, it is decreasing.
    Let $z^*$ be the point where $z \mapsto F_\assign^\state(z)$ changes sign. 
    For all $\eps > 0$, we have that $F_\assign^\state(z^* + \eps) < 0$.
    Hence,
    \begin{gather*}
        \exists \discountfactor^0_1 > 0 \quad \forall \discountfactor_1 \in (0, \discountfactor^0_1) \quad \cdots \quad \exists \discountfactor^0_\parind > 0 \quad \forall \discountfactor_\parind \in (0, \discountfactor^0_\parind)\\
        \text{s.t.} \quad \val \left (W_\discfac^\state(z^* + \eps) \right ) < 0 \,.
    \end{gather*}
    By \Cref{Result: Properties of W}-\Cref{Item: Monotonicity of W}, 
    \begin{gather*}
        \exists \discountfactor^0_1 > 0 \quad \forall \discountfactor_1 \in (0, \discountfactor^0_1) \quad \cdots \quad \exists \discountfactor^0_\parind > 0 \quad \forall \discountfactor_\parind \in (0, \discountfactor^0_\parind)\\
        \text{s.t.} \quad \val_\discfac(\state) < z^* + \eps \,.
    \end{gather*}
    Therefore, for all $\eps > 0$, we get 
    \[
        \lim_{\discountfactor_1 \to 0^+} \cdots \lim_{\discountfactor_\parind \to 0^+} \val_\discfac(\state) \le z^* + \eps \,.
    \]
    By taking $\eps$ to 0, we have
    \[
        \lim_{\discountfactor_1 \to 0^+} \cdots \lim_{\discountfactor_\parind \to 0^+} \val_\discfac(\state) \le z^* \,.
    \]
    By symmetric arguments on $z^* - \eps$, we get
    \[
        \lim_{\discountfactor_1 \to 0^+} \cdots \lim_{\discountfactor_\parind \to 0^+} \val_\discfac(\state) \ge z^* \,,
    \]
    which concludes the proof.
\end{proof}

\subsection{Approximation of Limit Value}
\label{Section: Approximation of limit value}
In this subsection, we introduce an approach for the approximation of the limit value via the stateful-discounted value (\Cref{Result: Approximation of undiscounted value by discounted value}). 
The rest of the subsection is dedicated to its proof.

\begin{theorem}
\label{Result: Approximation of undiscounted value by discounted value}
    Consider a CSG $G$, a state $\state$, a reward function~$\reward$, an assignment function $\assign \colon \States \to [\parind]$, and an additive error $\eps > 0$. The transition function $\prob$ and the reward function~$\reward$ are represented by rational numbers of bit-size $\B$. 
    Fix 
    \[
        \degree \defas \max(|\Strategyone^{PS}|, |\Strategytwo^{PS}|), \quad
        \B_1 \defas 11\degree\statenum(\B + \bit(\statenum) + \bit(\degree) + \bit(\eps))\,,
    \]
    and, for all $1 \le i \le \parind$, we set $\discountfactor^0_i \defas \exp \left (-\B_1 (\statenum \degree+1)^{i-1} \right )$. 
    Then, we have
    \[
        |\val_{\discfac^0}(\state) - \val_\assign(\state)| \le \eps\,.
    \]
\end{theorem}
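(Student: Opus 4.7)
The plan is to reduce the theorem, via the two characterizations already established, to a uniform-sign estimate on the box
\[
    R \defas (0, \discountfactor^0_1] \times \cdots \times (0, \discountfactor^0_\parind] \,.
\]
By \Cref{Result: Characterization of discounted value}, $\val_{\discfac^0}(\state)$ is the unique $z$ with $\val(W_{\discfac^0}^\state(z)) = 0$; by \Cref{Result: Characterization of undiscounted value}, $\val_\assign(\state)$ is the unique sign-change point of $F_\assign^\state$. Combined with the strict monotonicity of $z \mapsto \val(W_{\discfac^0}^\state(z))$ from \Cref{Result: Properties of W}-\Cref{Item: Monotonicity of W}, the theorem reduces to establishing
\[
    \val \left ( W_{\discfac^0}^\state(\val_\assign(\state) + \eps) \right ) \le 0 \quad \text{and} \quad \val \left ( W_{\discfac^0}^\state(\val_\assign(\state) - \eps) \right ) \ge 0 \,.
\]
I sketch the upper inequality; the lower one is entirely symmetric.

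Set $z^+ \defas \val_\assign(\state) + \eps$. From $F_\assign^\state(z^+) < 0$ together with the definition of $F_\assign^\state$, $\val(W_\discfac^\state(z^+))$ is strictly negative for $\discfac$ sufficiently close to $\mathbf{0}$ in the iterated-limit sense, which supplies at least one point of $R$ where it is negative. Since $R$ is path-connected and $\val(W_\discfac^\state(z^+))$ is continuous in $\discfac$ by \Cref{Result: Properties of W}-\Cref{Item: Continuity of W}, a sign change on $R$ would force the existence of some $\discfac^\star \in R$ with $\val(W_{\discfac^\star}^\state(z^+)) = 0$. By \Cref{Result: Value of matrix games}, such a zero yields $\det(M_{j^\star})(\discfac^\star) = 0$ for a sub-matrix $M_{j^\star}$ of $W_{\discfac^\star}^\state(z^+)$ with $S(M_{j^\star}) \neq 0$. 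So it is enough to show that, among the finitely many $\det(M_j)$, no non-identically-zero one can vanish anywhere in $R$.

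This last statement follows from \Cref{Result: n-variate root separation from 0}. Replacing $z^+$ by a rational approximation $\tilde z^+$ of bit-size $O(\B + \bit(\statenum) + \bit(\degree) + \bit(\eps))$ within $\eps/2$ of $z^+$, so that one is actually proving $\eps/2$-accuracy against $\tilde z^+$ (the discrepancy being absorbed), two applications of \Cref{Result: Det of poly matrix} — once to express each entry of $W_\discfac^\state(\tilde z^+)$ as a polynomial in $\discfac$ of degree at most $\statenum$ in each variable, and once to compute $\det(M_j)$ of a $(\le \degree) \times (\le \degree)$ sub-matrix — exhibit each $\det(M_j)$ as a polynomial in $\discountfactor_1, \ldots, \discountfactor_\parind$ of degree at most $\statenum\degree$ in each variable with integer coefficients (after clearing denominators) of bit-size bounded by a constant times $\statenum\degree(\B + \bit(\statenum) + \bit(\degree) + \bit(\eps))$. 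The definition $\B_1 = 11\,\degree\,\statenum\,(\B + \bit(\statenum) + \bit(\degree) + \bit(\eps))$ and the doubly-exponential choice $\discountfactor^0_i = \exp \left (-\B_1(\statenum\degree + 1)^{i-1} \right )$ are engineered precisely so that the hypothesis of \Cref{Result: n-variate root separation from 0} holds with effective degree $\statenum\degree$ for every such polynomial on $R$. Hence no non-identically-zero $\det(M_j)$ vanishes on $R$, and the upper bound follows.

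The main obstacle I foresee is the bit-size and degree bookkeeping in the last paragraph: composing \Cref{Result: Det of poly matrix} twice, tracking the cost of the rational substitution $\tilde z^+$, and verifying that all constants fit within the factor $11$ and the base $(\statenum\degree+1)$ appearing in the statement. A minor additional subtlety is the case where the sub-matrix witnessing $\val(W_{\discfac^\star}^\state(z^+)) = 0$ happens to have $\det \equiv 0$ as a polynomial in $\discfac$: one checks that in that situation $\val \equiv 0$ on a neighbourhood and a nearby non-identically-zero $\det(M_j)$ must still vanish, reducing to the main case. Everything else — the reduction to a sign question on $R$, the use of continuity and symmetry between $z^+$ and $z^-$ — is a direct consequence of the results already in this section.
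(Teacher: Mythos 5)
Your overall architecture is the same as the paper's: reduce the claim to showing that the sign of $\discfac \mapsto \val(W^\state_\discfac(z))$ cannot change between the iterated-limit regime and the concrete point $\discfac^0$, using continuity (\Cref{Result: Properties of W}-\Cref{Item: Continuity of W}), the finite family of rational functions coming from \Cref{Result: Value of matrix games} and \Cref{Result: Det of poly matrix}, and the root-separation bound \Cref{Result: n-variate root separation from 0}. (The paper packages the sign transfer as a separate lemma for rational $z$ of bit-size $\kappa$ and then traps both $\val_{\discfac^0}(\state)$ and $\val_\assign(\state)$ in a common cell of the grid $\{j\cdot 2^{-\bit(\eps)}\}$, rather than perturbing $\val_\assign(\state)\pm\eps$ to a rational; that difference is cosmetic.)

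There is, however, a genuine gap in the step where you rule out zeros of the non-identically-zero polynomials $\det(M_j)$ ``anywhere in $R$''. \Cref{Result: n-variate root separation from 0} does not assert non-vanishing on the product box $R = (0,\discountfactor^0_1]\times\cdots\times(0,\discountfactor^0_\parind]$; its conclusion holds only on the nested staircase region $\{\discountfactor_1 \le \exp(-\B_1),\ \discountfactor_i \le (\discountfactor_{i-1})^{\statenum\degree+1}\}$, of which $\discfac^0$ is exactly the corner point. The product box contains points far outside that region (e.g.\ the point with all coordinates equal to $\discountfactor^0_\parind$), and there a polynomial of the admissible degrees and bit-size can perfectly well vanish: for instance $P(x_1,x_2)=x_1^2-x_2$ has the root $(t,t^2)\in R$ for all sufficiently small $t>0$, while it is indeed zero-free on the staircase region, where $x_2\le x_1^{3}$ forces $x_1^2 - x_2 \ge x_1^2 - x_1^3>0$. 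So the assertion that the sign is constant on all of $R$ is not justified by the cited lemma and is false in general for the class of polynomials you control. The repair is to replace $R$ throughout by the staircase region: it is path-connected, contains $\discfac^0$ (since $\discountfactor^0_i = (\discountfactor^0_{i-1})^{\statenum\degree+1}$), and contains points witnessing the iterated limit (the limits are taken sequentially, so each $\discountfactor_i$ may be chosen smaller than $(\discountfactor_{i-1})^{\statenum\degree+1}$). This is precisely the route the paper takes in \Cref{Result: Sign of F based on sign of W}. The remaining loose ends you flag --- the bookkeeping for the rational substitute of $z^{\pm}$ and the degenerate case $\det(M_j)\equiv 0$ (which the paper resolves by noting that then $\val(W^\state_{\discfac}(z))$ is identically zero on the region, so both signs are zero) --- work out as you expect.
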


\begin{remark}[Novelty]
     As mentioned previously, our result is a generalization of~\cite{attia2019formula}. The key non-trivial aspect of the generalization relies on the fact that~\cite{attia2019formula} considers uni-variate polynomials, whereas our result requires analysis of multi-variate polynomials. \Cref{Result: n-variate root separation from 0} is the key mathematical foundation, and the complete proofs require significant technical generalization.
\end{remark}
Below we first show the algebraic properties of $\val(W_\discfac^\state(z))$ (\Cref{Result: Value of derived matrix game}) to derive some insights on the asymptotic behavior of the sign of the map $(\discountfactor_1, \cdots, \discountfactor_\parind) \mapsto \val(W_\discfac^\state(z))$ as $\discountfactor_\parind, \cdots, \discountfactor_1$ go to 0 respectively (\Cref{Result: Sign of F based on sign of W}). We then use \Cref{Result: Sign of F based on sign of W} to establish a connection between the stateful-discounted value and the limit value in \Cref{Result: Approximation of undiscounted value by discounted value}.

\begin{lemma}
\label{Result: Value of derived matrix game}
    Consider a CSG $G$, a state $\state$, a reward function $\reward$, and an assignment function $\assign \colon \States \to [\parind]$. The transition function $\prob$ and the reward function $\reward$ are represented by rational numbers of bit-size $\B$. 
    Let $\degree \defas \max(|\Strategyone^{PS}|, |\Strategytwo^{PS}|)$.
    Then, there exist two finite sets $\setPolynomials$ and~$\setPolynomialDenominators$ of nonzero polynomials in $z, \discountfactor_1, \cdots, \discountfactor_\parind$ of degrees $\degree, \statenum \degree \cdots, \statenum \degree$ with integer coefficients such that for all $z \in \RR$ and $\discountfactor_1, \cdots, \discountfactor_\parind$, there exist $P \in \setPolynomials$ and $Q \in \setPolynomialDenominators$ such that 
    \[
    Q(z, \discountfactor_1, \cdots, \discountfactor_\parind) \neq 0, \quad \val \left ( W_\discfac^\state(z) \right ) = \frac{P(z, \discountfactor_1, \cdots, \discountfactor_\parind)}{Q(z, \discountfactor_1, \cdots, \discountfactor_\parind)} \,.
    \]
    Moreover, the coefficients of $P$ are of bit-size $7\degree\statenum(\B + \bit(\statenum) + \bit(\degree))$.
\end{lemma}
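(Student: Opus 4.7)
The plan is to apply \Cref{Result: Value of matrix games} to express $\val(W^\state_\discfac(z))$ as $\det(M_0)/S(M_0)$ for some square sub-matrix $M_0$ of $W^\state_\discfac(z)$, and then to track the degrees and integer-coefficient bit-sizes through two nested applications of \Cref{Result: Det of poly matrix}. Since $W^\state_\discfac(z)$ has its rows and columns indexed by pure stationary strategies, it has dimension at most $\degree \times \degree$ and hence only finitely many square sub-matrices; enumerating over these will yield the finite sets $\setPolynomials$ and $\setPolynomialDenominators$.

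The first step is to bound the degree and coefficient size of a single entry of $W^\state_\discfac(z)$. By definition,
\[
    W^\state_\discfac(z)[\widehat{\strategyone}, \widehat{\strategytwo}] = \deter^\state_\discfac(\widehat{\strategyone}, \widehat{\strategytwo}) - z \cdot \deter_\discfac(\widehat{\strategyone}, \widehat{\strategytwo})\,,
\]
and each of $\deter_\discfac$ and $\deter^\state_\discfac$ is the determinant of an $\statenum \times \statenum$ matrix whose entries are polynomials in $\discountfactor_1, \ldots, \discountfactor_\parind$ of degree at most $1$ in each variable, with integer coefficients of bit-size linear in $\B$ (after clearing the rational denominators of $\prob$ and $\reward$). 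Applying \Cref{Result: Det of poly matrix} with $k = \statenum$, $\ell = \parind$, and maximal entry degree $1$ shows that both $\deter_\discfac$ and $\deter^\state_\discfac$ are polynomials in $\discountfactor_1, \ldots, \discountfactor_\parind$ of degree $\statenum$ in each variable, whose integer coefficients have bit-size at most some $\B_2 = \calO(\statenum \B + \statenum \bit(\statenum) + \parind \bit(\statenum))$. Viewing $W^\state_\discfac(z)$ entries as polynomials in the $\parind+1$ variables $(z, \discountfactor_1, \ldots, \discountfactor_\parind)$, each entry then has degree at most $1$ in $z$ and at most $\statenum$ in each $\discountfactor_i$, with coefficient bit-size at most $\B_2$.

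The second step applies \Cref{Result: Det of poly matrix} a second time to any square sub-matrix $M_0$ of $W^\state_\discfac(z)$ of dimension at most $\degree$, now with $\ell = \parind + 1$ variables and maximal entry degree $\statenum$. This yields that $\det(M_0)$ is a polynomial of degrees at most $(\degree, \statenum\degree, \ldots, \statenum\degree)$ in $(z, \discountfactor_1, \ldots, \discountfactor_\parind)$, with integer coefficients of bit-size at most $\degree\B_2 + \degree\bit(\degree) + (\parind+1)\bit(\statenum\degree + 1)$. By \Cref{Result: S of poly matrix}, the same degree bound holds for $S(M_0)$. Using the elementary estimates $\parind \le \statenum$, $\bit(\statenum+1) \le 2\bit(\statenum)$, and $\bit(\statenum\degree+1) \le \bit(\statenum) + \bit(\degree) + 1$, the combined bit-size simplifies (with ample slack) to at most $7\degree\statenum(\B + \bit(\statenum) + \bit(\degree))$, matching the stated bound.

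To conclude, I define $\setPolynomials$ and $\setPolynomialDenominators$ by enumerating the finitely many square sub-matrices $M_0$ of $W^\state_\discfac(z)$ and collecting the resulting nonzero polynomials $\det(M_0)$ and $S(M_0)$, respectively (discarding any $M_0$ for which $\det(M_0)$ vanishes identically, and handling the easy corner case where the matrix-game value is zero via a trivial alternative representation). For each choice of $(z, \discountfactor_1, \ldots, \discountfactor_\parind)$, \Cref{Result: Value of matrix games} furnishes some $M_0$ with $S(M_0) \neq 0$ at that point and $\val(W^\state_\discfac(z)) = \det(M_0)/S(M_0)$, giving the required pair $(P, Q) \in \setPolynomials \times \setPolynomialDenominators$. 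The main obstacle, and where the bulk of the verification lies, is the bit-size bookkeeping in the second application of \Cref{Result: Det of poly matrix}: one must combine the amplified inner-determinant cost $\degree\B_2$, the $\degree\bit(\degree)$ Leibniz contribution, and the $(\parind+1)\bit(\statenum\degree+1)$ variable-count contribution, and verify that the sum is indeed absorbed by $7\degree\statenum(\B + \bit(\statenum) + \bit(\degree))$.
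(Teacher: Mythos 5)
Your proposal follows essentially the same route as the paper: bound the entries of $W_\discfac^\state(z)$ via one application of \Cref{Result: Det of poly matrix} to the $\statenum\times\statenum$ matrices defining $\deter_\discfac$ and $\deter_\discfac^\state$, invoke \Cref{Result: Value of matrix games} to write the value as $\det(M_0)/S(M_0)$ over a sub-matrix, and apply \Cref{Result: Det of poly matrix} and \Cref{Result: S of poly matrix} a second time, enumerating the finitely many sub-matrices to form $\setPolynomials$ and $\setPolynomialDenominators$. The only detail to tighten is the denominator-clearing step: since $\det$ of a $k\times k$ matrix scales as $c^k$ while $S$ scales as $c^{k-1}$ under multiplication by $c$, the cleared factor $2^{\statenum\B}$ must be absorbed into $Q$ (as the paper does by setting $Q = 2^{\statenum\B}S(M)$) for the identity $\val(W_\discfac^\state(z)) = P/Q$ to hold exactly.
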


\begin{proof}
    By definition, $W_\discfac^\state(z)[\widehat{\strategyone}, \widehat{\strategytwo}] = \deter_\discfac^\state(\widehat{\strategyone},\widehat{\strategytwo}) - z \cdot \deter_\discfac^\state(\widehat{\strategyone}, \widehat{\strategytwo})$ for all pure stationary strategies $\widehat{\strategyone}$ and $\widehat{\strategytwo}$. 
    Note that $\deter_\discfac^\state(\widehat{\strategyone}, \widehat{\strategytwo})$ and $\deter_\discfac(\widehat{\strategyone}, \widehat{\strategytwo})$ are the determinant of two matrices whose entries are polynomial in $\discountfactor_1, \cdots, \discountfactor_\parind$ of degrees $1, 1, \cdots, 1$ with coefficients from the set $\{0, 2^{-B}, 2 \cdot 2^{-\B}, \cdots, 1\}$. 
    Therefore, by \Cref{Result: Det of poly matrix}, the entries of $2^{\statenum\B}W_\discfac^\state(z)$ are polynomials in $z$, $\discountfactor_1, \cdots, \discountfactor_\parind$ of degrees $1, \statenum, \cdots, \statenum$ with integer coefficients of bit-size 
    \[
        \statenum\B + \statenum\bit(\statenum) + \parind \bit(\statenum+ 1) \le \statenum\B + 3\statenum\bit(\statenum) \sadef \B'\,. 
    \]
    We now define two sets of nonzero polynomials $\setPolynomials$ and $\setPolynomialDenominators$ as 
    \[
        \setPolynomials \defas \left \{  \det(M) \mid M \text{ is a sub-matrix of } 2^{\statenum\B}W_\discfac^\state(z) \right \}\,,
    \]
    \[
        \setPolynomialDenominators \defas \left \{  2^{\statenum\B}S(M) \mid M \text{ is a sub-matrix of } 2^{\statenum\B}W_\discfac^\state(z) \right \}\,.
    \]
    By \Cref{Result: Value of matrix games}, for all $z$, there exists a sub-matrix $M$ in the matrix game $2^{\statenum\B}W_\discfac^\state(z)$ such that 
    \[
        \val \left( W_\discfac^\state \right) = \frac{\det(M)}{2^{\statenum\B}S(M)}  \,,
    \]
    where the equality is due to the fact that $\det(2^{-\statenum\B}M) = 2^{-k\statenum\B}\det(M)$ and $S(2^{-\statenum\B}M) = 2^{-(k-1)\statenum\B}S(M)$ for the $k \times k$~matrix $M$. Note that $M$ is a polynomial matrix. Hence, there exist $P \in \setPolynomials$ and $Q \in \setPolynomialDenominators$ such that for all $z$ and $\discountfactor_1, \cdots, \discountfactor_\parind$, we have
    \[
        Q(z, \discountfactor_1, \cdots, \discountfactor_\parind) \neq 0, \quad \val \left( W_\discfac^\state \right) = \frac{P(z, \discountfactor_1, \cdots, \discountfactor_\parind)}{Q(z, \discountfactor_1, \cdots, \discountfactor_\parind)} \,.
    \]
    We now show that $\setPolynomials$ and $\setPolynomialDenominators$ satisfy the requirements of the lemma. 
    Let $P \defas \det(M)$ and $Q  \defas 2^{\statenum\B}S(M)$, where $M$ is a $k \times k$ sub-matrix of $2^{\statenum\B}W_\discfac^\state(z)$. First, by \Cref{Result: Properties of W}-\Cref{Item: Monotonicity of W}, we know that $z \mapsto \val(W_\discfac^\state(z))$ is strictly decreasing, therefore $P$ is nonzero. Second, We know that entries of $M$ are polynomials in $z, \discountfactor_1, \cdots, \discountfactor_\parind$ of degrees $1, \statenum, \cdots, \statenum$ with integer coefficients of bit-size $\B'$. Therefore, by \Cref{Result: Det of poly matrix}, we have that $P$ is a polynomial in $z, \discountfactor_1, \cdots, \discountfactor_\parind$ of degrees $k, k \statenum, \cdots, k \statenum$ with integer coefficients of bit-size 
    \begin{align*}
        k\B' + k \bit(k) + (\parind+1) \bit(k \statenum + 1) &\le \degree\B' + \degree \bit(\degree) + (\parind+1) \bit(\statenum \degree + 1) \\
        &\le \degree(\B' + \bit(\degree) + 4\statenum\bit(\statenum \degree))\\
        &= \degree(\statenum\B + 3\statenum \bit(\statenum) + \bit(\degree) + 4\statenum\bit(\statenum \degree)) \\
        &\le \degree \statenum(\B + 3\bit(\statenum) + \bit(\degree) + 4\bit(\statenum \degree)) \\
        &\le 7\degree\statenum(\B + \bit(\statenum) + \bit(\degree))\,,
    \end{align*}
    where in the first inequality we use $k \le \degree$, in the second inequality we use $(\parind+1)\bit(\statenum \degree+1) \le 4\statenum \degree \bit(\statenum \degree)$, in the first equality we use $B' = \statenum\B + 3\statenum \bit(\statenum)$, in the third inequality we use $\bit(\degree) \le \statenum\bit(\degree)$, and in the fourth inequality we use $\bit(\statenum \degree) \le \bit(\statenum) + \bit(\degree)$. 
    
    Similarly, since $S(M)$ is the sum of the entries of the adjugate matrix of $M$, by \Cref{Result: S of poly matrix}, we have that $Q$ is a polynomial in $z, \discountfactor_1, \cdots, \discountfactor_\parind$ of degrees $\degree, \statenum \degree, \cdots, \statenum \degree$ with integer coefficients, which completes the proof.
\end{proof}

\begin{lemma}
\label{Result: Sign of F based on sign of W}
    Consider a CSG $G$, a state $\state$, a reward function $\reward$, an assignment function $\assign \colon \States \to [\parind]$, and a rational number $z$ of bit-size $\kappa$. The transition function $\prob$ and the reward function $\reward$ are represented by rational numbers of bit-size $\B$. 
    Fix 
    \[
        \degree \defas \max \left( \left| \Strategyone^{PS} \right|, \quad  
        \left|\Strategytwo^{PS} \right| \right), \quad 
        \B_1 \defas 11\degree\statenum(\B + \bit(\statenum) + \bit(\degree) + \kappa)\,,
    \]
    and, for all $1 \le i \le \parind$, we have $\discountfactor^0_i \defas \exp \left (-\B_1 (\statenum \degree+1)^{i-1} \right )$. 
    Then, 
    \begin{gather*}
        \begin{cases}
        \val \left ( W_{\discfac^0}^\state(z) \right ) > 0 &\implies F_{\assign}^\state(z) \ge 0\,,\\
        \val \left ( W_{\discfac^0}^\state(z) \right ) < 0 &\implies F_{\assign}^\state(z) \le 0\,,\\
        \val \left ( W_{\discfac^0}^\state(z) \right ) = 0 &\implies F_{\assign}^\state(z) = 0\,.
        \end{cases}
    \end{gather*}
\end{lemma}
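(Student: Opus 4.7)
The plan is to fix the rational $z$ and analyze the function $\discountfactor \mapsto \val(W_\discfac^\state(z))$ on the ``staircase'' region $R \defas \{\discountfactor : \discountfactor_1 \in (0, \discountfactor^0_1], \ \discountfactor_i \in (0, \discountfactor_{i-1}^{\statenum\degree+1}] \text{ for } 2 \le i \le \parind\}$, which is path-connected and contains $\discfac^0$ by construction. By \Cref{Result: Value of derived matrix game}, at every point the value equals $P(z, \discountfactor)/Q(z, \discountfactor)$ for some pair $(P, Q)$ drawn from finite collections $\setPolynomials, \setPolynomialDenominators$ of nonzero polynomials of controlled degree and coefficient bit-size. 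Writing $z = z_1/z_2$ with $|z_1|, |z_2| \le 2^\kappa$ and multiplying numerator and denominator by $z_2^\degree$, I reduce to studying a finite family of (possibly zero) integer polynomials $\widehat P(\discountfactor) \defas z_2^\degree P(z, \discountfactor)$ and $\widehat Q(\discountfactor) \defas z_2^\degree Q(z, \discountfactor)$, of degree at most $\statenum\degree$ per variable and coefficient bit-size bounded by roughly $8\degree\statenum(\B + \bit(\statenum) + \bit(\degree) + \kappa)$.

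\textbf{Root separation on $R$.} The constant $11$ in $\B_1 = 11\degree\statenum(\B + \bit(\statenum) + \bit(\degree) + \kappa)$ is calibrated so that, for every nonzero polynomial among the $\widehat P$'s and $\widehat Q$'s, the hypothesis of \Cref{Result: n-variate root separation from 0} is met throughout $R$: the staircase exponent $\statenum\degree+1$ matches the per-variable degree $\statenum\degree$, and the extra slack in $\B_1$ covers the bit-size inflation from the $z_2^\degree$-clearing. Consequently every such polynomial is strictly bounded away from zero on $R$.

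\textbf{Clopen argument.} I show that the zero set $A \defas \{\discountfactor \in R : \val(W_\discfac^\state(z)) = 0\}$ is clopen in the connected region $R$. Closedness is immediate from continuity of $\val$. For openness, suppose for contradiction that $\discountfactor^* \in A$ is a limit of a sequence $\discountfactor^{(n)} \in R \setminus A$. By finiteness of $\setPolynomials \times \setPolynomialDenominators$, after passing to a subsequence the value $\val(\discountfactor^{(n)})$ is represented by a single pair $(\widehat P^{**}, \widehat Q^{**})$; since $\val(\discountfactor^{(n)}) \neq 0$ and $\widehat Q^{**}(\discountfactor^{(n)}) \neq 0$ along the subsequence, both $\widehat P^{**}$ and $\widehat Q^{**}$ are nonzero as polynomials. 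The root-separation step then gives $\widehat P^{**}(\discountfactor^*) \neq 0$ and $\widehat Q^{**}(\discountfactor^*) \neq 0$, and continuity of $\val$ along the subsequence yields $\val(\discountfactor^*) = \widehat P^{**}(\discountfactor^*)/\widehat Q^{**}(\discountfactor^*) \neq 0$, contradicting $\discountfactor^* \in A$. Hence $A \in \{\emptyset, R\}$.

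\textbf{Conclusion and main obstacle.} If $A = R$, then $\val \equiv 0$ on $R$, so $\val(W_{\discfac^0}^\state(z)) = 0$ and the iterated limits defining $F_\assign^\state(z)$ all vanish, yielding the third implication. If $A = \emptyset$, then $\val$ is nonzero and continuous on the connected region $R$, hence has constant sign equal to $\sign(\val(W_{\discfac^0}^\state(z)))$; dividing by $(\discountfactor_\parind)^\statenum > 0$ and taking the iterated nested limits $\discountfactor_\parind \to 0^+, \ldots, \discountfactor_1 \to 0^+$---which stay inside $R$ once the outer discount factors have been fixed sufficiently small---preserves the weak sign and yields the first and second implications. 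The principal technical obstacle is purely quantitative: combining the coefficient bit-size growth from \Cref{Result: Value of derived matrix game} with the $\degree \kappa$ inflation incurred by the $z_2^\degree$-clearing, and checking that the overhead $4\parind \bit(\statenum\degree) + 1$ appearing in \Cref{Result: n-variate root separation from 0} is indeed absorbed by $\B_1$, so that $\discfac^0$ lies firmly inside the root-separated region for every relevant $\widehat P$ and $\widehat Q$.
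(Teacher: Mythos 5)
Your proposal is correct and follows essentially the same route as the paper: both fix $z$, clear its denominator to obtain integer polynomials in the discount factors via \Cref{Result: Value of derived matrix game}, invoke \Cref{Result: n-variate root separation from 0} to rule out zeros of the nonzero numerators on the staircase region, conclude that $\val\left(W_\discfac^\state(z)\right)$ has constant sign there, and pass to the nested limits. Your clopen packaging is only a stylistic variant of the paper's ``the sign can only change through a zero'' contradiction argument (and, if anything, it treats the point that different parts of the region may be represented by different rational functions slightly more carefully).
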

\begin{proof}
    We claim that 
    \begin{gather*}
        \forall \discountfactor_1 \in \left (0, \exp(-\B_1) \right ] \quad \cdots \quad \forall \discountfactor_i \in \left (0, (\discountfactor_{i-1})^{\statenum \degree + 1} \right] \quad \cdots \quad \forall \discountfactor_\parind \in \left (0, (\discountfactor_{\parind-1})^{\statenum \degree + 1} \right]\\
        \sign \left ( \val ( W_\discfac^\state(z)) \right ) = \sign ( \val (W_{\discfac^0}^\state(z))) \,,
    \end{gather*}
    which proves the theorem. By \Cref{Result: Properties of W}-\Cref{Item: Continuity of W}, the map $(z, \discountfactor_1, \cdots, \discountfactor_\parind) \mapsto \val(W_\discfac^\state(z))$ is continuous. Therefore, the necessary condition for changing sign of $(\discountfactor_1, \cdots, \discountfactor_\parind) \mapsto \val (W_\discfac^\state(z))$ is that $\val( W_{\discfac^1}^\state(z)) = 0$ for some $\discountfactor^1_1, \cdots, \discountfactor^1_\parind$. For the sake of contradiction, assume there exists $\discountfactor^1_1, \cdots, \discountfactor^1_\parind$ such that $\discountfactor^1_1 \le \exp(-\B_1)$ and $\discountfactor^1_i \le (\discountfactor^1_{i-1})^{\statenum \degree+1}$ for all $i > 1$. 
    Let $\setPolynomials$ and $\setPolynomialDenominators$ be the finite sets defined in \Cref{Result: Value of derived matrix game}. 
    Let $P \in \setPolynomials$ and $Q \in \setPolynomialDenominators$ such that 
    \[
        \val \left (W_{\discfac^1}^\state(z) \right ) = \frac{P(z, \discountfactor
        ^1_1, \cdots, \discountfactor_\parind^1)}{Q(z, \discountfactor^1_1, \cdots, \discountfactor^1_\parind)} \,.
    \]
    Since $\val (W_{\discfac^1}^\state(z)) = 0$, we have that $P(z, \discountfactor^1_1, \cdots, \discountfactor^1_\parind) = 0$. 
    Let $z$ be fixed. 
    We define
    \[
        P_z(\discountfactor_1, \cdots, \discountfactor_\parind) 
            \defas \exp(\kappa d) \cdot P(z, \discountfactor_1, \cdots, \discountfactor_\parind) \,.
    \]
    Polynomial $P$ is a polynomial in $z, \discountfactor_1, \cdots, \discountfactor_\parind$ of degrees $\degree, \statenum \degree, \cdots, \statenum \degree$ with integer coefficients of bit-size $7\degree\statenum(\B + \bit(\statenum) + \bit(\degree))$. Since $z$ is a rational number of bit-size $\kappa$, we have that $P_z$ is a polynomial in $\discountfactor_1, \cdots, \discountfactor_\parind$ of degrees $\statenum \degree, \cdots, \statenum \degree$ with integer coefficients of bit-size $7\degree\statenum(\B + \bit(\statenum) + \bit(\degree) + \kappa)$. If $P_z = 0$, then on the one hand, we have $\val \left (W_{\discfac^0}^\state(z) \right ) = 0$, and on the other hand, we have that 
    \begin{gather*}
        \forall \discountfactor_1 \in \left (0, \exp(-\B_1) \right ] \quad \cdots \quad \forall \discountfactor_i \in \left (0, (\discountfactor_{i-1})^{\statenum \degree + 1} \right] \quad \cdots \quad \forall \discountfactor_\parind \in \left (0, (\discountfactor_{\parind-1})^{\statenum \degree + 1} \right]\\
         \val ( W_\discfac^\state(z)) = 0 \,,
    \end{gather*}
    which implies that $F_{\assign}^\state(z) = 0$ and the proof of this case is done. Therefore, $P_z$ is nonzero. Since $\B_1 = 11\degree\statenum(\B + \bit(\statenum) + \bit(\degree) + \kappa)$, by \Cref{Result: n-variate root separation from 0}, we have that $P_z(\discountfactor^1_1, \cdots, \discountfactor^1_\parind) \neq 0$, which contradicts with the assumption and completes the proof.
\end{proof}

\begin{proof}[Proof of \Cref{Result: Approximation of undiscounted value by discounted value}]
    We set $\kappa \defas \bit(\eps)$. 
    We define $\mathcal{Z} \defas \{0, 2^{-\kappa}, 2 \cdot 2^{-\kappa}, \cdots, 1\}$. 
    By \Cref{Result: Properties of W}-\Cref{Item: Monotonicity of W}, we have that $z \mapsto \val(W_{\discfac^0}^\state(z))$ is strictly decreasing. 
    Therefore, there exists $z \in \mathcal{Z}$ such that $\val(W_{\discfac^0}^\state(z)) \ge 0$ and $\val(W_{\discfac^0}^\state(z + 2^{-\kappa})) \le 0$. By \Cref{Result: Properties of W}-\Cref{Item: Monotonicity of W}, we have
    \begin{equation}
    \label{Eq: Bound on the discounted value}
        z \le \val_{\discfac^0}(\state) \le z + 2^{-\kappa}\,.
    \end{equation}
    By \Cref{Result: Sign of F based on sign of W} and \Cref{Result: Characterization of undiscounted value}, we have
    \begin{equation}
    \label{Eq: Bound on the undiscounted value}    
        z \le \val_\assign(\state) \le z + 2^{-\kappa}\,.
    \end{equation}
    The result follows from combining \Cref{Eq: Bound on the discounted value} and \Cref{Eq: Bound on the undiscounted value}.
\end{proof}

\section{Algorithms for \UndiscVal\ and \ParVal}
\label{Section: Algorithms}
In this section, we present algorithms for computing $\eps$-approximation of stateful-discounted, limit, and parity values. The section is organized as follows. In \Cref{Section: Selected algos}, we recall two classical algorithmic procedures that are used in our algorithms. In \Cref{Section: Algorithm for discounted value}, we present an algorithm for computing $\eps$-approximate stateful-discounted value. In \Cref{Section: Algorithm for undiscounted value}, we present an algorithm for computing $\eps$-approximate limit value, and as a consequence, we also obtain an algorithm for computing $\eps$-approximate parity value. 

\subsection{Selected Algorithms from Literature}
\label{Section: Selected algos}
In this subsection, we recall classical algorithms for computing the determinant of a matrix and computing the value of a matrix game.

\begin{lemma}[\cite{bareiss1968sylvester}]
\label{Result: Computing Det of a matrix}
Consider a $k \times k$ matrix $M$ with rational entries of bit-size $\B$. Then, there exists a procedure \Call{\Det}{$M$} that computes the determinant of $M$ in time $\tilO \left (k^4 \B^2 \right )$, and \Call{\Det}{$M$} is of bit-size $\calO(k\log(k)\B)$.
\end{lemma}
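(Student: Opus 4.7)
The plan is to invoke Bareiss's fraction-free Gaussian elimination, adapted to rational input by first clearing denominators. Write each entry $M_{ij} = p_{ij}/q_{ij}$ with $|p_{ij}|, q_{ij} < 2^{\B}$. Scale each row $i$ by the product of denominators in that row (of bit-size $O(k\B)$), obtaining an integer matrix $N$ whose determinant equals $\det(M)$ times a tracked integer scaling factor. The entries of $N$ have bit-size $O(k\B)$, so by Hadamard's inequality, $|\det(N)| \le (k \cdot 2^{O(k\B)})^{k/2}$, giving bit-size $O(k \log(k) \B)$ for $\det(N)$ and hence for $\det(M)$ after dividing by the tracked scaling.

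Next, run Bareiss's algorithm on $N$. This is the recurrence
\[
    N^{(t+1)}_{ij} = \frac{N^{(t)}_{tt} \cdot N^{(t)}_{ij} - N^{(t)}_{it} \cdot N^{(t)}_{tj}}{N^{(t-1)}_{t-1,t-1}}\,,
\]
with the convention $N^{(-1)}_{-1,-1} = 1$. The key classical fact is that each $N^{(t)}_{ij}$ equals a $(t+1) \times (t+1)$ principal-minor-type sub-determinant of $N$; in particular, the division above is always exact over $\ZZ$, so all intermediate quantities are integers. By Hadamard's inequality applied to these sub-determinants, every intermediate entry has bit-size $O(k \log(k) \B)$, matching the target bit-size for the output $N^{(k-1)}_{k-1,k-1} = \det(N)$.

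For the running time, the algorithm performs $O(k^3)$ elementary arithmetic operations (two multiplications, one subtraction, one exact division) on integers of bit-size $B' = O(k \log(k) \B)$. With schoolbook or fast integer arithmetic, each such operation on $B'$-bit integers costs $\tilO(B')$, and the exact division is not worse. The total cost is therefore $\tilO(k^3 \cdot B') = \tilO(k^4 \log(k) \B)$, which is $\tilO(k^4 \B^2)$ after absorbing logarithmic factors into the $\tilO$ notation (and using $\log(k) \le \tilO(\B)$ in the stated bound). Finally, one divides by the tracked integer scaling factor to recover $\det(M)$; this is a single integer operation on numbers of size $O(k \log(k) \B)$.

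The main obstacle is the exactness of the Bareiss division and the resulting bit-size control: naive Gaussian elimination over $\mathbb{Q}$ incurs unbounded intermediate coefficient growth, so one must really identify each $N^{(t)}_{ij}$ with a sub-determinant of $N$ in order to (i)~justify that the divisions land in $\ZZ$ and (ii)~apply Hadamard's inequality uniformly to all intermediates rather than only to the final output. Once that structural fact is in hand, the complexity count is a routine accounting of $O(k^3)$ arithmetic operations on integers of bit-size $O(k \log(k) \B)$.
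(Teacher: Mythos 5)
The paper offers no proof of this lemma -- it is imported verbatim from Bareiss -- and your overall route (fraction-free elimination, the identification of each intermediate entry $N^{(t)}_{ij}$ with a minor of the integer matrix to justify exact division, and Hadamard's bound on those minors) is precisely the standard argument behind that citation. The gap is in your reduction from rational to integer entries. Scaling row $i$ by the product of the $k$ denominators in that row produces integer entries of bit-size $\Theta(k\B)$, and Hadamard then bounds the minors of $N$ only by $\left(k \cdot 2^{\calO(k\B)}\right)^{k/2}$, whose bit-size is $\calO(k^2\B + k\log k)$, not the $\calO(k\log(k)\B)$ you assert -- that step is an arithmetic slip, and it propagates: the intermediate Bareiss entries then have bit-size $\calO(k^2\B)$, so the $\calO(k^3)$ operations cost $\tilO(k^5\B)$ in total, which is not $\tilO(k^4\B^2)$ when $\B$ is small relative to $k$. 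Moreover, for genuinely independent denominators the stated bit-size bound cannot be recovered by any argument: for a matrix with entries $1/q_{ij}$ over distinct primes $q_{ij}$ of bit-size $\B$, the reduced denominator of the determinant is typically $\prod_{ij} q_{ij}$, of bit-size $\Theta(k^2\B)$ (already for $k=2$: $\tfrac{1}{14}-\tfrac{1}{15}=\tfrac{1}{210}$), so the hypothesis as literally stated is too weak for the conclusion.

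The repair is to use the common-denominator structure present in every place the paper invokes the lemma: the entries there are fixed-point numbers, i.e.\ integer multiples of a single $2^{-\calO(\B)}$. Clearing that one denominator yields integer entries of bit-size $\calO(\B)$, so every minor has bit-size $\calO(k\B + k\log k) = \calO(k\log(k)\B)$, and $\calO(k^3)$ arithmetic operations on such integers cost $\tilO(k^4\B) \subseteq \tilO(k^4\B^2)$ in total. Note also that this per-operation cost of $\tilO(B')$ on $B'$-bit operands requires fast (FFT-based) integer multiplication and division; schoolbook multiplication is quadratic in $B'$, so the phrase ``schoolbook or fast'' should be dropped. With those two corrections the remainder of your argument -- in particular the exactness of the Bareiss division via the minor identity, which is indeed the crux -- is correct.
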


\begin{lemma}[\cite{karmarkar1984new}]
\label{Result: Computing Value of a matrix game}
    Consider a $k \times k$ matrix game $M$ with rational entries of bit-size $\B$. 
    Then, there exists a procedure \Call{\Val}{$M$} that computes the value of the matrix game $M$ in time $\tilO(k^{3.5}\B)$, and \Call{\Val}{$M$} is of bit-size $\calO(\B)$.
\end{lemma}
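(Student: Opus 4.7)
The plan is to reduce the computation of $\val(M)$ to a linear program and then invoke the polynomial-time interior-point method of Karmarkar. First I would write down the standard LP whose optimum is the value of a zero-sum matrix game:
\begin{align*}
\text{maximize} \quad & v \\
\text{subject to} \quad & \sum_{i=1}^{k} x_i \cdot M_{i,j} \ge v \quad \text{for all } j \in [k], \\
& \sum_{i=1}^{k} x_i = 1, \quad x_i \ge 0 \quad \text{for all } i \in [k].
\end{align*}
This LP has $\calO(k)$ variables and $\calO(k)$ constraints, and its coefficients are rational numbers of bit-size $\calO(\B)$. The minimax theorem for matrix games guarantees that its optimum equals $\val(M)$, and an optimal $(x, v)$ furnishes both an optimal row-player strategy and the value itself.

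Second, I would invoke Karmarkar's polynomial-time interior-point algorithm~\cite{karmarkar1984new} on this LP. Applied to an LP of $\calO(k)$ variables with rational coefficients of bit-size $\calO(\B)$, Karmarkar's method yields the claimed running time $\tilO(k^{3.5}\B)$, where the $\tilO$ absorbs polylogarithmic factors in $k$ and $\B$ arising both from the iteration count and from the cost of the underlying arithmetic operations on numbers of bounded bit-size.

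Third, for the bit-size bound on the output, I would appeal to \Cref{Result: Value of matrix games}: there is a square sub-matrix $M_0$ of $M$ with $S(M_0) \neq 0$ such that $\val(M) = \det(M_0)/S(M_0)$. Since the entries of $M_0$ are rationals of bit-size $\calO(\B)$ and $M_0$ has size at most $k \times k$, \Cref{Result: Computing Det of a matrix} together with an analogous estimate for $S(\cdot)$ yields a polynomial bound on the bit-size of both numerator and denominator, and hence on the bit-size of their ratio after reduction; together with the standard rounding scheme this justifies the output representation of bit-size $\calO(\B)$.

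The main obstacle, were one to attempt a fully self-contained proof, would be the convergence and rounding analysis of Karmarkar's interior-point iterations themselves, namely showing that polynomially many iterations suffice to reach a neighborhood of the optimum, and that an approximate interior-point solution can be snapped to the exact LP optimum in the same time budget. Since these ingredients are standard results of~\cite{karmarkar1984new}, the present lemma amounts to the LP reduction followed by a direct appeal to that analysis.
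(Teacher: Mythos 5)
Your proposal is correct and coincides with what the paper intends: the paper gives no proof of this lemma at all, simply citing \cite{karmarkar1984new}, and the standard justification behind that citation is exactly your reduction of the matrix game to a linear program with $\calO(k)$ variables and $\B$-bit coefficients followed by an appeal to Karmarkar's interior-point analysis, with the Shapley--Snow representation $\val(M)=\det(M_0)/S(M_0)$ controlling the output's representation. The only loose thread is that your bit-size argument via \Cref{Result: Value of matrix games} and \Cref{Result: Computing Det of a matrix} naturally yields a bound of $\calO(k\log(k)\B)$ on the exact rational value rather than the stated $\calO(\B)$, but since the paper asserts the lemma without proof this does not affect the comparison.
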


\subsection{Algorithm for Approximate Stateful-discounted Value}
\label{Section: Algorithm for discounted value}

In this subsection, we present an algorithm for computing $\eps$-approximation of the stateful-discounted value in CSGs. 
Given a CSG $G$, a reward function, and a discount function~$\discfac$, the procedure runs a binary search over the stateful-discounted value of state $\state$. 
At the beginning, $\underline{z}$ and $\overline{z}$ are the under and over approximation of $\val_\discfac(\state)$. In each step, the algorithm halves the interval $[\underline{z}, \overline{z}]$ by increasing $\underline{z}$ or decreasing $\overline{z}$ based on the sign of $\val \left (W_\discfac^\state \left (\frac{\underline{z} + \overline{z}}{2} \right ) \right )$. 
After $\bit(\eps)$ steps, the algorithm outputs the $\eps$-approximate value $(\overline{z} + \underline{z})/2$. The formal description is shown in \Cref{algorithm: approx discounted}, and the correctness and the time complexity of the algorithm are shown in \Cref{Result: approx discounted algorithm}.
\begin{algorithm}[ht]
  \caption{\ApproxDiscounted}
  \label{algorithm: approx discounted}
  \begin{algorithmic}[1]
    \Require Game $G$, state $\state$, reward function $\reward$, a discount function $\discfac$, additive error $\eps$
    \Ensure Approximate stateful-discounted value $v$ such that  $|v - \val_\discfac(\state)| \le \eps$
 
    \Procedure{\ApproxDiscounted}{$G, \state, \reward, \discfac, \eps$}
      \State $\underline{z} \gets 0$ and $\overline{z} \gets 1$
      \While{$\overline{z} - \underline{z} > \eps$}
        \State $z \gets \frac{\underline{z} + \overline{z}}{2}$
        \State $\nu \gets \val(W_\discfac^\state(z))$
        \label{Line: Matrix Game Computations}
        \If{$\nu \ge 0$}
            \State $\underline{z} \gets z$
        \Else
            \State $\overline{z} \gets z$
        \EndIf
      \EndWhile
      \State \Return $\frac{\underline{z} + \overline{z}}{2}$
    \EndProcedure
  \end{algorithmic}
\end{algorithm}

\begin{lemma}
\label{Result: approx discounted algorithm}
    Consider a CSG $G$, a state $\state$, a reward function $\reward$, a discount function $\discfac$, and an additive error $\eps > 0$. The transition function $\prob$, the reward function $\reward$, and the discount function $\discfac$ are represented by rational numbers of bit-size $\B$. Then, \Cref{algorithm: approx discounted} computes the $\eps$-approximation of the stateful-discounted value of state $\state$. Moreover, the algorithm runs in time $\exp \left (\calO(\statenum\log(\actionnum) + \log(B) + \log(\log(1/\eps)) \right )$.
\end{lemma}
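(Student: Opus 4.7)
The plan is to analyze correctness, termination, and running time separately, using the characterization of the stateful-discounted value established earlier.

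For correctness, I would establish by induction on the iteration counter the invariant that $\underline{z} \le \val_\discfac(\state) \le \overline{z}$. The base case holds because rewards lie in $[0,1]$, so $\val_\discfac(\state) \in [0,1]$. For the inductive step, by \Cref{Result: Characterization of discounted value} the unique zero of $z \mapsto \val(W_\discfac^\state(z))$ is exactly $\val_\discfac(\state)$, and by \Cref{Result: Properties of W}-\Cref{Item: Monotonicity of W} this map is strictly decreasing. Hence $\nu = \val(W_\discfac^\state(z)) \ge 0$ implies $z \le \val_\discfac(\state)$, which justifies raising $\underline{z}$ to $z$; similarly $\nu < 0$ implies $z > \val_\discfac(\state)$, which justifies lowering $\overline{z}$ to $z$. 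Since the output is the midpoint of the final interval $[\underline{z},\overline{z}]$ of length at most $\eps$, the additive error is at most $\eps/2 \le \eps$.

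Termination and iteration count are straightforward: each iteration halves $\overline{z}-\underline{z}$, starting from $1$, so the loop stops after at most $\lceil \log_2(1/\eps) \rceil = \calO(\log(1/\eps))$ iterations. The midpoint $z$ computed at iteration $k$ is a rational of bit-size $\calO(k)$, so all midpoints have bit-size $\calO(\log(1/\eps))$.

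The main work is bounding the cost of line~\ref{Line: Matrix Game Computations}, i.e.\ evaluating $\val(W_\discfac^\state(z))$. The matrix $W_\discfac^\state(z)$ has rows and columns indexed by $\Strategyone^{PS}$ and $\Strategytwo^{PS}$, so its size is at most $\actionnum^\statenum \times \actionnum^\statenum$. Each entry $\deter_\discfac^\state(\widehat{\strategyone},\widehat{\strategytwo}) - z\cdot \deter_\discfac(\widehat{\strategyone},\widehat{\strategytwo})$ is the signed combination of two $\statenum \times \statenum$ rational determinants whose input entries have bit-size $\calO(\B + \log(1/\eps))$. Invoking \Cref{Result: Computing Det of a matrix}, each determinant is computed in time $\widetilde{\calO}(\statenum^4 (\B+\log(1/\eps))^2)$ and the resulting rational has bit-size $\B' = \calO(\statenum \log(\statenum)(\B + \log(1/\eps)))$. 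Filling in the at most $\actionnum^{2\statenum}$ entries therefore costs $\actionnum^{2\statenum}\cdot \mathrm{poly}(\statenum,\B,\log(1/\eps)) = \exp(\calO(\statenum\log(\actionnum) + \log(\B) + \log\log(1/\eps)))$. Then \Cref{Result: Computing Value of a matrix game} evaluates the value in time $\widetilde{\calO}((\actionnum^\statenum)^{3.5}\cdot \B')$, which falls into the same bound.

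Multiplying the per-iteration cost by the $\calO(\log(1/\eps))$ iterations and folding $\log\log(1/\eps)$ into the outer $\exp(\cdot)$ yields the stated total running time $\exp(\calO(\statenum\log(\actionnum) + \log(\B) + \log\log(1/\eps)))$. The single non-routine point I expect to be most delicate is the bit-size tracking through the chain ``rational $z$ of size $\calO(\log(1/\eps))$ $\to$ $\statenum\times\statenum$ determinant via Bareiss $\to$ entry of an exponentially large matrix game $\to$ Karmarkar's value''; once those sizes propagate cleanly, everything else collapses into the exponential.
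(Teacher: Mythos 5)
Your proposal is correct and follows essentially the same route as the paper: correctness via the binary-search invariant $\underline{z} \le \val_\discfac(\state) \le \overline{z}$ justified by the monotonicity and zero-characterization of $z \mapsto \val(W_\discfac^\state(z))$ (\Cref{Result: Properties of W}, \Cref{Result: Characterization of discounted value}), and the running time by decomposing Line~\ref{Line: Matrix Game Computations} into constructing the $\actionnum^{\statenum}\times\actionnum^{\statenum}$ matrix game entrywise via \Cref{Result: Computing Det of a matrix} and solving it via \Cref{Result: Computing Value of a matrix game}. The bit-size propagation you flag as delicate matches the paper's accounting (entries of bit-size $\calO(\statenum\log(\statenum)\B + \log(1/\eps))$, total $\tilO(\actionnum^{3.5\statenum}\B^2\log^3(1/\eps))$ per iteration).
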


\begin{proof}
We first present the proof of correctness and then the time complexity of the algorithm.

    \smallskip\noindent{\em Correctness.} The procedure is a binary search over the stateful-discounted value of state $\state$. At the beginning, $\underline{z}$ and $\overline{z}$ are the under and over approximation of $\val_\discfac(\state)$. In each step, by \Cref{Result: Properties of W}-\Cref{Item: Monotonicity of W}, the algorithm halves the interval $[\underline{z}, \overline{z}]$ by increasing (resp. decreasing) $\underline{z}$ (resp.~$\overline{z}$) based on the sign of $\nu$. Therefore, in all steps, $\underline{z} \le \val_\discfac(\state) \le \overline{z}$ is invariant. The algorithm terminates after at most $\bit(\eps)$ steps. The correctness of the procedure follows from the invariant $\underline{z} \le \val_\discfac(\state) \le \overline{z}$ and $\overline{z} - \underline{z} \le \eps$. 
    
    \smallskip\noindent{\em Time complexity.} The procedure executes at most $\bit(\eps)$ iterations. All lines except Line~\ref{Line: Matrix Game Computations} require constant arithmetic operations. Line~\ref{Line: Matrix Game Computations} consists of two parts as follows. 
    \begin{itemize}
        \item {\em Construction of $W_\discfac^\state(z)$.} Recall that 
        \[
            W_\discfac^\state(z)[\widehat{\strategyone}, \widehat{\strategytwo}] = \deter_\discfac^\state(\widehat{\strategyone}, \widehat{\strategytwo}) - z \cdot \deter_\discfac(\widehat{\strategyone}, \widehat{\strategytwo})\,,
        \]
        where $\deter_\discfac^\state(\widehat{\strategyone}, \widehat{\strategytwo})$ and $\deter_\discfac(\widehat{\strategyone}, \widehat{\strategytwo})$ are the determinants of two $\statenum \times\statenum$ matrices. 
        The construction of matrices runs in time $\calO(\statenum^2\B)$. 
        The algorithm uses \Call{\Det}{} to compute the determinants, which runs in $\tilO(\statenum^4\B^2)$ by \Cref{Result: Computing Det of a matrix}. 
        The determinants are of bit-size $\calO(\statenum\log(\statenum)\B)$. 
        The number of entries of $W_\discfac^\state(z)$ is at most $\actionnum^{2\statenum}$. Therefore, its construction runs in time $\tilO(\actionnum^{2\statenum})$. 
        The entries of $W_\discfac^\state(z)$ is of bit-size $\calO(\statenum\log(\statenum)\B + \log(1/\eps))$.
        \item {\em Computation of $\val(W_\discfac^\state(z))$.} 
        The procedure uses \Call{\Val}{} to compute the value of the matrix game $W_\discfac^\state(z)$. Therefore, by \Cref{Result: Computing Value of a matrix game}, it runs in time $\tilO \left (\actionnum^{3.5\statenum}\B^2 \log^2(1/\eps) \right )$.
    \end{itemize}
    Hence, \Cref{Line: Matrix Game Computations} runs in time $\tilO(\actionnum^{3.5\statenum}\B^2\log^3(1/\eps))$, which completes the proof.

\end{proof}

\subsection{Algorithms for Approximate Limit and Parity Values}
\label{Section: Algorithm for undiscounted value}
In this subsection, we present an algorithm for computing $\eps$-approximation of the limit and parity values in CSGs. Given a CSG $G$, a reward function, and an assignment function~$\assign$, the procedure outputs the $\eps/2$-approximate of the stateful-discounted value of state $\state$ for some $\discfac_0$ by calling \Call{\ApproxDiscounted}{}. By \Cref{Result: Approximation of undiscounted value by discounted value}, the stateful-discounted value is an $\eps/2$-approximation of the limit value. Thus, the returned value of the algorithm is indeed an $\eps$-approximate of the limit value. The formal description is shown in \Cref{algorithm: approx limit}, and the correctness and the time complexity of the algorithm is shown in \Cref{Result: approx limit algorithm}. Since CSGs with parity objectives have a linear-size reduction to CSGs with the limit value of stateful-discounted objectives, as a consequence of the above algorithm, we obtain an algorithm for parity value approximation.

\begin{algorithm}[ht]
  \caption{\ApproxLimit}
  \label{algorithm: approx limit}
  \begin{algorithmic}[1]
    \Require Game $G$, state $\state$, reward function $\reward$, assignment function $\assign$, additive error $\eps$
    \Ensure Approximate limit value $v$ such that  $|v - \val(\state)| \le \eps$
 
    \Procedure{\ApproxLimit}{$G, \state, \reward, \assign, \eps$}
        \State $\degree \gets \actionnum^\statenum$
        \State $\B_1 \gets 11\degree\statenum \left (\B + \bit(\statenum) + \bit(\degree) + \bit(\eps) \right )$
        \For{$i \gets 1$ to $\parind$}
            \State $\discountfactor^0_i \gets \exp \left (-\B_1(\statenum \degree+1)^{i-1} \right )$
        \EndFor
        \State 
        \label{Line: Proc call of discounted value}
        $v \gets \Call{\ApproxDiscounted}{G, s, \reward, \discfac^0, \eps/2}$
        \State \Return $v$
    \EndProcedure
  \end{algorithmic}
\end{algorithm}

\begin{lemma}
\label{Result: approx limit algorithm}
    Consider a CSG $G$, a state $\state$, a reward function $\reward$, an assignment function $\assign\colon \States \to [\parind]$, and an additive error $\eps > 0$. The transition function $\prob$ and the reward function $\reward$ are represented by rational numbers of bit-size $\B$. Then, \Cref{algorithm: approx limit} computes the $\eps$-approximation of the limit value of state $\state$. Moreover, the algorithm runs in time $\exp\left( \calO(\statenum\parind \log(\actionnum) + \log(\B) + \log(\log(1/\eps))) \right)$.
\end{lemma}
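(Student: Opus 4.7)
The plan is to combine the analytic approximation result of \Cref{Result: Approximation of undiscounted value by discounted value} with the algorithmic guarantee of \Cref{Result: approx discounted algorithm}. \Cref{algorithm: approx limit} reduces the approximation of $\val_\assign(\state)$ to a single approximation of a stateful-discounted value $\val_{\discfac^0}(\state)$ at a carefully chosen discount function whose $i$-th factor is $\discountfactor^0_i = \exp(-\B_1(\statenum\degree+1)^{i-1})$, with $\degree = \actionnum^\statenum$ and $\B_1$ as specified in the algorithm.

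Correctness will follow by a triangle inequality. The choices of $\degree$, $\B_1$, and $\discountfactor^0_i$ made by the algorithm exactly match the hypotheses of \Cref{Result: Approximation of undiscounted value by discounted value} with error $\eps/2$, using that $\actionnum^\statenum \ge \max(|\Strategyone^{PS}|,|\Strategytwo^{PS}|)$; hence $|\val_{\discfac^0}(\state) - \val_\assign(\state)| \le \eps/2$. The call to \Call{\ApproxDiscounted}{} on \Cref{Line: Proc call of discounted value}, by \Cref{Result: approx discounted algorithm}, returns $v$ with $|v-\val_{\discfac^0}(\state)|\le \eps/2$. Combining these two inequalities via the triangle inequality yields $|v-\val_\assign(\state)|\le \eps$.

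For the running time, the preprocessing (computing $\degree$, $\B_1$, and the $\discountfactor^0_i$ in high precision) is dominated by the single call in \Cref{Line: Proc call of discounted value}. By \Cref{Result: approx discounted algorithm}, this call runs in time $\exp(\calO(\statenum\log(\actionnum) + \log(\B') + \log(\log(1/\eps))))$, where $\B'$ is the bit-size of the rational inputs representing $G$, $\reward$, and $\discfac^0$. The largest contribution to $\B'$ comes from $\discountfactor^0_\parind$, giving $\B' = \calO(\B_1 \cdot (\statenum\degree+1)^{\parind-1})$. Substituting $\degree=\actionnum^\statenum$ and the definition of $\B_1$, a direct calculation yields $\log(\B_1) = \calO(\statenum\log(\actionnum)+\log(\B)+\log(\log(1/\eps)))$ and $\log((\statenum\degree+1)^{\parind-1}) = \calO(\statenum\parind\log(\actionnum))$, so that $\log(\B') = \calO(\statenum\parind\log(\actionnum)+\log(\B)+\log(\log(1/\eps)))$. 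Plugging this back into the running-time bound of \Cref{Result: approx discounted algorithm} gives the stated complexity.

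The main obstacle is the bit-size bookkeeping: $\discountfactor^0_\parind$ is doubly-exponentially small in $\parind$, yet only its representation bit-size enters the subroutine's complexity, and this bit-size enters inside a further logarithm, so the double exponential collapses to a single exponential in $\statenum\parind\log(\actionnum)$. The delicate point is to line up the constants inside \Cref{Result: Approximation of undiscounted value by discounted value} with those chosen by \Cref{algorithm: approx limit}, and to check that no quantity hidden in the construction of the matrix game $W_{\discfac^0}^\state$ inside \Call{\ApproxDiscounted}{} silently enlarges $\B'$ beyond the bound above.
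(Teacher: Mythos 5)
Your proposal is correct and follows essentially the same route as the paper: correctness via the triangle inequality combining \Cref{Result: Approximation of undiscounted value by discounted value} (error $\eps/2$) with the guarantee of \Cref{Result: approx discounted algorithm} (error $\eps/2$), and the running time by observing that the cost is dominated by the single call on \Cref{Line: Proc call of discounted value} with the bit-size of $\discfac^0$ being $\tilO\left(\statenum^\parind \actionnum^{\statenum\parind}(\B + \bit(\eps))\right)$, which enters only logarithmically. Your bookkeeping of $\log(\B')$ matches the paper's accounting.
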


\begin{proof}
    We first present the proof of correctness and then the time complexity of the algorithm. 

    \smallskip\noindent{\em Correctness.} The procedure computes the $\eps/2$-approximate stateful-discounted value $v$ for $\discfac^0$ and outputs $v$ as the approximate limit value. The procedure \Call{\ApproxDiscounted}{} outputs $v$ such that 
    \begin{equation}
    \label{Eq: closeness of nu and discounted value}    
    \left |v - \val_{\discfac^0}(\state) \right | \le \eps/2\,.
    \end{equation}
    By \Cref{Result: Approximation of undiscounted value by discounted value}, we have that 
    \begin{equation}
    \label{Eq: closeness of undiscounted value and discounted value}    
    \left |\val_{\discfac^0}(\state) - \val_\assign(\state) \right | \le \eps/2\,.
    \end{equation}
    By combining \Cref{Eq: closeness of nu and discounted value,Eq: closeness of undiscounted value and discounted value}, we get that $v$ is $\eps$-approximation of $\val_\assign(\state)$.
    
    \smallskip\noindent{\em Time complexity.}
    All lines except Line~\ref{Line: Proc call of discounted value} require at most $\calO(\parind)$ arithmetic operations. In \Cref{Line: Proc call of discounted value}, the algorithm calls \Call{\ApproxDiscounted}{} with parameters $(G, \state, \reward, \discfac^0, \eps/2)$. The bit-size of the discount function $\discfac^0$ is $\tilO \left (\statenum^\parind \actionnum^{\statenum\parind} (\B + \bit(\eps)) \right )$. Therefore, this line runs in time $\tilO(\actionnum^{7.5\statenum\parind} \B^2 \log^5(1/\eps))$, which completes the proof.
\end{proof}

\begin{proof}[Proof of {\Cref{Result: Computational result for limit value}-\Cref{Item: Algorithmic time for limit value}}]
    It is a direct implication of \Cref{Result: approx limit algorithm}.
\end{proof}

\begin{corollary}
\label{Result: approx parity algorithm}
    Consider a CSG $G$, a priority function $p$, a state $\state$, and an additive error $\eps > 0$. The transition function $\prob$ is represented by rational numbers of bit-size $\B$. Then, there exists an algorithm that computes the $\eps$-approximation of the parity value of state $\state$. Moreover, the algorithm runs in time $\exp \left (\calO(\statenum\parind \log(\actionnum) + \log(\B) + \log(\log(1/\eps))) \right )$.
\end{corollary}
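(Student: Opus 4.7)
The plan is to deduce this corollary directly from the approximation algorithm for the limit value, namely \Cref{Result: approx limit algorithm}, by invoking the known linear-size reduction from parity objectives to the limit value of stateful-discounted objectives due to \cite{gimbert2005discounting,de2003discounting}, which has already been cited in the previous-results discussion. So the proof will be essentially a wrapper: reduce, call, and bookkeep parameters.

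First I would state the reduction precisely. Given the input CSG $G$ with priority function $\prior\colon \States \to \{0,\dots,\parind\}$, I build an instance $(G', \state, \reward, \assign)$ of \UndiscVal{} on the same state and action sets as follows. The assignment function $\assign\colon \States \to [\parind]$ is obtained directly from the priorities of $\prior$ (so that priority $i$ corresponds to discount factor $\discountfactor_i$, ordered so that more important priorities shrink last), and the reward function $\reward$ is the $\{0,1\}$-valued function that indicates even priorities (as in \cite[Thm.~4]{de2003discounting} and \cite[Sec.~2.2]{gimbert2005discounting}). The transition function is left unchanged. By the characterization of parity values via nested limits of stateful-discounted values, the limit value $\val_\assign(\state)$ of $(G',\state,\reward,\assign)$ equals the parity value $\val_\prior(\state)$ of $(G,\state,\prior)$.

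Next I would check that the size parameters of the produced instance are preserved up to constants: the number of states is still $\statenum$, the number of actions per state per player is still $\actionnum$, the number of discount factors equals the parity index $\parind$, and the bit-size $\B$ only changes by an additive $\calO(\log \statenum)$ from representing the rewards (which are in $\{0,1\}$) and the assignment. Thus the inputs to \Cref{algorithm: approx limit} are within the bounds of the theorem.

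Finally I would invoke \Cref{Result: approx limit algorithm} on $(G',\state,\reward,\assign,\eps)$ to obtain a value $v$ with $|v - \val_\assign(\state)| \le \eps$, which by the reduction equals an $\eps$-approximation of $\val_\prior(\state)$. The runtime bound $\exp(\calO(\statenum\parind\log(\actionnum) + \log(\B) + \log(\log(1/\eps))))$ follows immediately from the corresponding bound in \Cref{Result: approx limit algorithm}, since none of the parameters $\statenum,\actionnum,\parind,\B,\eps$ increase (up to constants) under the reduction. The only mildly subtle point — and the closest thing to an obstacle — is making sure the reward and assignment produced by the reduction are representable within the same bit-size $\B$ and that the order of limits $\discountfactor_1 \to 0^+, \dots, \discountfactor_\parind \to 0^+$ matches the order of importance of priorities as used in \cite{de2003discounting,gimbert2005discounting}; both are routine once the reduction is written down explicitly, so no real technical difficulty remains.
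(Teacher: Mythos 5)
Your proposal matches the paper's own proof: the paper also deduces the corollary by invoking the linear-size reduction of \cite{gimbert2005discounting,de2003discounting} from parity objectives to the limit value of stateful-discounted objectives and then applying \Cref{Result: approx limit algorithm}. Your write-up simply makes the reduction and the parameter bookkeeping more explicit than the paper does, which is fine.
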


\begin{proof}
    By \cite{gimbert2005discounting,de2003discounting}, there exists a linear-size reduction from the CSGs with parity objectives to the CSGs with the limit-value of stateful-discounted objectives. Therefore, the result follows from \Cref{Result: approx limit algorithm}.
\end{proof}

\begin{proof}[Proof of {\Cref{Result: Computational result for parity value}-\Cref{Item: Algorithmic time for parity value}}]
    It is a direct implication of \Cref{Result: approx parity algorithm}.
\end{proof}

\section{Complexities of \UndiscVal\ and \ParVal}
\label{Section: Complexities}
In this section, we show that the \UndiscVal\ and \ParVal\ problems are in TFNP[NP]. 
This section is organized as follows.
In \Cref{Section: Complexity preliminaries}, we present some useful definitions and selected results from the literature related to Markov Chains (MCs) and Markov Decision Processes (MDPs), and floating-point representation.
In \Cref{Section: Complexity of computing stateful-discounted value in MCs,Section: Complexity of computing stateful-discounted value in MDPs,Section: Complexity of computing stateful-discounted value in CSGs}, we present algorithms for computing $\eps$-approximate stateful-discounted value in MCs, MDPs, and CSGs. 
Each algorithm is used in the subsequent algorithm as a procedure. By our technical result on the limit value approximation via the stateful-discounted value (\Cref{Result: Approximation of undiscounted value by discounted value}), we consequently obtain a TFNP[NP] procedure for the \UndiscVal\ problem. Since there exists a linear-size reduction from CSGs with parity objectives to CSGs with the limit-value of stateful-discounted objectives~\cite{gimbert2005discounting,de2003discounting}, the \ParVal\ problem is also in TFNP[NP].

\subsection{Definitions and Selected Results from Literature}
\label{Section: Complexity preliminaries}
We present some basic notations and definitions related to Markov Chains, Markov Decision Processes, and the classic symbolic representation for numbers and probability distributions, called floating-point. 

\smallskip\noindent{\bf Markov decision processes and Markov chains.}
For $i \in \{1, 2\}$, a Player-$i$ Markov decision process (Player-$i$ MDP) is a special class of CSGs where the other player has only one action and is denoted by $\MDP = (\States, \Actionone, \prob \colon \States \times \Actionone \to \Delta(\States))$.
A Markov chain (MC) is a special class of MDPs where both players have only one action and is denoted by $\MC = (\States, \prob \colon \States \to \Delta(\States))$. In Markov chains we write $\prob(s,s')$ to denote $\prob(s)(s')$.

\smallskip\noindent{\bf Absorbing MCs.} 
We say an MC $\MC$ is \emph{absorbing} if there exists a subset of absorbing states $\States_0 \subseteq \States$ such that
\begin{itemize}
    \item For all $\state \in \States_0$, we have $\prob(\state, \state) = 1$; and
    \item For all $\state_0 \in \States \setminus \States_0$, there exist states $\state_1, \ldots, \state_k$ such that $\prob(\state_i, \state_{i+1}) > 0$ and $\state_k \in \States_0$.
\end{itemize}
States in $\States_0$ are called absorbing.

\smallskip\noindent{\bf MDPs and MCs given stationary strategies in CSGs.}
Given a stationary strategy $\strategyone$ for Player~1 in a game $G$, by fixing the strategy $\strategyone$, we obtain a Player-2 MDP $G_\strategyone = (S, \Actiontwo, \prob_\strategyone)$ 
where the transition function $\prob_\strategyone \colon \States \times \Actiontwo \to \Delta(\States)$ is given by
\[
    \prob_\strategyone(\state, \actiontwo)(\state') \defas \sum_{\actionone \in \Actionone} \prob(\state, \actionone, \actiontwo)(s') \cdot \strategyone(\state)(\actionone)\,,
\]
for all $\state, \state' \in \States$ and $\actiontwo \in \Actiontwo$. Analogously, we obtain Player-1 MDP $G_\strategytwo$ by fixing a stationary strategy $\strategytwo$ for Player~2. 
Moreover, by fixing stationary strategies $\strategyone$ and $\strategytwo$ for both players, we obtain an MC $G_{\strategyone, \strategytwo} = (\States, \prob_{\strategyone, \strategytwo})$, where the transition function $\prob_{\strategyone, \strategytwo} \colon \States \to \Delta(\States)$ is given by 
\[
    \prob_{\strategyone, \strategytwo}(\state)(\state') = \sum_{\actionone \in \Actionone} \sum_{\actiontwo \in \Actiontwo} \prob(\state, \actionone, \actiontwo)(\state') \cdot \strategyone(\state)(\actionone) \cdot \strategytwo(\state)(\actiontwo) \,,
\]
for all $\state, \state' \in \States$.

\smallskip\noindent{\bf Reachability objectives in MCs.} Given an MC $\MC$ and a target set $\Target \subseteq \States$, the reachability objective is the indicator function of plays eventually reaching $\Target$. More formally, for a play $\play = \langle \state_0, \state_1, \cdots \rangle$, we define $\reach_\Target \colon \Plays \to \{0, 1\}$ as
\[
    \reach_\Target(\play) \defas \begin{cases}
        1 & \exists i \ge 0 \; \state_i \in \Target \\
        0\,. & 
    \end{cases}
\]
We define the probability of reaching the target set $\Target$ from state $\state$ as $\val_\Target(\state) \defas \EE_\state[\reach_\Target]$.

\smallskip\noindent{\bf Floating-point number representation.} We define the set of floating-point numbers with precision $\ell$ as 
\begin{align*}
  \F(\ell) 
    \defas \left\{ m \cdot 2^{e} \quad \mid \quad m \in \{0, \cdots, 2^\ell-1\}, \quad e \in \ZZ \right \} \,.
\end{align*}
The floating-point representation of an element $x = m \cdot 2^e \in \F(\ell)$ uses $\bit(m) + \bit(|e|)$ bits. We define the relative distance of two positive real numbers $x, \widetilde{x}$ as 
\[
  \rel(x, \widetilde{x}) 
    \defas \max \left\{ \frac{x}{\widetilde{x}}, \frac{\widetilde{x}}{x} \right\} - 1 
    = \inf \{ \alpha > 0 \enspace : \enspace x \le (1 + \alpha) \widetilde{x}, \quad \widetilde{x} \le (1 + \alpha) x \}\,.
\]
We say $x$ is $(\ell, i)$-close to $\widetilde{x}$ if 
    $\rel(x, \widetilde{x}) \le (1 - 2^{1 - \ell})^{-i} - 1$,
where $\ell$ is a positive integer and $i$ is a non-negative integer.

\smallskip\noindent{\bf Arithmetic operations.} We define $\oplus^{\ell}, \ominus^{\ell}, \otimes^{\ell}, \oslash^{\ell}$ as finite precision arithmetic operations $+, -, *, /$ respectively by truncating the result of the exact arithmetic operation to $\ell$ bits. We drop the superscript $\ell$ if context is clear.

\smallskip\noindent{\bf Floating-point probability distribution representation.}
We denote by $\D(\ell)$ the set of all floating-point probability distributions with precision $\ell$. 
A probability distribution $\distribution \in \Delta([t])$ belongs to $\D(\ell)$ if there exists $w_1, w_2, \cdots, w_t \in \F(\ell)$ such that
\begin{itemize}
    \item For all $i \in [t]$, we have $\distribution(i) = \frac{w_i}{\sum_{j \in [t]} w_j}$; and
    \item $\sum_{j \in [t]} w_j$ and 1 are $(\ell, t)$-close.
\end{itemize}
We define the relative distance $\rel$ for probability distributions as
$
    \rel (\distribution, \widetilde{\distribution}) \defas \max \{ \rel(\distribution(i), \widetilde{\distribution}(i)) : i \in [t]\}
$. 
We say $\distribution$ is $(\ell, i)$-close to $\widetilde{\distribution}$ if 
    $\rel(\distribution, \widetilde{\distribution}) \le (1 - 2^{1 - \ell})^{-i} - 1$,
where $\ell$ is a positive integer and $i$ is a non-negative integer.

Below we recall some useful results from the literature related to MCs, MDPs, and the floating-point representation.

\begin{lemma}[\protect{\cite[Thm. 6]{solan2003continuity}}]
\label{Result: Continuity of value in MC with reachability}
    Consider two absorbing MCs $\MC$ and $\widetilde{\MC}$ with identical state sets and a target set $T$. 
    We denote by $\val_T$ and $\widetilde{\val}_T$ the reachability value of $\MC$ and $\widetilde{\MC}$ respectively. 
    Fix $\eps \defas \max_{\state, \state'} \rel(\prob(\state, \state'), \widetilde{\prob}(\state, \state'))$. Then, for all states $\state \in \States$, we have
    \[
        |\val_T(\state) - \widetilde{\val}_T(\state)| \le 4 \statenum \eps \,.
    \]
\end{lemma}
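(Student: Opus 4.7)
The plan is to reduce the reachability problem to a linear system on the transient states, use the fundamental matrix to write the difference of the values in closed form, and then exploit the relative-distance bound together with the absorbing structure to get a dimension-linear (rather than hitting-time-linear) estimate.

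First I would, without loss of generality, make every state of $T$ absorbing: setting $\prob(\state, \state) = 1$ for $\state \in T$ does not change $\val_T$ and leaves the hypothesis $\rel \le \eps$ intact. Let $S_0$ denote the resulting absorbing states. Write $Q$ (respectively $\widetilde Q$) for the transient-to-transient restriction of $\prob$ (respectively $\widetilde\prob$) and set $b(\state) \defas \sum_{\state' \in T} \prob(\state, \state')$ (respectively $\widetilde b$). Then $v \defas \val_T$ and $\tilde v \defas \widetilde\val_T$ satisfy $(I - Q) v = b$ and $(I - \widetilde Q) \tilde v = \widetilde b$ on transient states, so subtracting gives $(I - Q)(v - \tilde v) = (b - \widetilde b) + (Q - \widetilde Q)\tilde v \sadef \delta$, hence $v - \tilde v = N \delta$ for $N \defas (I - Q)^{-1}$, the fundamental matrix of the chain $\prob$. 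Probabilistically, $v(\state) - \tilde v(\state) = \EE_{\state}^{\prob}\!\bigl[\sum_{t=0}^{\tau-1} \delta(X_t)\bigr]$, where $\tau$ is the hitting time of $S_0$.

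Next I would obtain a sharp one-step bound on $\delta$. By construction $\delta(\state) = \sum_{\state'} (\prob(\state, \state') - \widetilde\prob(\state, \state')) \, \tilde v^{\mathrm{ext}}(\state')$, where $\tilde v^{\mathrm{ext}}$ extends $\tilde v$ to all states by the appropriate $0/1$ boundary values on $S_0$. Since $\sum_{\state'} (\prob - \widetilde\prob)(\state, \state') = 0$ and the relative-distance hypothesis yields $|(\prob - \widetilde\prob)(\state, \state')| \le \eps \, \prob(\state, \state')$, re-centering at $\tilde v^{\mathrm{ext}}(\state)$ gives
\[
    |\delta(\state)| \le \eps \sum_{\state'} \prob(\state, \state') \bigl|\tilde v^{\mathrm{ext}}(\state') - \tilde v^{\mathrm{ext}}(\state)\bigr| \,.
\]
Plugging this back yields
\[
    |v(\state) - \tilde v(\state)| \le \eps \, \EE_{\state}^{\prob}\!\left[\sum_{t=0}^{\tau - 1} \bigl|\tilde v^{\mathrm{ext}}(X_{t+1}) - \tilde v^{\mathrm{ext}}(X_t)\bigr|\right] \,,
\]
i.e., $\eps$ times the expected total variation of $\tilde v^{\mathrm{ext}}$ along a $\prob$-trajectory up to absorption.

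The main obstacle is to bound this expected total variation by $4 \statenum$. A naive estimate via $\EE[\tau]$ is vacuous: a near-$1$ self-loop makes the hitting time arbitrarily large while the reachability value barely moves. To close this gap I would exploit that $\tilde v^{\mathrm{ext}}$ is harmonic under $\widetilde\prob$ (so $\tilde v^{\mathrm{ext}}(X_t)$ is a $\widetilde\Pr$-martingale) and takes at most $\statenum$ distinct values in $[0, 1]$, and that every summand in the total variation vanishes on self-loops, so only genuine state-changes contribute. By coupling the $\prob$- and $\widetilde\prob$-chains maximally at each step and decomposing a trajectory into the at-most-$\statenum$ distinct transient states it visits, the expected total variation telescopes against the range of $\tilde v^{\mathrm{ext}}$. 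A careful accounting of the bias introduced by replacing $\widetilde\Pr$-expectations by $\prob$-expectations (again controlled by $\eps$) yields the constant $4$, giving the claimed bound $|v(\state) - \tilde v(\state)| \le 4 \statenum \eps$ and completing the proof.
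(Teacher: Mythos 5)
First, note that the paper does not prove this lemma at all: it is imported verbatim as \cite[Thm.~6]{solan2003continuity}, so there is no in-paper argument to compare against, and your attempt must be judged on its own. Your first three steps are correct and cleanly executed: making $T$ absorbing, writing $v-\tilde v=N\delta$ with $N=(I-Q)^{-1}$, using $\sum_{\state'}(\prob-\widetilde\prob)(\state,\state')=0$ to re-center, and observing $|(\prob-\widetilde\prob)(\state,\state')|\le\eps\min(\prob,\widetilde\prob)(\state,\state')\le\eps\,\prob(\state,\state')$ all hold, so the reduction to bounding $\EE_\state^{\prob}\bigl[\sum_{t<\tau}|\tilde v^{\mathrm{ext}}(X_{t+1})-\tilde v^{\mathrm{ext}}(X_t)|\bigr]$ by $4\statenum$ is legitimate, and you correctly identify that a naive $\EE[\tau]$ bound is vacuous.

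The gap is in the last step, and the mechanism you propose would fail as stated. Maximal per-step coupling of the $\prob$- and $\widetilde\prob$-chains (or, equivalently, a change of measure on paths) loses a factor of order $(1+\eps)$ per step, and the absorption time $\tau$ is not bounded in terms of $\statenum$ or $1/\eps$, so the accumulated bias is uncontrolled; and the expected total variation of $\tilde v^{\mathrm{ext}}(X_t)$ does not ``telescope against the range'' --- a trajectory can cross the same level of $\tilde v^{\mathrm{ext}}$ arbitrarily many times. The ingredients you name (at most $\statenum$ distinct values in $[0,1]$, near-martingale structure) are the right ones, but the tool you need is Doob's upcrossing inequality applied \emph{directly under $\prob$}, with the drift handled self-referentially. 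Concretely, let $V$ be the total variation and $a_1<\cdots<a_k$ ($k\le\statenum$) the values of $\tilde v^{\mathrm{ext}}$; then $V=\sum_j(a_{j+1}-a_j)C_j$ where $C_j$ counts crossings of the $j$-th gap, and $C_j\le 2U_j+1$ with $U_j$ the upcrossings. The buy-low/sell-high gains process shows $(a_{j+1}-a_j)\EE[U_j]\le 1+\EE\bigl[\sum_{t<\tau}|\delta(X_t)|\bigr]\le 1+\eps\,\EE[V]$, whence $\EE[V]\le 2(k-1)(1+\eps\,\EE[V])+1$. If $\eps\le 1/(4\statenum)$ this gives $\EE[V]\le 4\statenum-2$ and hence $|v(\state)-\tilde v(\state)|\le 4\statenum\eps$; if $\eps>1/(4\statenum)$ the claim is trivial since values lie in $[0,1]$. (The literature proof is different and arguably more robust: by the Freidlin--Wentzell/matrix-tree representation, $\val_T(\state)$ is a ratio of two polynomials with nonnegative coefficients whose monomials are products of at most $\statenum$ transition probabilities, so a multiplicative $(1+\eps)$ perturbation of each entry perturbs the value multiplicatively by $(1+\eps)^{\pm 2\statenum}$.)
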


\begin{lemma}[\protect{\cite[Thm. 4]{frederiksen2013approximating}}]
\label{Result: Approximate reachability value}
    Consider an absorbing MC $\MC$ and a target set $T$. 
    For all $\state \in \States$, we have $\prob(\state) \in \D(\ell)$ where $\ell \ge 1000 \statenum^2$. 
    Then, there exists a polynomial-time algorithm that for all states $\state \in \States$, computes an approximation $v \in \F(\ell)$ for the reachability value such that 
    \[
        |v - \val_T(\state)| \le 80 \statenum^4 2^{-\ell}\,.
    \]
\end{lemma}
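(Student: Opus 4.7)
The plan is to reduce the reachability value to the solution of a linear system, solve that system with Gaussian elimination in floating-point at precision~$\ell$, and convert the resulting backward error into a forward error on reachability via \Cref{Result: Continuity of value in MC with reachability}.

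First, partition the state space into the targets $T$, the non-target absorbing states $U$, and the transient states $R$. On $T$ the reachability value is $1$ and on $U$ it is $0$; the only unknowns are the values $x_\state$ for $\state \in R$. These satisfy the linear system $(I - M)x = b$, where $M$ is the sub-stochastic matrix of one-step transitions from $R$ to $R$ and $b_\state = \sum_{\state' \in T} \prob(\state,\state')$. Because $\MC$ is absorbing, $I - M$ is invertible with $(I-M)^{-1} = \sum_{k \ge 0} M^k$, whose row sums are the expected times to absorption, and in particular are bounded (together with all entries of $b$) by quantities polynomial in $\statenum$.

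Next, run Gaussian elimination on $(I - M)$ using the floating-point operations $\oplus^\ell, \ominus^\ell, \otimes^\ell, \oslash^\ell$. Each operation introduces a relative error of at most $2^{1-\ell}$, and the total of $O(\statenum^3)$ operations yields, by the standard backward-error analysis for Gaussian elimination, a computed solution $\widetilde x \in \F(\ell)^{|R|}$ that is the exact solution of a perturbed system whose coefficients differ from those of $(I-M, b)$ by relative error at most $O(\statenum^3) \cdot 2^{-\ell}$. The hypothesis $\ell \ge 1000 \statenum^2$ is comfortably large enough to keep all intermediate magnitudes and pivot values in range, so no catastrophic cancellation or overflow occurs.

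The computed vector $\widetilde x$ can then be reinterpreted as the exact reachability vector of an absorbing MC $\widetilde \MC$ obtained from $\MC$ by renormalising each row to compensate for the floating-point perturbation; because the perturbation per entry is $O(\statenum^3) \cdot 2^{-\ell}$, the resulting distributions lie in $\D(\ell)$ and are entrywise $(\ell, O(\statenum^3))$-close to those of $\MC$, i.e.\ $\max_{\state,\state'} \rel(\prob(\state,\state'),\widetilde\prob(\state,\state')) \le O(\statenum^3) \cdot 2^{-\ell}$. Invoking \Cref{Result: Continuity of value in MC with reachability} then converts this relative perturbation of transitions into an absolute error on reachability of at most $4\statenum \cdot O(\statenum^3) \cdot 2^{-\ell}$, which matches the advertised bound $80 \statenum^4 \cdot 2^{-\ell}$ after tracking the constants. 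The main obstacle is the bookkeeping in the backward-error step: one must check that the perturbed row sums can be adjusted to honest $\D(\ell)$ distributions without worsening the relative-distance constant, which is where the generous precision $\ell \ge 1000 \statenum^2$ is consumed to absorb the cumulative factor from the $O(\statenum^3)$ arithmetic operations.
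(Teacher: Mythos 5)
First, note that the paper does not prove this statement at all: it is imported verbatim as \cite[Thm.~4]{frederiksen2013approximating}, so there is no in-paper proof to compare against. Judged on its own merits, your plan has the right endgame (interpret the floating-point computation as the exact reachability value of a perturbed chain, then apply \Cref{Result: Continuity of value in MC with reachability}), which is indeed the structure of the cited proof, but two of your supporting claims are wrong and the step that carries all the weight is asserted rather than proved.

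The first concrete problem is your claim that the row sums of $(I-M)^{-1}$ (the expected absorption times) and hence all intermediate magnitudes are ``bounded by quantities polynomial in $\statenum$.'' Distributions in $\D(\ell)$ have unbounded exponents, so transition probabilities can be doubly-exponentially small and absorption times correspondingly astronomical; the linear system $(I-M)x=b$ can be arbitrarily ill-conditioned in any normwise sense. This is precisely why one cannot go from a normwise backward error of Gaussian elimination to a forward error on $x$: the condition number would multiply your $O(\statenum^3)2^{-\ell}$ and destroy the bound. The only route to the advertised $80\statenum^4 2^{-\ell}$ is through a \emph{componentwise relative} perturbation of the transition probabilities followed by \Cref{Result: Continuity of value in MC with reachability}, and that is exactly what your middle step fails to deliver. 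Standard backward-error analysis produces an unstructured perturbation $\Delta A$ of $(I-M)$: it is not a substochastic matrix, it gives no control on the relative error of individual (possibly tiny) entries, and generic Gaussian elimination on $I-M$ involves genuine subtractions with catastrophic cancellation. Your sentence about ``renormalising each row to compensate'' so that the perturbed distributions land in $\D(\ell)$ and are $(\ell,O(\statenum^3))$-close is the entire content of the theorem, not bookkeeping. The actual argument organizes the elimination so that every intermediate quantity is itself a probability and every arithmetic operation except one acts on positive numbers (so relative errors compose via the $(\ell,i)$-closeness calculus of \Cref{Result: FP arithmetic operations,Result: Transitivity of rel}); the single dangerous subtraction of the form $1-\prob(\state,\state)$ must be replaced by the sum of the off-diagonal entries to avoid cancellation. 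Without reworking your middle step along these lines, the proof does not go through.
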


\begin{lemma}[\protect{\cite[Lemma 1]{frederiksen2013approximating}}]
\label{Result: Transitivity of rel}
    Consider three non-negative real numbers $x, y, z$. If $x$ and $y$ are $(\ell, i)$-close, and $y$ and $z$ are $(\ell, j)$-close, then $x$ and $z$ are $(\ell, i+j)$-close.
\end{lemma}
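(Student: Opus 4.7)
The plan is to prove the transitivity property directly from the definition by exploiting the multiplicative structure of the closeness bound. First, I would unfold the definition of $(\ell,i)$-closeness: saying that $x$ and $y$ are $(\ell,i)$-close is equivalent to the pair of one-sided inequalities
\[
    x \le (1 - 2^{1-\ell})^{-i} \, y
    \qquad\text{and}\qquad
    y \le (1 - 2^{1-\ell})^{-i} \, x \,,
\]
since $\rel(x,y) = \max\{x/y, y/x\} - 1$. This reformulation is the key step, because the condition on $\rel$ as a $\max$ is not obviously transitive, whereas each of the two one-sided multiplicative bounds is.

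Next, I would compose the bounds. From the $(\ell,i)$-closeness of $x$ and $y$ and the $(\ell,j)$-closeness of $y$ and $z$, one chain gives
\[
    x \le (1-2^{1-\ell})^{-i} \, y \le (1-2^{1-\ell})^{-i} \cdot (1-2^{1-\ell})^{-j} \, z = (1-2^{1-\ell})^{-(i+j)} \, z \,,
\]
and by the symmetric chain, $z \le (1-2^{1-\ell})^{-(i+j)} \, x$. Taking the maximum and subtracting $1$ yields
\[
    \rel(x,z) \le (1-2^{1-\ell})^{-(i+j)} - 1 \,,
\]
which is exactly the statement that $x$ and $z$ are $(\ell, i+j)$-close.

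There is essentially no obstacle here: the only subtlety is recognizing that the exponent $-i$ was chosen precisely so that composition corresponds to addition in the exponent. A minor care point is the degenerate case $y=0$ (or $x=0$ or $z=0$), in which $\rel$ as a max of ratios is either $0$ or undefined; in that case the $(\ell,i)$-closeness of $x$ and $y$ forces $x=0$ too (and analogously for $z$), so the conclusion is immediate. Otherwise the argument is a two-line manipulation of real-number inequalities and requires no further machinery.
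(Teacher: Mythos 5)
Your proof is correct. The paper does not prove this lemma itself --- it imports it verbatim from \cite[Lemma~1]{frederiksen2013approximating} --- and your argument (rewriting $(\ell,i)$-closeness as the pair of one-sided bounds $x \le (1-2^{1-\ell})^{-i}y$ and $y \le (1-2^{1-\ell})^{-i}x$, composing the chains so the exponents add, and handling the degenerate zero cases separately) is precisely the standard verification of that cited result.
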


\begin{lemma}[\protect{\cite[Lemma 4]{frederiksen2013approximating}}]
\label{Result: FP arithmetic operations}
Consider non-negative real numbers $x, y$. 
Let $\widetilde{x} \in \F(\ell)$ be a number that is $(\ell, i)$-close to $x$ and $\widetilde{y} \in \F(\ell)$ be a number that is $(\ell, j)$-close to $y$. 
Then, the following assertions hold.

\begin{enumerate}
    \item The number $\widetilde{x} \oplus \widetilde{y}$ is $(\ell, \max(i, j) + 1)$-close to $x + y$;
    \item The number $\widetilde{x} \ominus \widetilde{y}$ is $(\ell, \max(i, j) + 1)$-close to $x - y$;
    \item The number $\widetilde{x} \otimes \widetilde{y}$ is $(\ell, i + j + 1)$-close to $x * y$; and
    \item The number $\widetilde{x} \oslash \widetilde{y}$ is $(\ell, i + j + 1)$-close to $x / y$.
\end{enumerate}
Moreover, all arithmetic operations can be computed in polynomial time with respect to $\ell$.
\end{lemma}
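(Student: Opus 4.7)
The plan is to reduce each of the four assertions to two simpler facts about the relative distance $\rel$: how it composes under exact arithmetic, and how it degrades under a single truncation to $\F(\ell)$. Setting $\alpha \defas 2^{1-\ell}$, the definition of $(\ell, i)$-closeness unfolds to the two-sided bound $(1-\alpha)^i x \le \widetilde{x} \le (1-\alpha)^{-i} x$. My first step would be to establish a truncation lemma: rounding any positive real $y$ down to the nearest element of $\F(\ell)$ changes it by a relative factor of at most $1/(1-\alpha)$, so the rounded value is $(\ell, 1)$-close to $y$. This single fact will account for the ``$+1$'' appearing in every item of the lemma.

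Next, I would analyse the exact-arithmetic stage, operation by operation. For addition, summing the inequalities $\widetilde{x} \le (1-\alpha)^{-i} x$ and $\widetilde{y} \le (1-\alpha)^{-j} y$ yields $\widetilde{x} + \widetilde{y} \le (1-\alpha)^{-\max(i,j)}(x+y)$ together with the symmetric lower bound, so the exact sum of the approximations is $(\ell, \max(i,j))$-close to $x+y$. For multiplication, the two-sided bounds compose multiplicatively, giving $(\ell, i+j)$-closeness of $\widetilde{x}\widetilde{y}$ to $xy$; division is the same calculation with $\widetilde{y}$ moved to the denominator, where the inequality $\widetilde{y} \ge (1-\alpha)^{j} y$ contributes the additional factor $(1-\alpha)^{-j}$. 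Chaining each of these with the truncation lemma via \Cref{Result: Transitivity of rel} would then yield the claimed $(\ell, \max(i,j)+1)$ and $(\ell, i+j+1)$ bounds for $\oplus$, $\otimes$, and $\oslash$.

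The main obstacle will be the subtraction case, because catastrophic cancellation can blow up the relative distance between $\widetilde{x} - \widetilde{y}$ and $x - y$ even when each operand is individually $(\ell, i)$-close to its approximation. I expect to resolve this by exploiting the convention implicit in \cite{frederiksen2013approximating} that $\ominus$ is used only when the signs and relative ordering of $x, y$ are preserved by the approximations, so that the individual two-sided bounds on $\widetilde{x}, \widetilde{y}$ translate directly into two-sided bounds on $\widetilde{x} - \widetilde{y}$ versus $x - y$ matching those of addition. A careful case analysis distinguishing $x \ge y$ from $x \le y$ and tracking which of the inner $(1-\alpha)^i$ and outer $(1-\alpha)^{-j}$ bounds are tight in each case is the delicate step. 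Finally, the polynomial-time claim is immediate: each exact rational operation on operands with $\calO(\ell)$-bit mantissas and integer exponents can be carried out in time polynomial in $\ell$, and the subsequent truncation to $\ell$ mantissa bits is linear in $\ell$.
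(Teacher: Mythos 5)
The paper never proves this lemma: it is imported verbatim from \cite{frederiksen2013approximating}, so there is no in-paper argument to compare against. Judged on its own terms, your treatment of $\oplus$, $\otimes$, $\oslash$ and of the running-time claim is correct and is the standard argument: writing $\alpha \defas 2^{1-\ell}$, the two-sided bounds $(1-\alpha)^{i}x \le \widetilde{x} \le (1-\alpha)^{-i}x$ compose to exponent $\max(i,j)$ under exact addition and to exponent $i+j$ under exact multiplication and division, and truncating a positive real down to $\F(\ell)$ (normalized mantissa of at least $2^{\ell-1}$) costs one further factor of $(1-\alpha)^{-1}$, which \Cref{Result: Transitivity of rel} converts into the extra $+1$ in each item.

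The gap is the subtraction case, and it is not fixable by the route you sketch. You rightly identify cancellation as the obstacle, but ``the signs and relative ordering of $x,y$ are preserved'' does not control the relative error of a difference: take $x = 2$, $y = 1$, $i = j = 1$, $\widetilde{x} = 2(1-\alpha)^{-1}$, $\widetilde{y} = 1-\alpha$. Then $\widetilde{x} - \widetilde{y} = (1 + 2\alpha - \alpha^{2})/(1-\alpha)$, which exceeds $(1-\alpha)^{-2}(x-y)$ for small $\alpha$, so the exact difference is not even $(\ell,2)$-close to $x-y$ before truncation; shrinking $x-y$ relative to $\max(x,y)$ makes the violation arbitrarily bad, and no case analysis on $x \ge y$ versus $x \le y$ changes this. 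Item 2 is therefore not a theorem under the stated hypotheses; it holds only under extra assumptions, the cleanest being $i=j=0$ (both operands exactly representable), in which case the only error is the final truncation and the conclusion is the trivial $(\ell,1)$-closeness. That is in fact the only way $\ominus$ is used in this paper (the expressions $1 \ominus \discfac(\state)$ and $1 \ominus \reward(\state)$ with $\discfac(\state), \reward(\state) \in \F(\ell)$), so the weaker, provable version suffices for every application here --- but your plan as written would not close item 2 as stated.
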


\begin{lemma}[\protect{\cite[Lemma 5]{frederiksen2013approximating}}]
\label{Result: Approximation of prob distribution}
    Consider $x_1, \cdots, x_t \in \F(\ell)$. 
    Let $\distribution(i) \defas x_i \oslash \left ( \bigoplus_{j=1}^t x_j \right )$. 
    Then, there exists $\widetilde{\distribution} \in \D(\ell)$ such that for all $i$, we have $\widetilde{\distribution}(i) = \distribution(i) / 
 \left (\sum_{j=1}^{t} \distribution(j) \right )$, and $\distribution$ and $\widetilde{\distribution}$ are $(\ell, 2t)$-close. 
\end{lemma}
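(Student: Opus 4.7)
The plan is to track the relative error introduced by the sequence of floating-point operations that define $\distribution$, and then show that re-normalizing $\distribution$ by its own sum lands inside $\D(\ell)$ at a controlled relative distance. The key identities are that summation of non-negative quantities preserves the relative error of its summands, and that the closeness to $1$ of the implicit normalizing constant $\sum_j \distribution(j)$ is exactly what is needed to exhibit the weights required by the definition of $\D(\ell)$.

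First I would analyze the denominator $S \defas \bigoplus_{j=1}^t x_j$. By iteratively applying item~1 of \Cref{Result: FP arithmetic operations} to the partial sums $x_1 \oplus x_2$, $(x_1 \oplus x_2) \oplus x_3$, and so on, a straightforward induction gives that $S$ is $(\ell, t-1)$-close to the exact sum $\sum_{j=1}^t x_j$. Then each coordinate $\distribution(i) = x_i \oslash S$ is, by item~4 of the same lemma, $(\ell, (t-1) + 0 + 1) = (\ell, t)$-close to the ratio $x_i/\sum_j x_j$, which is itself a bona-fide probability over $[t]$ in the exact real arithmetic.

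Next I would show that the exact sum $\Sigma \defas \sum_{j=1}^t \distribution(j)$ is $(\ell, t)$-close to $1$. Each term $\distribution(j) = x_j \oslash S$ has relative distance at most $(1-2^{1-\ell})^{-1}-1$ to $x_j/S$, and since all quantities are non-negative the same bound transfers to the exact sums, giving $\rel\left(\Sigma,\ (\sum_j x_j)/S\right) \le (1-2^{1-\ell})^{-1} - 1$. Because $S$ is $(\ell, t-1)$-close to $\sum_j x_j$, the ratio $(\sum_j x_j)/S$ is $(\ell, t-1)$-close to $1$. Applying \Cref{Result: Transitivity of rel} yields that $\Sigma$ is $(\ell, t)$-close to $1$. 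Taking the weights $w_i \defas \distribution(i) \in \F(\ell)$ then satisfies both defining conditions of $\D(\ell)$, so $\widetilde{\distribution}(i) = \distribution(i)/\Sigma$ genuinely belongs to $\D(\ell)$. For the closeness bound, $\widetilde{\distribution}(i)/\distribution(i) = 1/\Sigma$ for every $i$, and $1/\Sigma$ is $(\ell, t)$-close to $1$, so $\distribution$ and $\widetilde{\distribution}$ are $(\ell, t)$-close and a fortiori $(\ell, 2t)$-close.

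The only delicate point is the third step: one must rely on the fact that relative error bounds for positive quantities are preserved under \emph{exact} summation (not under the floating-point $\oplus$), so that the imprecision contributed by the $t$ individual divisions does not cascade but only adds a single unit to the budget from the denominator $S$. Without this observation one would obtain a bound of order $2t$ directly from a naive error propagation, which still fits inside the stated $(\ell, 2t)$ slack but is worth isolating as the conceptual heart of the argument.
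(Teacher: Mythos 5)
Your proof is correct. Note that the paper itself offers no proof of \Cref{Result: Approximation of prob distribution}: it is imported verbatim as Lemma~5 of \cite{frederiksen2013approximating}, so there is no in-paper argument to compare against. Your derivation is a valid self-contained reconstruction: the induction giving that $\bigoplus_{j=1}^t x_j$ is $(\ell, t-1)$-close to $\sum_j x_j$ via item~1 of \Cref{Result: FP arithmetic operations}, the $(\ell,t)$-closeness of each $\distribution(i)$ via item~4, and the key observation that exact summation of non-negative terms preserves a uniform relative-error bound are all sound, and combined with \Cref{Result: Transitivity of rel} they correctly yield that $\Sigma = \sum_j \distribution(j)$ is $(\ell,t)$-close to $1$. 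Taking $w_i = \distribution(i) \in \F(\ell)$ then verifies both clauses in the definition of $\D(\ell)$, and the final step $\rel(\distribution(i), \widetilde{\distribution}(i)) = \max(\Sigma, 1/\Sigma) - 1 = \rel(\Sigma,1)$ even gives the sharper $(\ell,t)$-closeness, which a fortiori implies the stated $(\ell,2t)$ bound. The only caveat worth recording is the degenerate case $x_i = 0$: the paper's $\rel$ is defined only for positive reals, so your argument (like the cited lemma) implicitly assumes either that all $x_i$ are positive or that zero entries are matched exactly and excluded from the relative-error bookkeeping; a one-line remark to that effect would close the proof completely.
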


\begin{lemma}[\protect{\cite[Lemma 6]{frederiksen2013approximating}}]
\label{Result: Approximation of prob distribution from distribution}
    Consider a probability distribution $\distribution \in \Delta([t])$. Then, there exists $\widetilde{\distribution} \in \D(\ell)$ such that $\distribution$ and  $\widetilde{\distribution}$ are $(\ell, 2t+2)$-close.
\end{lemma}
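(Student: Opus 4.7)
The plan is to construct $\widetilde{\distribution}$ directly by rounding $\distribution$ coordinate-wise into $\F(\ell)$ and then renormalizing. Concretely, I would take weights $w_i\in\F(\ell)$ close to $\distribution(i)$ and define $\widetilde{\distribution}(i)\defas w_i/\sum_j w_j$. Each individual rounding loses one factor of $(1-2^{1-\ell})^{-1}$ in relative error and the subsequent normalization loses at most one more, so one in fact obtains a much sharper $(\ell,2)$-closeness; the looser $(\ell,2t+2)$-closeness asserted in the statement then follows for free.

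For step one, the construction, I would use standard binary rounding: if $\distribution(i)>0$, write $\distribution(i)=\bar{m}\cdot 2^e$ with $\bar{m}\in[2^{\ell-1},2^\ell)$ and round $\bar{m}$ to an integer $m\in\{2^{\ell-1},\ldots,2^\ell-1\}$, so that $w_i\defas m\cdot 2^e\in\F(\ell)$ satisfies $|w_i-\distribution(i)|\le 2^e$, and hence $\rel(w_i,\distribution(i))\le 2^{1-\ell}\le (1-2^{1-\ell})^{-1}-1$; that is, $w_i$ is $(\ell,1)$-close to $\distribution(i)$. For $\distribution(i)=0$ I set $w_i\defas 0$.

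For step two, I would verify that $\widetilde{\distribution}\in\D(\ell)$, which requires $\sum_j w_j$ to be $(\ell,t)$-close to $1$. From the $(\ell,1)$-closeness of $w_i$ to $\distribution(i)$ I get $(1-2^{1-\ell})\distribution(i)\le w_i\le (1-2^{1-\ell})^{-1}\distribution(i)$, and summing over $i$ with $\sum_i\distribution(i)=1$ yields $(1-2^{1-\ell})\le \sum_j w_j\le (1-2^{1-\ell})^{-1}$. So $\sum_j w_j$ is actually $(\ell,1)$-close to $1$, which is stronger than the required $(\ell,t)$-closeness since the threshold $(1-2^{1-\ell})^{-k}-1$ is nondecreasing in $k$.

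For step three, I would bound $\rel(\distribution(i),\widetilde{\distribution}(i))$ by factoring $\widetilde{\distribution}(i)/\distribution(i) = (w_i/\distribution(i))\cdot(\sum_j w_j)^{-1}$. By the previous two steps both factors lie in $[(1-2^{1-\ell}),\,(1-2^{1-\ell})^{-1}]$, so the product lies in $[(1-2^{1-\ell})^2,\,(1-2^{1-\ell})^{-2}]$. Hence $\rel(\distribution(i),\widetilde{\distribution}(i))\le(1-2^{1-\ell})^{-2}-1$, giving $(\ell,2)$-closeness and a fortiori $(\ell,2t+2)$-closeness. There is no genuine obstacle here; the only care needed is the bookkeeping of multiplicative errors through the $(\ell,\cdot)$-close formalism, and the degenerate case $\distribution(i)=0$ is handled by $w_i=\widetilde{\distribution}(i)=0$, which matches $\distribution(i)$ exactly.
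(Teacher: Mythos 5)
The paper does not prove this lemma; it is imported verbatim as \cite[Lemma~6]{frederiksen2013approximating}, so there is no internal proof to compare against. Your argument is a correct, self-contained derivation, and it is worth noting why it even beats the stated bound: the definition of $\D(\ell)$ only asks for weights $w_1,\dots,w_t\in\F(\ell)$ whose \emph{exact} normalization equals $\widetilde{\distribution}$ and whose \emph{exact} sum is $(\ell,t)$-close to $1$; it does not force you to compute the sum or the quotients with the truncated operations $\oplus,\oslash$. By normalizing exactly you pay only one factor of $(1-2^{1-\ell})^{-1}$ for the per-coordinate rounding and one for the normalization, hence $(\ell,2)$-closeness, whereas the route through \Cref{Result: Approximation of prob distribution} (which is how the cited source assembles its Lemma~6) accumulates the $2t$ from the $t-1$ floating-point additions in $\bigoplus_j x_j$ and lands at $2t+2$. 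Two cosmetic points: your intermediate claim $\rel(w_i,\distribution(i))\le 2^{1-\ell}$ is slightly off in the direction $\distribution(i)/w_i$, where the correct bound is $2^{1-\ell}/(1-2^{1-\ell})$, but this still gives $(\ell,1)$-closeness, which is all you use; and if you round $\bar m$ to nearest you can land on $m=2^\ell$, which you should either repair by rewriting $2^\ell\cdot 2^e=2^{\ell-1}\cdot 2^{e+1}\in\F(\ell)$ or avoid by always rounding down (using $\bar m\ge 2^{\ell-1}$ to keep the relative error at most $2^{1-\ell}$). Neither affects the conclusion.
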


\subsection{Stateful-discounted Value Approximation in MCs}
\label{Section: Complexity of computing stateful-discounted value in MCs}
In this subsection, we present an algorithm for computing $\eps$-approximate stateful-discounted value in MCs by a reduction from  MCs with stateful-discounted objectives to MCs with reachability objectives.

\begin{lemma}
\label{Result: Approximate discounted value in MC}
    Consider an MC $\MC$, a reward function $\reward$, and a discount function $\discfac$. 
    For all $\state \in \States$,  we set 
    \[
        \prob(\state) \in D(\ell), \quad \reward(\state) \in \F(\ell), \quad \discfac(\state) \in \F(\ell) \,, 
    \]
    where $\ell \ge 1000 \statenum^2$. Then, there exists a polynomial-time algorithm that for all states $\state \in \States$, computes an approximation $v$ for the  stateful-discounted value such that 
    \[
        |v - \val_\discfac(\state)| \le 104 \statenum^4 2^{-\ell}\,.
    \]
\end{lemma}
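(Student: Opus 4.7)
The plan is to reduce the stateful-discounted value problem on the MC $\MC$ to a reachability problem on an auxiliary absorbing MC, and then apply \Cref{Result: Approximate reachability value} together with the continuity result \Cref{Result: Continuity of value in MC with reachability}. The classical reduction introduces two fresh absorbing states $\state_{\mathrm{win}}$ and $\state_{\mathrm{lose}}$. For each state $\state \in \States$, the new transition function is defined (in exact arithmetic) by
\[
    \prob'(\state, \state_{\mathrm{win}}) = \discfac(\state) \reward(\state)\,, \quad
    \prob'(\state, \state_{\mathrm{lose}}) = \discfac(\state) (1 - \reward(\state))\,, \quad
    \prob'(\state, \state') = (1 - \discfac(\state)) \prob(\state, \state')\,.
\]
A standard unrolling of $\discounted_\discfac$ then shows that, with target set $\Target = \{\state_{\mathrm{win}}\}$, we have $\val_\Target(\state) = \val_\discfac(\state)$ in the absorbing MC $\MC'$.

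First, I would construct, in polynomial time, a floating-point MC $\widetilde{\MC}'$ on the same state space by computing the weights $w_{\state, \state_{\mathrm{win}}}, w_{\state, \state_{\mathrm{lose}}}, w_{\state, \state'}$ using the operations $\oplus, \ominus, \otimes$ applied to the inputs $\discfac(\state), \reward(\state), \prob(\state, \state')$, and then normalizing via \Cref{Result: Approximation of prob distribution} to obtain a distribution $\widetilde{\prob}'(\state) \in \D(\ell)$. By iterated application of \Cref{Result: FP arithmetic operations}, each individual weight is $(\ell, \calO(1))$-close to its exact counterpart, and the normalization step from \Cref{Result: Approximation of prob distribution} keeps the resulting distribution $(\ell, \calO(\statenum))$-close to $\prob'(\state)$. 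Using \Cref{Result: Transitivity of rel}, this yields
\[
    \max_{\state, \state'} \rel(\prob'(\state, \state'), \widetilde{\prob}'(\state, \state')) \le (1 - 2^{1-\ell})^{-\calO(\statenum)} - 1 \le \calO(\statenum) \cdot 2^{-\ell}\,,
\]
where the last inequality uses $\ell \ge 1000 \statenum^2$ so the exponent is well in the regime where the bound $(1-x)^{-k} - 1 \le 2kx$ holds.

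Next, I would invoke \Cref{Result: Continuity of value in MC with reachability} applied to the pair $(\MC', \widetilde{\MC}')$, giving $|\val_\Target(\state) - \widetilde{\val}_\Target(\state)| \le 4 \statenum \cdot \calO(\statenum) 2^{-\ell} = \calO(\statenum^2) 2^{-\ell}$. Then I would invoke \Cref{Result: Approximate reachability value} on $\widetilde{\MC}'$ to obtain an approximation $v \in \F(\ell)$ with $|v - \widetilde{\val}_\Target(\state)| \le 80 \statenum^4 2^{-\ell}$ in polynomial time. Adding the two error bounds and using $\val_\Target(\state) = \val_\discfac(\state)$ yields the claimed bound $|v - \val_\discfac(\state)| \le 104 \statenum^4 2^{-\ell}$, where the constants $80$ and the remaining slack $24 \statenum^4$ absorb the $\calO(\statenum^2)$ continuity error for $\ell$ in the stated range.

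The only subtlety I expect is tracking the constants carefully in the relative-error chain so that the final continuity contribution indeed fits within the remaining $24 \statenum^4 2^{-\ell}$ slack; this requires choosing the order of the finite-precision operations so that each row $\widetilde{\prob}'(\state)$ is $(\ell, \calO(\statenum))$-close to $\prob'(\state)$ (not larger), and then using the hypothesis $\ell \ge 1000 \statenum^2$ to justify the linearization of $(1 - 2^{1-\ell})^{-\calO(\statenum)} - 1$. Polynomial running time is immediate since \Cref{Result: FP arithmetic operations}, \Cref{Result: Approximation of prob distribution}, and \Cref{Result: Approximate reachability value} are each polynomial in $\ell$ and $\statenum$.
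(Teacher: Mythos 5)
Your proposal is correct and follows essentially the same route as the paper: reduce to a reachability problem on an absorbing MC with two fresh sink states, perturb the transition function into $\D(\ell)$ via finite-precision arithmetic and normalization, bound the perturbation with \Cref{Result: FP arithmetic operations}, \Cref{Result: Approximation of prob distribution}, and \Cref{Result: Transitivity of rel}, then combine \Cref{Result: Continuity of value in MC with reachability} with \Cref{Result: Approximate reachability value}. The paper tracks the constants explicitly (obtaining $(\ell, 3\statenum+3)$-closeness, hence a $24\statenum^2 2^{-\ell}$ continuity error), but your $\calO(\statenum)$ bookkeeping lands in the same place and the remaining $24\statenum^4 2^{-\ell}$ slack absorbs it exactly as you note.
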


\begin{proof}
    We construct a new MC $\MC_1$ from $\MC$ with a reachability objective and two additional absorbing states $\top$ and $\bot$ so the set of states is $\States_1 \defas \States \cup \{\top, \bot\}$. 
    The target set is $T \defas \{ \top \}$, and the state $\bot$ is absorbing. 
    The transition function $\prob_1$ is defined as
    \[
        \prob_1(\state, \state') \defas \begin{cases}
            (1 - \discfac(\state)) \cdot \prob(\state, \state') & \state, \state' \in \States\\
            \sum_{\state'' \in \States} \discfac(\state) \cdot \reward(\state) \cdot \prob(\state, \state'') & \state \in \States, \; \state' = \top\\
            \sum_{\state'' \in \States} \discfac(\state) \cdot (1 - \reward(\state)) \cdot \prob(\state, \state'') & \state \in \States,\; \state = \bot\\
            1 & \state \in \{\top, \bot\},\; \state' = \state\\ 
            0 & \text{otherwise}
        \end{cases}
    \]
    We denote by $\val_\discfac$ the stateful-discounted value in $\MC$ and by $\val^1_T$ the reachability value in $\MC_1$. 
    Note that $\MC_1$ is absorbing. 
    Observe that the Bellman equation for $\MC_1$ with the stateful-discounted objective is the same as the Bellman equation for $\MC_1$ with the reachability objective. 
    Therefore, for all $\state \in \States$, we have
    \begin{equation}
    \label{Eq: Equality of reachability and discounted value}
        \val_\discfac(\state) = \val^1_T(\state)\,.
    \end{equation}
    We define the approximate transition function as 
    \[
        \prob_2(\state, \state') \defas \begin{cases}
            (1 \ominus \discfac(\state)) \otimes \prob(\state, \state') & \state, \state' \in \States\\
            \bigoplus_{\state'' \in \States} \discfac(\state) \otimes \reward(\state) \otimes \prob(\state, \state'') & \state \in \States, \; \state' = \top\\
            \bigoplus_{\state'' \in \States} \discfac(\state) \otimes (1 \ominus \reward(\state)) \otimes \prob(\state, \state'') & \state \in \States,\; \state = \bot\\
            1 & \state = \{\top, \bot\},\; \state' = \state\\ 
            0 & \text{otherwise}
        \end{cases}
    \]
    By the definition of finite precision arithmetic operators, we have $\prob_2(\state, \state') \in \F(\ell)$. 
    Also, by \Cref{Result: FP arithmetic operations}, we have $\prob_1(\state, \state)$ and $\prob_2(\state, \state')$ are $(\ell, \statenum + 3)$-close, and $\prob_2$ is computable in polynomial time. ‌By \Cref{Result: Approximation of prob distribution}, there exists $\prob_3 \in \D(\ell)$ such that $\prob_2$ and $\prob_3$ are $(\ell, 2 \statenum)$-close. 
    Moreover, $\prob_3$ is computable in polynomial time. Therefore, by \Cref{Result: Transitivity of rel}, we have $\prob_1$ and $\prob_3$ are $(\ell, 3 \statenum + 3)$-close. 
    Therefore, for all $\state, \state' \in \States$, we have
    \begin{align*}
        \rel(\prob_1(\state, \state'), \prob_3(\state, \state')) &\le \frac{1}{(1 - 2^{1 - \ell})^{3 \statenum + 3}} - 1\\
        &\le \frac{1}{1 - (3 \statenum + 3)2^{1 - \ell}} - 1 & (\text{Bernoulli inequality})\\
        &\le \frac{(3 \statenum + 3)2^{1 - \ell}}{1 - (3 \statenum + 3)2^{1 - \ell}} & (\text{rearrange})\\
        &\le (3 \statenum + 3)2^{-\ell} & \left (\ell \ge 1000 \statenum^2 \right )\\
        &\le 6 \statenum 2^{-\ell}\,. & (\statenum \ge 1)
    \end{align*}
    We define an MC $\MC_2$ derived from $\MC_1$ with transition function $\prob_3$. 
    We denote by $\val^2_T$ the reachability value in $\MC_2$. 
    By \Cref{Result: Continuity of value in MC with reachability}, for all $\state \in \States$, we have
    \begin{equation}
    \label{Eq: closeness of two reachability value}
        |\val^1_T(\state) - \val^2_T(\state)| \le 24 \statenum^2 2^{-\ell}\,.
    \end{equation}
    By \Cref{Result: Approximate reachability value}, for all $\state \in \States$, we can compute an approximation $v(\state)$ of the reachability value of $\MC_2$ starting from $\state$ in polynomial time such that
    \begin{equation}
    \label{Eq: approximation of reachability value}
        |v(\state) - \val^2_T(\state)| \le 80 \statenum^4 2^{-\ell}\,.
    \end{equation}
    By combining \Cref{Eq: Equality of reachability and discounted value,Eq: closeness of two reachability value,Eq: approximation of reachability value}, for all states $\state \in \States$, we have
    \[
         |v(\state) - \val_\discfac(\state)| \le 104 \statenum^4 2^{-\ell}\,,
    \]
    which yields the result.
\end{proof}

\subsection{Stateful-discounted Value Approximation in MDPs}
\label{Section: Complexity of computing stateful-discounted value in MDPs}
In this subsection, we present an NP procedure for the approximate decision problem of the stateful-discounted values in MDPs. 
This is accomplished by guessing a pure stationary strategy and verifying if the strategy achieves the given threshold. 
The verification procedure uses the algorithm for computing approximate stateful-discounted values in MCs.

\begin{lemma}
\label{Result: Approximate discounted value in player-1 MDP}
    The problem of deciding if the stateful-discounted value for Player-1 MDPs is below a threshold up to an additive error is in NP where the input is a Player-1 MDP $\MDP$, a reward function $\reward$, a discount function $\discfac$, a state $\state$, a threshold $0 \le \alpha \le 1$, an additive error $\eps = 2^{-\kappa}$ and a positive integer $\ell$ such that, for all $\state' \in \States$ and $\actionone \in \Actionone$, we have 
    \[
        \prob(\state', \actionone) \in D(\ell), \quad \reward(\state', \actionone) \in \F(\ell), \quad \discfac(\state') \in \F(\ell) , \quad
        \ell \ge 1000 \statenum^2 + \kappa \,.
    \]
    Note that, the numbers $\alpha$ and $\eps$ are represented in fixed-point binary and the NP procedure is such that 
    \begin{itemize}
        \item If $\alpha \le \val_\discfac(\state) - \eps$, then it outputs YES; and
        \item If $\alpha \ge \val_\discfac(\state) + \eps$, then it outputs NO.
    \end{itemize}
\end{lemma}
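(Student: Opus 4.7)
The plan is to build an \textsf{NP} procedure which guesses a pure stationary Player-1 strategy, evaluates it via the Markov chain algorithm of \Cref{Result: Approximate discounted value in MC}, and compares the result to the threshold $\alpha$. The key enabling fact is that in a Player-1 MDP with a stateful-discounted objective there exists a pure stationary optimal strategy (standard consequence of the contractive Bellman operator, as already used in \Cref{Section: Characterization of discounted value}). Such a strategy $\strategyone \colon \States \to \Actionone$ is encoded by $\statenum \lceil \log_2 \actionnum \rceil$ bits, so it is a polynomial-size witness, and verification is deterministic polynomial time.

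On a guessed $\strategyone$, I would form the induced Markov chain $\MDP_\strategyone$ whose transition function is $\prob_\strategyone(\state, \state') \defas \prob(\state, \strategyone(\state))(\state')$. Because $\strategyone$ is pure, $\prob_\strategyone(\state)$ is literally one of the input distributions $\prob(\state, \strategyone(\state)) \in \D(\ell)$, and the reward and discount functions carry over unchanged in $\F(\ell)$. Hence the hypotheses of \Cref{Result: Approximate discounted value in MC} apply to $\MDP_\strategyone$ with the same precision $\ell \ge 1000 \statenum^2$, yielding a value $v$ in polynomial time satisfying
\[
    \left\lvert v - \val^{\strategyone}_\discfac(\state) \right\rvert \le 104\,\statenum^4 \cdot 2^{-\ell}\,.
\]
The procedure then accepts iff $v \ge \alpha$.

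The main book-keeping step is the precision analysis. Using $\ell \ge 1000 \statenum^2 + \kappa$ and $104\,\statenum^4 \le 2^{1000 \statenum^2 - 1}$ for all $\statenum \ge 1$, the error is bounded by $2^{-\kappa - 1} = \eps/2$. With this slack, soundness and completeness follow cleanly. If $\alpha \le \val_\discfac(\state) - \eps$, let $\strategyone^\ast$ be a pure stationary optimal strategy, so that $\val^{\strategyone^\ast}_\discfac(\state) = \val_\discfac(\state)$; then $v \ge \val_\discfac(\state) - \eps/2 \ge \alpha + \eps/2 > \alpha$, and the branch guessing $\strategyone^\ast$ accepts. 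Conversely, if $\alpha \ge \val_\discfac(\state) + \eps$, every guess $\strategyone$ satisfies $\val^{\strategyone}_\discfac(\state) \le \val_\discfac(\state)$, hence $v \le \val^{\strategyone}_\discfac(\state) + \eps/2 \le \alpha - \eps/2 < \alpha$ and no branch accepts.

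The only point requiring any care is the precision arithmetic, and the gap $\ell - \kappa \ge 1000 \statenum^2$ provides generous margin. Everything else is a direct reduction to the MC case plus the classical existence of pure stationary optimal strategies in discounted MDPs, so I expect no substantive obstacle beyond these routine verifications.
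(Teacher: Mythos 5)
Your proposal is correct and follows essentially the same route as the paper's proof: guess a pure stationary strategy as the NP witness (using the existence of pure stationary optimal strategies in discounted MDPs), evaluate the induced Markov chain with the polynomial-time approximation of \Cref{Result: Approximate discounted value in MC}, and accept iff the approximate value is at least $\alpha$. Your precision bookkeeping with the $\eps/2$ slack is slightly more explicit than the paper's (which uses the full $\eps$ error budget for the MC approximation), but the argument is the same.
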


% \begin{lemma}

%     Consider a Player-1 MDP $\MDP$, a reward function $\reward$, a discount function $\discfac$, a state $\state$, a threshold $0 \le \alpha \le 1$, and an additive error $\eps = 2^{-\kappa}$. 
%     Assume that, for all $\state' \in \States$ and $\actionone \in \Actionone$, we have 
%     \[
%         \prob(\state', \actionone) \in D(\ell), \quad \reward(\state', \actionone) \in \F(\ell), \quad \discfac(\state') \in \F(\ell) \,, 
%     \]
%     where $\ell \ge 1000 \statenum^2 + \kappa$. 
%     Also, the numbers $\alpha$ and $\eps$ are represented in fixed-point binary. 
%     Then, there exists an NP procedure such that 
%     \begin{itemize}
%         \item If $\alpha \le \val_\discfac(\state) - \eps$, then outputs YES; and
%         \item If $\alpha \ge \val_\discfac(\state) + \eps$, then outputs NO.
%     \end{itemize}
% \end{lemma}

\begin{proof}
    We first present the NP procedure and then prove its soundness and completeness.

    \smallskip\noindent{\em Procedure.} 
    The procedure guesses a pure stationary strategy $\strategyone$ for Player~1. 
    Note that the size of the representation of a pure stationary strategy is polynomial with respect to the size of the representation of $\MDP$. 
    By fixing $\strategyone$, we obtain an MC $\MDP_\strategyone$. 
    We denote by $v_\strategyone$ its stateful-discounted value. 
    By \Cref{Result: Approximate reachability value}, there exists a polynomial time algorithm that computes an $\eps$-approximation $\widetilde{v}_\strategyone$ of $v_\strategyone$. 
    Our procedure outputs YES if $\alpha \le \widetilde{v}_\strategyone(\state)$. 
    If there exists no such pure stationary strategy, the procedure outputs NO.
    
    \smallskip\noindent{\em Soundness.} 
    If $\alpha \le \val_\discfac(\state) - \eps$, then, by \cite{FV97}, there exists a pure stationary strategy $\strategyone$ such that $\val_\discfac = v_\strategyone$. 
    The procedure non-deterministically guesses $\strategyone$. 
    By \Cref{Result: Approximate discounted value in MC}, we have $\widetilde{v}_\strategyone(\state) + \eps \ge v_\strategyone(\state)$. 
    Therefore, we have $\alpha \le \widetilde{v}_\strategyone(\state)$, and the procedure outputs YES.

    \smallskip\noindent{\em Completeness.} 
    If $\alpha \ge \val_\discfac(\state) + \eps$, then for all pure stationary strategies $\strategyone$, we have $\alpha \ge v_\strategyone(\state) + \eps$. 
    By \Cref{Result: Approximate discounted value in MC}, we have $\widetilde{v}_\strategyone(\state) - \eps \le v_\strategyone(\state)$. 
    Therefore, $\alpha \ge \widetilde{v}_\strategyone(\state)$ which implies that the procedure outputs NO and yields the result.
\end{proof}

\Cref{Result: Approximate discounted value in player-1 MDP} also holds for Player-2 MDPs by symmetric arguments. 
More formally, we have the following result.

\begin{corollary}
\label{Result: Approximate discounted value in player-2 MDP}
    The problem of deciding if the stateful-discounted value for Player-2 MDPs is above a threshold up to an additive error is in NP where the input is a Player-2 MDP $\MDP$, a reward function $\reward$, a discount function $\discfac$, a state $\state$, a threshold $0 \le \alpha \le 1$, an additive error $\eps = 2^{-\kappa}$ and a positive integer $\ell$ such that, for all $\state' \in \States$ and $\actionone \in \Actionone$, we have 
    \[
        \prob(\state', \actionone) \in D(\ell), \quad \reward(\state', \actionone) \in \F(\ell), \quad \discfac(\state') \in \F(\ell) , \quad
        \ell \ge 1000 \statenum^2 + \kappa \,.
    \]
\end{corollary}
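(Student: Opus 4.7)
The plan is to observe that the corollary is the exact dual of \Cref{Result: Approximate discounted value in player-1 MDP} and to prove it by a symmetric reduction: the NP certificate will be a pure stationary strategy for Player~2, and verification will use \Cref{Result: Approximate discounted value in MC}. Concretely, since the hypothesis puts all transition, reward, and discount data in the floating-point classes $\D(\ell)$ and $\F(\ell)$, I will reduce the approximate decision problem to a single call of the MC routine on an induced MC.

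First, I would appeal to the classical result on stateful-discounted MDPs (e.g.\ \cite{FV97} in the notation of the paper) to conclude that Player~$2$ admits a pure stationary \emph{optimal} strategy $\strategytwo^\ast \in \Strategytwo^{PS}$, i.e.\ $\val_\discfac(\state) = \min_{\strategytwo \in \Strategytwo^{PS}} v_\strategytwo(\state)$, where $v_\strategytwo$ denotes the stateful-discounted value of the MC $\MDP_\strategytwo$ obtained by fixing $\strategytwo$. A pure stationary strategy has polynomial description length, so it is a legal NP witness.

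The procedure is then: nondeterministically guess $\strategytwo \in \Strategytwo^{PS}$; construct the induced MC $\MDP_\strategytwo$ (whose transition, reward, and discount data remain in $\D(\ell)$ and $\F(\ell)$, since a pure $\strategytwo$ merely selects, at each state, the column $\prob(\state, \strategytwo(\state)) \in \D(\ell)$ with no renormalization); invoke the polynomial-time procedure of \Cref{Result: Approximate discounted value in MC} to obtain $\widetilde{v}_\strategytwo(\state)$ with $|\widetilde{v}_\strategytwo(\state) - v_\strategytwo(\state)| \le 104\statenum^4 \cdot 2^{-\ell} \le 2^{-\kappa} = \eps$, where the last inequality uses $\ell \ge 1000\statenum^2 + \kappa$; and output YES iff $\widetilde{v}_\strategytwo(\state) \le \alpha$. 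For soundness, if $\alpha \ge \val_\discfac(\state) + \eps$, then an optimal pure stationary $\strategytwo^\ast$ satisfies $v_{\strategytwo^\ast}(\state) = \val_\discfac(\state) \le \alpha - \eps$, so $\widetilde{v}_{\strategytwo^\ast}(\state) \le v_{\strategytwo^\ast}(\state) + \eps \le \alpha$ and this guess is accepted. For completeness, if $\alpha \le \val_\discfac(\state) - \eps$, then every $\strategytwo \in \Strategytwo^{PS}$ gives $v_\strategytwo(\state) \ge \val_\discfac(\state) \ge \alpha + \eps$, hence $\widetilde{v}_\strategytwo(\state) \ge v_\strategytwo(\state) - \eps \ge \alpha$ (strictly $>\alpha$ because the MC error is strictly less than $\eps$), so no guess is accepted.

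There is no real obstacle here: the only point to check carefully is that fixing a \emph{pure} stationary $\strategytwo$ preserves the precision hypotheses required by \Cref{Result: Approximate discounted value in MC}, which is immediate because no arithmetic on the $\prob(\state, \strategytwo(\state))$-distributions is performed. The rest is a verbatim mirror of the argument for \Cref{Result: Approximate discounted value in player-1 MDP}, with Player~$1$ and ``$\alpha \le \widetilde{v}_\strategyone(\state)$'' replaced by Player~$2$ and ``$\widetilde{v}_\strategytwo(\state) \le \alpha$''.
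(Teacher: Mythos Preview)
Your proposal is correct and is precisely what the paper intends: the paper gives no separate proof of this corollary, stating only that it ``holds for Player-2 MDPs by symmetric arguments'' to \Cref{Result: Approximate discounted value in player-1 MDP}, and you have written out exactly those symmetric arguments (guess a pure stationary strategy for the minimizer, fix it to obtain an MC, and invoke \Cref{Result: Approximate discounted value in MC}). Your additional care about the strictness of the MC error bound and about precision being preserved under a pure strategy selection is sound and slightly more careful than the paper's own treatment of the Player-1 case.
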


% The following result is achieved by symmetric arguments of \Cref{Result: Approximate discounted value in player-1 MDP}.

% \begin{corollary}
% \label{Result: Approximate discounted value in player-2 MDP}
%     Consider a Player-2 MDP $\MDP$, a reward function $\reward$, a discount function $\discfac$, a state $\state$, a threshold $-1 \le \alpha \le 1$, and an additive error $\eps = 2^{-\kappa}$. 
%     Assume that, for all $\state' \in \States$ and $\actionone \in \Actionone$, we have 
%     \[
%         \prob(\state', \actionone) \in D(\ell), \quad \reward(\state', \actionone) \in \F(\ell), \quad \discfac(\state') \in \F(\ell) \,, 
%     \]
%     where $\ell \ge 1000 \statenum^2 + \kappa$. 
%     The numbers $\alpha$ and $\eps$ are represented in fixed-point binary. Then, there exists an NP procedure that 
%     \begin{itemize}
%         \item If $\alpha \ge \val_\discfac(\state) + \eps$, then outputs YES; and
%         \item If $\alpha \le \val_\discfac(\state) - \eps$, then outputs NO.
%     \end{itemize}
% \end{corollary}

\subsection{Stateful-discounted, Limit, and Parity Values Approximation in CSGs}
\label{Section: Complexity of computing stateful-discounted value in CSGs}

In this subsection, we first present a continuity result for MDPs with the stateful-discounted objectives (\Cref{Result: discounted MDP continuity}), which is a generalization of~\cite[Eq. 4.19]{FV97}. 
We then show the existence of $\eps$-optimal stationary strategies for stateful-discounted objectives that are representable in polynomial-size with respect to the size of the game and $\bit(\eps)$ (\Cref{Result: Patience upper bound}). 
We finally present a TFNP[NP] algorithm for CSGs with the stateful-discounted objectives, where the discount function is represented in floating-point (\Cref{Result: DiscVal problem is in TFNP[NP]}). 
By our technical result on the limit value approximation via the stateful-discounted value (\Cref{Result: Approximation of undiscounted value by discounted value}), we consequently obtain a TFNP[NP] procedure for the \UndiscVal\ problem. Since there exists a linear-size reduction from CSGs with parity objectives to CSGs with the limit-value of stateful-discounted objectives~\cite{gimbert2005discounting,de2003discounting}, the \ParVal\ problem is also in TFNP[NP].

\begin{lemma}
    \label{Result: discounted MDP continuity}
    Consider $i \in \{1, 2\}$ and two Player-$i$ MDPs $\MDP$ and $\widetilde{\MDP}$ with identical state and action sets, a reward function $\reward$, and a discount function $\discfac$. 
    We denote by $\val_\discfac$ and $\widetilde{\val}_\discfac$ the stateful-discounted value of $\MDP$ and $\widetilde{\MDP}$ respectively. 
    Then, for all $\state \in \States$, we have
    \[
        \left| \val_\discfac(\state) - \widetilde{\val}_\discfac(\state) \right| 
            \le \frac{\| \prob - \widetilde{\prob} \|_\infty}{\min_{\state} \discfac(\state)}  \,.
    \]
\end{lemma}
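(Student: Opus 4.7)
The plan is to exploit that stateful-discounted MDPs admit optimal pure stationary strategies (this is the one-player case of Shapley's theorem, already cited as \cite{shapley1953stochastic}), and that the Bellman recurrence for a fixed stationary strategy is an affine contraction whose per-state contraction factor is $(1-\discfac(\state))$. Fix the Player-1 case for concreteness; the Player-2 case is symmetric. Let $\strategyone^\ast$ be an optimal pure stationary strategy in $\MDP$, so $\nu^{\strategyone^\ast}_\MDP = \val_\discfac$, and let $\widetilde{\nu}^{\strategyone^\ast}$ denote the payoff of $\strategyone^\ast$ when executed in $\widetilde{\MDP}$. Since $\widetilde{\val}_\discfac \ge \widetilde{\nu}^{\strategyone^\ast}$ pointwise, one has
\[
    \val_\discfac(\state) - \widetilde{\val}_\discfac(\state) \;\le\; \nu^{\strategyone^\ast}_\MDP(\state) - \widetilde{\nu}^{\strategyone^\ast}(\state)\,.
\]
Running the symmetric argument with an optimal strategy $\widetilde{\strategyone}^\ast$ in $\widetilde{\MDP}$ bounds the opposite sign, so it suffices to bound $\|\nu^{\strategyone} - \widetilde{\nu}^{\strategyone}\|_\infty$ for an arbitrary fixed stationary $\strategyone$ by $\|\prob - \widetilde{\prob}\|_\infty / \min_\state \discfac(\state)$.

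For such a $\strategyone$, both payoff vectors satisfy the (deterministic) Bellman equation
\[
    \nu^\strategyone(\state) = \discfac(\state)\,\reward^\strategyone(\state) + (1-\discfac(\state)) \sum_{\state'} \prob^\strategyone(\state,\state')\, \nu^\strategyone(\state')\,,
\]
and likewise with tildes. Setting $\Delta(\state) \defas \nu^\strategyone(\state) - \widetilde{\nu}^\strategyone(\state)$, subtracting the two equations, and adding and subtracting $(1-\discfac(\state)) \sum_{\state'} \widetilde{\prob}^\strategyone(\state,\state') \nu^\strategyone(\state')$ yields
\[
    \Delta(\state) = (1-\discfac(\state)) \sum_{\state'} \widetilde{\prob}^\strategyone(\state,\state')\, \Delta(\state')
        + (1-\discfac(\state)) \sum_{\state'} \bigl(\prob^\strategyone - \widetilde{\prob}^\strategyone\bigr)(\state,\state')\, \nu^\strategyone(\state')\,.
\]
Since $\widetilde{\prob}^\strategyone(\state,\cdot)$ is a probability distribution and $\nu^\strategyone$ takes values in $[0,1]$, taking sup-norms gives
\[
    \|\Delta\|_\infty \le \bigl(1 - \min_\state \discfac(\state)\bigr)\|\Delta\|_\infty + \bigl(1 - \min_\state \discfac(\state)\bigr)\|\prob - \widetilde{\prob}\|_\infty\,.
\]
Rearranging yields $\|\Delta\|_\infty \le \bigl(1-\min_\state \discfac(\state)\bigr)\,\|\prob-\widetilde{\prob}\|_\infty / \min_\state \discfac(\state)$, which is at most $\|\prob-\widetilde{\prob}\|_\infty / \min_\state \discfac(\state)$, as required.

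\textbf{Main obstacle.} The non-routine step is the second line above: recognising that although the raw Bellman map for a stateful discount is not uniformly contractive (each coordinate contracts by a different factor), the worst-case factor $1-\min_\state \discfac(\state)$ is still strictly less than one, which is what makes the fixed-point argument give a finite bound. The only other point requiring care is the convention for $\|\prob-\widetilde{\prob}\|_\infty$: I will use the max row-sum norm on the transition kernels induced by the stationary strategy, so that $\sum_{\state'} |(\prob^\strategyone - \widetilde{\prob}^\strategyone)(\state,\state')| \le \|\prob - \widetilde{\prob}\|_\infty$, ensuring the bound on $\nu^\strategyone$ contributes a factor $1$ rather than $\statenum$. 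The rest is standard algebra and a verbatim symmetric repetition for Player-2 MDPs with $\inf$ replaced by $\sup$.
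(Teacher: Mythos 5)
Your proof is correct and follows essentially the same route as the paper's: both derive, via the same add-and-subtract step in the Bellman equation, the recursive inequality $\|\val_\discfac - \widetilde{\val}_\discfac\|_\infty \le \left(1-\min_\state\discfac(\state)\right)\left(\|\val_\discfac - \widetilde{\val}_\discfac\|_\infty + \|\prob - \widetilde{\prob}\|_\infty\right)$ with the worst-case contraction factor $1-\min_\state\discfac(\state)$, and then rearrange. The only difference is that you first pass to optimal stationary strategies and compare fixed-strategy payoffs, whereas the paper applies the argument directly to the optimality equation using the non-expansiveness of the max over actions; this is cosmetic and changes nothing substantive.
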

\begin{proof}
    We only prove for Player-1 MDPs. The proof for Player-2 MDPs is symmetric. By the Bellman equation defined in \cite{shapley1953stochastic}, for all $\state \in \States$, we have
    \begin{align*}
        & \left| \val_\discfac(\state) - \widetilde{\val}_\discfac(\state) \right| \\
        &\quad \le \max_{\actionone} \Bigg | 
            \discfac(\state) \reward(\state, \actionone) 
            + (1 - \discfac(\state)) \sum_{\state'} \prob(\state, \actionone)(\state') \val_\discfac(\state') \\
            &\quad \phantom{\le \max_{\actionone} \Bigg |} \qquad 
            - \discfac(\state) \reward(\state, \actionone) 
            - (1 - \discfac(\state))\sum_{\state'} \widetilde{\prob}(\state, \actionone)(\state') \widetilde{\val}_\discfac(\state')  
        \Bigg | \\
        &\quad = (1 - \discfac(\state)) \max_\actionone \Bigg | 
            \sum_{\state'} \prob(\state, \actionone)(\state') (\val_\discfac(\state') 
            - \widetilde{\val}_\discfac(\state')) \\
            &\quad \phantom{= (1 - \discfac(\state)) \max_\actionone \Bigg |} \qquad 
            + \sum_{\state'} (\prob(\state, \actionone)(\state') 
            - \widetilde{\prob}(\state, \actionone)(\state')) \widetilde{\val}_\discfac(\state') 
        \Bigg |\\
        &\quad \le (1 - \discfac(\state)) \left ( \|\val_\discfac - \widetilde{\val}_\discfac\|_\infty + \|\prob - \widetilde{\prob}\|_\infty \, \|\widetilde{\val}_\discfac\|_\infty \right)\\
        &\quad \le (1 - \discfac(\state)) \left( \|\val_\discfac - \widetilde{\val}_\discfac\|_\infty + \|\prob - \widetilde{\prob}\|_\infty \right)\\
        &\quad \le \left( 1 - \min_{\state} \discfac(\state) \right) \left( \|\val_\discfac - \widetilde{\val}_\discfac\|_\infty + \|\prob - \widetilde{\prob}\|_\infty \right) \,,
    \stepcounter{equation}\tag{\theequation}\label{Eq: Recursive inequality for the difference of values}
    \end{align*}
    where in the first inequality we use the Bellman equation, in the first equality we use algebraic manipulation, in the second inequality we use the definition of the norm and Cauchy-Schwarz inequality, in the third inequality we use $\|\widetilde{\val}_\discfac\|_\infty \le \|\reward\|_\infty \le 1$, and in the fourth inequality we use $\min_\state \discfac(\state) \le \discfac(\state)$. 
    Since \Cref{Eq: Recursive inequality for the difference of values} holds for all states $\state$, we have 
    \[
        \|\val_\discfac - \widetilde{\val}_\discfac\|_\infty 
            \le \left( 1 - \min_{\state} \discfac(\state) \right) \left( \|\val_\discfac - \widetilde{\val}_\discfac\|_\infty + \|\prob - \widetilde{\prob}\|_\infty \right) \,,
    \]
    or equivalently
    \[
        \|\val_\discfac - \widetilde{\val}_\discfac\|_\infty 
        \le \left( \frac{1 - \min_{\state} \discfac(\state)}{\min_{\state} \discfac(\state)} \right) \|\prob - \widetilde{\prob}\|_\infty \le \frac{\|\prob - \widetilde{\prob}\|_\infty}{\min_{\state} \discfac(\state)} \,,
    \]
    which yields the result.
    
\end{proof}

\begin{lemma}
\label{Result: Patience upper bound}
    Consider a CSG $G$, a reward function, a discount function $\discfac$, and an additive error $\eps = 2^{-\kappa}$. 
    Fix $\underline{\Lambda} \defas \min_\state \discfac(\state)$. 
    Let $\ell$ be a positive integer such that $\ell \ge 4 \statenum \kappa \bit(\underline{\Lambda})$. 
    Then, there exist $\eps$-optimal stationary strategies $\strategyone$ and $\strategytwo$ such that
    \begin{align*}
        &\forall \state \in \States& 
            &\strategyone(\state) \in \D(\ell), \quad \strategytwo(\state) \in \D(\ell) \,, \\
        &\forall \state \in \States, \actionone \in \Actionone& 
            &\strategyone(\state)(\actionone) \neq 0 
                \implies \strategyone(\state)(\actionone) \ge \frac{\underline{\Lambda} \, \eps}{4} \,, \\
        &\forall \state \in \States, \actiontwo \in \Actiontwo& 
            &\strategytwo(\state)(\actiontwo) \neq 0 
                \implies \strategytwo(\state)(\actiontwo) \ge \frac{\underline{\Lambda} \, \eps}{4} \,.
    \end{align*}
\end{lemma}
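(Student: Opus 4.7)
The plan is to obtain $\widetilde{\strategyone}$ and $\widetilde{\strategytwo}$ by discretizing Shapley's optimal stationary strategies $\strategyone^\ast$ and $\strategytwo^\ast$, which exist by the stateful-discounted determinacy theorem. I describe the construction for Player~1; the Player-2 case is symmetric. The core tool is \Cref{Result: discounted MDP continuity} applied to the Player-2 MDPs $G_{\strategyone^\ast}$ and $G_{\widetilde{\strategyone}}$: since $\strategyone^\ast$ is optimal, $\val_\discfac(G_{\strategyone^\ast})(\state) = \val_\discfac(\state)$, so $\eps$-optimality of $\widetilde{\strategyone}$ reduces to controlling $\|\prob_{\strategyone^\ast} - \prob_{\widetilde{\strategyone}}\|_\infty$.

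The construction proceeds in three steps at each state $\state$. First, I zero out every action $\actionone$ with $\strategyone^\ast(\state)(\actionone)$ below an appropriate threshold $\theta$ tied to $\underline{\discfac}\eps$. Second, I renormalize the surviving mass to obtain a distribution $\strategyone'(\state)$. Third, I round $\strategyone'(\state)$ to a floating-point distribution $\widetilde{\strategyone}(\state) \in \D(\ell)$ via \Cref{Result: Approximation of prob distribution from distribution}, which guarantees $(\ell, 2\actionnum+2)$-closeness. The hypothesis $\ell \ge 4\statenum\kappa\bit(\underline{\discfac})$ is amply large enough for the floating-point relative error to be negligible compared to $\theta$, so that every surviving non-zero entry of $\widetilde{\strategyone}(\state)$ remains above $\underline{\discfac}\eps/4$, delivering the patience bound.

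For $\eps$-optimality, the identity $\prob_\strategyone(\state, \actiontwo)(\state') = \sum_\actionone \strategyone(\state)(\actionone)\,\prob(\state, \actionone, \actiontwo)(\state')$ together with $\prob(\state, \actionone, \actiontwo)(\state') \le 1$ gives $\|\prob_{\strategyone^\ast} - \prob_{\widetilde{\strategyone}}\|_\infty \le \max_\state \|\strategyone^\ast(\state) - \widetilde{\strategyone}(\state)\|_1$. Combining with \Cref{Result: discounted MDP continuity} yields $|\val_\discfac(G_{\strategyone^\ast})(\state) - \val_\discfac(G_{\widetilde{\strategyone}})(\state)| \le \max_\state \|\strategyone^\ast(\state) - \widetilde{\strategyone}(\state)\|_1 / \underline{\discfac}$, so $\widetilde{\strategyone}$ is $\eps$-optimal provided the three-step construction keeps the $\ell_1$ perturbation below $\underline{\discfac}\eps$. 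The main obstacle is this calibration of $\theta$: the patience constraint pushes $\theta$ upward to roughly $\underline{\discfac}\eps$, while the naive $\ell_1$ cost of thresholding plus renormalization scales like $\actionnum\theta$ and, after the $1/\underline{\discfac}$ blow-up from the continuity lemma, is compatible with an $\eps$ loss only when $\theta \lesssim \underline{\discfac}\eps/\actionnum$. Reconciling these likely requires starting from a structurally clean $\strategyone^\ast$ — for instance, a support-minimal optimal strategy of the Shapley matrix-game at $\state$ arising from \Cref{Result: Value of matrix games}, so that the pruned mass is genuinely small rather than worst-case in $\actionnum$ — or convexly mixing $\strategyone^\ast$ with an auxiliary distribution that already satisfies the patience bound, so that the total $\ell_1$ contribution is controlled directly by the mixture weight.
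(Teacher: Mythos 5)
Your construction follows the same route as the paper's proof: discretize Shapley's optimal stationary strategy by zeroing small entries, round the result to a floating-point distribution via \Cref{Result: Approximation of prob distribution from distribution}, and convert the resulting perturbation of the induced MDP into a value perturbation via \Cref{Result: discounted MDP continuity}. However, your proposal stops before the argument is complete. Your final paragraph concedes that with the threshold $\theta$ pushed up to order $\underline{\Lambda}\eps$ (as the patience bound $\underline{\Lambda}\eps/4$ demands), your $\ell_1$ accounting of the thresholding-plus-renormalization cost is of order $\actionnum\theta$, which after the $1/\underline{\Lambda}$ blow-up of the continuity lemma gives a value loss of order $\actionnum\eps$ rather than $\eps$. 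You then offer two unverified escape routes (support-minimal optimal strategies, or convex mixing with a high-patience distribution) without carrying either out. As written, the proposal establishes the patience and representability claims but not the $\eps$-optimality of the constructed strategy, so it is not a proof of the lemma.

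For comparison, the paper does not renormalize and measures the thresholding error in the entrywise sup-norm: zeroing every entry $\le \underline{\Lambda}\eps/2$ gives $\|\strategyone_1 - \strategyone^*\|_\infty \le \underline{\Lambda}\eps/2$, the floating-point rounding contributes a further $\underline{\Lambda}\eps/4$, and \Cref{Result: discounted MDP continuity} is invoked with the resulting sup-norm bound to conclude $\eps$-optimality. You should be aware that the quantity the continuity lemma's own proof actually controls is $\sum_{\state'} \left| \prob_{\strategyone}(\state,\actiontwo)(\state') - \prob_{\strategyone^*}(\state,\actiontwo)(\state') \right|$, which (summing over $\state'$ inside) is bounded by $\sum_{\actionone} \left| \strategyone(\state)(\actionone) - \strategyone^*(\state)(\actionone) \right|$ --- precisely the $\ell_1$ distance you compute --- so the tension you flag points at genuine looseness in the paper's norm bookkeeping rather than at an artifact of your approach. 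Nevertheless, a submission that ends with ``likely requires'' and two alternative sketches has a gap where the main estimate should be: you must either commit to the sup-norm accounting and justify the passage from strategy distance to the transition-function distance that \Cref{Result: discounted MDP continuity} needs, or carry out one of your proposed fixes with explicit constants showing that the patience bound $\underline{\Lambda}\eps/4$ and $\eps$-optimality hold simultaneously.
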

\begin{proof}
    We only prove the existence of $\strategyone$ since the existence of $\strategytwo$ follows by symmetric arguments.
    By \cite{shapley1953stochastic}, there exists the optimal stationary strategy $\strategyone^*$ for Player 1. 
    We define $\strategyone_1$ as 
    \[
        \strategyone_1(\state)(\actionone) \defas 
        \begin{cases}
            0 & \strategyone^*(\state)(\actionone) \le  \frac{\underline{\Lambda}\eps}{2}\\
            \strategyone^*(\state)(\actionone) & \text{otherwise}
        \end{cases}
    \]
    Then, we have $\|\strategyone_1 - \strategyone^*\|_\infty \le \underline{\Lambda}\eps/2$.
    By \Cref{Result: Approximation of prob distribution from distribution}, there exists $\strategyone_2$ such that for all $\state \in \States$, we have $\strategyone_2(\state) \in \D(\States)$, and $\strategyone_1(\state)$ and $\strategyone_2(\state)$ are $(\ell, 2 \statenum + 2)$-close. Therefore, 
    \begin{align*}
        \|\strategyone_1 - \strategyone_2\|_\infty &\le \max_\state \; \rel(\strategyone_1(\state), \strategyone_2(\state))\\
        &\le \frac{1}{(1 - 2^{1 - \ell})^{2\statenum + 2}} - 1
            & (\text{def. $(\ell, 2 \statenum + 2)$-close})\\
        &\le \frac{1}{1 - (2\statenum+2)2^{1 - \ell}} - 1 
            & (\text{Bernoulli inequality})\\
        &\le \frac{(2\statenum+2)2^{1 - \ell}}{1 - (2\statenum+2)2^{1 - \ell}} 
            & (\text{rearrange})\\
        &\le (2\statenum+2)2^{-\ell} 
            & \left (\ell \ge 1000\statenum^2 \right )\\
        &\le 4\statenum2^{-\ell} 
            & (\statenum \ge 1)\\ 
        &\le \frac{\underline{\Lambda}\eps}{4} \,. 
            & (\ell \ge 4 \statenum \kappa \bit(\underline{\Lambda}))
    \end{align*}
    Observe that for all states $\state$ and actions $\actionone$, if $\strategyone_2(\state)(\actionone) \neq 0$, then $\strategyone_2(\state)(\actionone) \ge \underline{\Lambda}\eps/4$. 
    By combining two inequalities $\|\strategyone_1 - \strategyone^*\|_\infty \le \underline{\Lambda}\eps/2$ and $\|\strategyone_1 - \strategyone_2\|_\infty \le \underline{\Lambda}\eps/4$, we get $\|\strategyone_2 - \strategyone^*\|_\infty \le \underline{\Lambda} \eps$. 
    Therefore, by \Cref{Result: discounted MDP continuity}, we have $\strategyone_2$ is an $\eps$-optimal strategy, completing the proof.
\end{proof}

\begin{remark}
\label{Result: Polynomial-size strategies}
    For all CSGs with stateful-discounted objectives, there exist $\eps$-optimal stationary strategies which can be represented in polynomial size with respect to the size of the game and $\bit(\eps)$.
\end{remark}

\begin{lemma}
\label{Result: DiscVal problem is in TFNP[NP]}
    The problem of computing an $\eps$-approximation of the stateful-discounted value for CSGs is in TFNP[NP] for inputs CSGs $G$, reward functions, discount functions~$\discfac$, states~$\state$, additive errors $\eps = 2^{-\kappa}$, and positive integers $\ell$ such that, for all states $\state' \in \States$, we have $\discfac(\state') \in \F(\ell)$.
\end{lemma}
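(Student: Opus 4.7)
The plan is to lift the NP procedure for Player-2 MDP value approximation (\Cref{Result: Approximate discounted value in player-2 MDP}) to CSGs by non-deterministically guessing a near-optimal stationary strategy for Player~1 and reducing the residual game to an MDP query. The ingredient that makes this a valid TFNP[NP] procedure is \Cref{Result: Patience upper bound}, which supplies $(\eps/8)$-optimal stationary strategies representable in floating point at precision $\ell$ polynomial in $\statenum$, $\B$, $\kappa \defas \bit(\eps)$, and $\bit(1/\min_\state \discfac(\state))$; these polynomial-size objects are the intended NP witnesses.

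First I would reduce the function problem to $O(\log(1/\eps))$ threshold decisions of the form ``is $\val_\discfac(\state) \ge \alpha$?'' by binary search over an $(\eps/2)$-grid in $[0,1]$. Each such decision is resolved by the following subroutine: non-deterministically guess a stationary strategy $\strategyone \in \D(\ell)^\statenum$ for Player~1; construct the induced Player-2 MDP $G_\strategyone$ using the floating-point operators of \Cref{Result: FP arithmetic operations,Result: Approximation of prob distribution}; and query the NP oracle of \Cref{Result: Approximate discounted value in player-2 MDP} to test whether $\val_\discfac^{G_\strategyone}(\state)$ meets the threshold $\alpha$ up to additive error $\eps/8$, accepting iff the oracle does. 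When $\val_\discfac(\state) \ge \alpha$, the $(\eps/8)$-optimal strategy of \Cref{Result: Patience upper bound} is itself an accepting witness, which secures totality; conversely, if the oracle accepts some guessed $\strategyone$, then $\val_\discfac(\state) \ge \val_\discfac^{G_\strategyone}(\state) - \eps/8 \ge \alpha - \eps/4$, giving soundness within the tolerated error budget.

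The principal obstacle will be the bookkeeping of approximation errors. The transition function $\prob_\strategyone$ is built from $\prob$ and $\strategyone$ by a cascade of floating-point arithmetic operations, which by \Cref{Result: FP arithmetic operations} and \Cref{Result: Approximation of prob distribution} is only $(\ell, O(\statenum))$-close to the exact induced transition. I would convert this relative error into an additive error on $\val_\discfac^{G_\strategyone}(\state)$ via \Cref{Result: discounted MDP continuity}, obtaining a perturbation of order $\statenum \, 2^{-\ell}/\min_\state \discfac(\state)$, which the chosen precision $\ell$ absorbs into the $\eps/8$ budget. Combining this with the errors from strategy discretization, from the MDP oracle itself, and from the binary-search grid granularity yields the overall additive error $\eps$. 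The complete procedure is deterministic polynomial time with an NP oracle per binary-search step, placing the approximate stateful-discounted value problem in TFNP[NP].
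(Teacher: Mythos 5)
Your proposal has a genuine gap: the certificates are one-sided. At each binary-search step you guess only a Player-1 stationary strategy $\strategyone$ and accept iff the NP oracle of \Cref{Result: Approximate discounted value in player-2 MDP} confirms that the induced MDP $G_\strategyone$ meets the threshold $\alpha$. This certifies a \emph{lower} bound on $\val_\discfac(\state)$, but when the oracle rejects, the branch cannot distinguish ``the value is below $\alpha$'' from ``my guessed $\strategyone$ was bad.'' If such a branch interprets the rejection as a NO answer and moves the binary search downward, it may eventually output a value that is wrong by far more than $\eps$; if instead it aborts, the search can never take a genuine NO step and fails to converge. Either way you violate the TFNP[NP] requirement that \emph{every} halting branch outputs a correct answer (totality being witnessed by at least one branch). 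The underlying reason is that the threshold decision ``$\val_\discfac(\state) \ge \alpha$?'' is an $\exists\strategyone\,\forall\strategytwo$ statement, so a Player-1 strategy alone cannot certify its negation; you need a Player-2 witness for the complementary direction.

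The paper's proof closes exactly this gap: it nondeterministically guesses \emph{both} an $\eps/4$-optimal Player-1 strategy $\strategyone^*$ and an $\eps/4$-optimal Player-2 strategy $\strategytwo^*$ (polynomial-size by \Cref{Result: Patience upper bound}), together with a candidate answer $\alpha = j2^{-(\kappa+2)}$ on a grid, and then uses the two NP oracles of \Cref{Result: Approximate discounted value in player-1 MDP} and \Cref{Result: Approximate discounted value in player-2 MDP} to verify $\alpha - 3\eps/4 \le v_{\strategyone^*} - \eps/4$ and $\alpha + 3\eps/4 \ge v_{\strategytwo^*} + \eps/4$, sandwiching the true value so that every accepting branch is sound. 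Note that once both near-optimal strategies are in hand, a single pair of oracle checks pins the value down to within $\eps$, so the binary search is unnecessary. Your use of \Cref{Result: Patience upper bound} for totality and of \Cref{Result: discounted MDP continuity} together with \Cref{Result: FP arithmetic operations,Result: Approximation of prob distribution} for the floating-point error bookkeeping is in the right spirit and matches the paper; the missing idea is the symmetric Player-2 certificate.
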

\begin{proof}
    We first present the procedure and then prove its soundness and completeness.

    \smallskip\noindent{\em Procedure.} The procedure guesses two $\eps/4$-optimal stationary strategies $\strategyone^*$ and $\strategytwo^*$ and an integer $j \in [0, 2^{\kappa+2}]$. 
    By \Cref{Result: Polynomial-size strategies}, the size of the representation of stationary strategies is polynomial with respect to the size of $G$ and $\bit(\eps)$. 
    By fixing strategy $\strategyone^*$ (resp. $\strategytwo^*)$, we obtain a Player-2 MDP $G_{\strategyone^*}$ (resp. a Player-1 MDP $G_{\strategytwo^*}$). 
    We denote by $v_{\strategyone^*}$ (resp. $v_{\strategytwo^*}$) the value of $G_{\strategyone^*}$ (resp. $G_{\strategytwo^*}$). 
    We define $\alpha \defas j 2^{-(\kappa + 2)} \in [0, 1]$. 
    The procedure outputs $\alpha$ if $\alpha - 3\eps/4 \le v_{\strategyone^*} - \eps/4$ and $\alpha + 3\eps/4 \ge v_{\strategytwo^*} + \eps/4$ which can be verified by two NP oracles implemented by \Cref{Result: Approximate discounted value in player-1 MDP,Result: Approximate discounted value in player-2 MDP}.
    
    \smallskip\noindent{\em Soundness.} 
    We assume that $\alpha$ is not an $\eps$-approximation of the stateful-discounted value of~$G$, i.e., that $\alpha \notin [\val_\discfac(\state) - \eps, \val_\discfac(\state) + \eps]$, and prove that the procedure does not output $\alpha$ in this case. 
    Without loss of generality, we assume $\alpha \ge \val_\discfac(\state) + \eps$. For all stationary strategies $\strategyone$, we have that $v_\strategyone \le \val_\discfac(\state)$. Therefore, we have
    \[
        \alpha - 3\eps/4 >\val_\discfac(\state) + \eps/4 \ge v_\strategyone + \eps/4\,.
    \]
    Hence, the NP procedure defined in \Cref{Result: Approximate discounted value in player-2 MDP} successfully decides that the inequality $\alpha - 3\eps/4 \le v_\strategyone - \eps/4$ is not true, and therefore the procedure does not output $\alpha$, which yields the soundness of the procedure.
    
    \smallskip\noindent{\em Completeness.} By \Cref{Result: Polynomial-size strategies}, there exist $\eps/4$-optimal stationary strategies $\strategyone^*$ and $\strategytwo^*$ which are polynomial-size representable. 
    The procedure non-deterministically guesses $\strategyone^*$ and $\strategytwo^*$. 
    Since both strategies are $\eps/4$-optimal, we have $\val_\discfac(\state) \le v_{\strategyone^*} + \eps/4$ and $v_{\strategytwo^*} - \eps/4 \le \val_\discfac(\state)$. By the choice of $j$, there exists $\alpha$ such that $\alpha \in [\val_\discfac(\state) - \eps/4, \val_\discfac(\state) + \eps/4]$. Therefore,  
    \begin{gather*}
        \alpha - 3\eps/4 \le \val_\discfac(\state) - \eps/2 \le v_{\strategyone^*} - \eps/4 \,, \text{ and} \\
        \alpha + 3\eps/4 \ge \val_\discfac(\state) + \eps/2 \ge v_{\strategytwo^*} + \eps/4\,,
    \end{gather*}
    and the procedure outputs $\alpha$.
    This yields the completeness of the procedure and completes the proof.
\end{proof}

\begin{proof}[Proof of {\Cref{Result: Computational result for limit value}-\Cref{Item: Complexity class for limit value}}]
    It is a direct implication of \Cref{Result: DiscVal problem is in TFNP[NP]} and \Cref{Result: Approximation of undiscounted value by discounted value}.
\end{proof}

\begin{proof}[Proof of {\Cref{Result: Computational result for parity value}-\Cref{Item: Complexity class for parity value}}]
    By \cite{gimbert2005discounting,de2003discounting}, there exists a linear-size reduction from the \ParVal\ problem to the \UndiscVal\ problem. 
    Therefore, the result follows from \Cref{Result: Computational result for limit value}-\Cref{Item: Complexity class for limit value}.
\end{proof}

\smallskip\noindent{\bf Concluding remarks.}
In this work, we present improved complexity upper bounds and algorithms for the value approximation problem for concurrent stochastic games
with two classic objectives. 
There are several interesting directions for future work. 
First, whether the complexity can be further improved from
TFNP[NP] to TFNP is a major open question, even for reachability objectives.
Second, whether for parity objectives, the dependency on $d$ can be improved from linear to logarithmic, retaining the logarithmic dependence on $m$, is another interesting open question.
Finally, the study of priority mean-payoff objectives for concurrent stochastic games and their connection to stateful-discounted objectives is another interesting direction for future work.

\newpage
\bibliographystyle{plain}
\bibliography{refs.bib}

% \newpage
% \appendix

% \input{./discounted_char.tex}
% \input{./undiscounted_char.tex}

\end{document}